\def\Im {\mathop{\rm Im}\nolimits}
\def\arg {\mathop{\rm arg}\nolimits}
\def\Re {\mathop{\rm Re}\nolimits}
\def\Ai {{\rm Ai}}
\def\tr {{\rm tr}}
\def\PII {{\rm PII}}
\def\PV {{\rm PV}}
\newtheorem{pro}{{Proposition}}
\newtheorem{rem}{{Remark}}
\newtheorem{lem}{{Lemma}}
\newtheorem{thm}{{Theorem}}
\newtheorem{rhp}{RH problem} 
\numberwithin{equation}{section}
\begin{document}

\title{{Asymptotics of the finite-temperature sine kernel determinant
   }}

\author{Shuai-Xia Xu}
\affil[]{Institut Franco-Chinois de l'Energie Nucl\'{e}aire, Sun Yat-sen University, Guangzhou, 510275, China. E-mail: xushx3@mail.sysu.edu.cn}

\date{}

\maketitle

\noindent \hrule width 6.27in \vskip .3cm

\noindent {\bf{Abstract }} 
In the present paper, we study the asymptotics of the Fredholm determinant $D(x,s)$ of the finite-temperature deformation of the sine kernel, which represents the probability that there are no particles in the interval $(-x/\pi,x/\pi)$ in the bulk scaling limit of the  finite-temperature fermion system. The variable $s$ in $D(x,s)$ is related to  the temperature.  This determinant also corresponds to  the finite-temperature correlation function of  one-dimensional Bose gas. We derive the asymptotics of $D(x,s)$ in several different regimes in the $(x,s)$-plane. A third-order phase transition is observed in the asymptotic expansions as both $x$ and $s$ tend to positive infinity at certain related speed. The phase transition is then shown to be described by an integral involving the  Hastings-McLeod solution of the second Painlev\'e equation.

\vskip .5cm
\noindent {\it{2010 Mathematics Subject Classification:}} 33E17; 34M55; 41A60

\vspace{.2in}

\noindent {\it{Keywords and phrases:}} 
Asymptotic analysis; Fredholm determinants; sine kernel; Painlev\'{e}  equations; Riemann-Hilbert problems; Deift-Zhou method. 

\vskip .3cm

\noindent \hrule width 6.27in\vskip 1.3cm

\tableofcontents

\section{Introduction }
In this paper, we consider the asymptotics of the Fredholm determinant of the finite-temperature deformation of the classical sine kernel.  Studies of this determinant have  attracted great interest due to their  applications in the finite-temperature fermion system, the Moshe-Neuberger-Shapiro (MNS) random matrix ensemble and  non-intersecting Brownian motions  \cite{DDMS, J, LW, MNS}. It is remarkable that the determinant has also been used to  characterize the finite-temperature correlation function of  one-dimensional Bose gas  in the  pioneering work of Its,  Izergin, Korepin and Slavnov \cite{IIKS}.

To see more explicitly how the determinant arises, we consider the system of $N$  noninteracting spinless  fermions in one dimension, confined in a harmonic  trap.  At zero temperature, it is known that the positions of the $N$ fermions behave statistically like the eigenvalues of  random matrices from the  Gaussian unitary ensemble (GUE); see the review article \cite{DDMS19} and the references therein. In the large $N$ limit, the distribution of the largest eigenvalue of a GUE matrix,  and consequently the rescaled position of the rightmost  fermion, is described by the Tracy-Widom distribution. This distribution  can be expressed in terms of the  Fredholm determinant of the Airy kernel
\begin{equation}\label{eq:KAi}
K_{\Ai}(\lambda,\mu)=\int_{0}^{+\infty}\Ai(\lambda+r)\Ai(\mu+r)dr,
\end{equation}
or,  equivalently,  through the  integral of the  Hastings-McLeod solution of the second Painlev\'e equation \cite{TW}.   
In the bulk  of the spectrum, the statistic of the eigenvalues, and therefore the positions of the fermions, are described by the determinant of the sine kernel
\begin{equation}\label{eq:Ksine}
K_{\sin}(\lambda,\mu)= \frac{\sin\left(\pi(\lambda-\mu)\right)}{\pi(\lambda-\mu)},
\end{equation}
which can be expressed in terms of the $\sigma$-form of the fifth Painlev\'e equation \cite{JMMS}.

 It is discovered recently that,  at finite  temperature, the positions of the fermions can be described by the  MNS random matrix ensemble \cite{DDMS,  LW, MNS}. By analyzing the grand canonical version of the MNS ensemble,  the scaling limits of the correlation kernel at finite temperature are obtained both in the bulk and at the edge of the spectrum in the large $N$ limit.  These limiting kernels depend on the temperature and  generalize the classical sine kernel and Airy kernel.
For instance,  the largest eigenvalue of a MNS matrix, and consequently the position of the rightmost fermion at finite  temperature, 
is given by the determinant of the deformed  Airy kernel
\begin{equation}\label{eq:KAif}
K_{\Ai}^{T}(\lambda,\mu;c)=\int_{-\infty}^{+\infty}\frac{\Ai(\lambda+r)\Ai(\mu+r)}{1+e^{-cr}}dr.
\end{equation}
Here, the parameter $c$ depends explicitly on the temperature.  
In the zero-temperature limit  $c\to+\infty$,  the finite-temperature  Airy kernel is reduced to the classical Airy kernel. 
 It is  discovered by Johansson in \cite{J} that the finite-temperature  Airy kernel determinant  interpolates  between the classical Gumbel distribution and the Tracy-Widom distribution.
Remarkably,  the finite-temperature  Airy kernel determinant characterizes  also the solution of the Kardar-Parisi-Zhang (KPZ) equation with the narrow wedge initial condition \cite{ACQ, CC, KPZ}.
Recently,  the asymptotics of the finite-temperature  deformed Airy kernel determinant have been obtained in several different regimes,  providing useful information on the lower tail asymptotics for the narrow wedge solution of the KPZ equation and the initial data of the determinant solutions to the KdV equation  \cite{CC, CCR, CCR2}.

Similarly, the finite-temperature generalization of the sine kernel is obtained in the bulk  regime \cite{ DDMS,J,  LW}
 \begin{equation}\label{eq:KInt}
K(\lambda,\mu; s)=\int_{0}^{+\infty} \cos\left( \pi (\lambda-\mu)\tau\right)
\sigma(\tau;s)d\tau.
\end{equation}
Here $\sigma(\tau;s)$  denotes the  Fermi distribution
\begin{equation}\label{eq:sigma}
\sigma(\tau;s)=\frac{1}{1+\exp( \tau^2-s )}, \quad \tau\in\mathbb{R}, \quad s\in \mathbb{R}.
\end{equation}
Up to a scaling of variables, the kernel \eqref{eq:KInt} is equivalent to the  finite-temperature sine kernel $K_y^{bulk}(v)$ from \cite[Eq. (27)]{ DDMS}:
$$K(\lambda,\mu; s)=\pi\sqrt{\frac{y}{2} }K_y^{bulk}\left(\pi\sqrt{\frac{y}{2}}(\lambda-\mu)\right).$$
Here, the variable $s$ in \eqref{eq:sigma}  is related to  the inverse temperature parameter $y>0$ in   \cite[Eq. (27)]{ DDMS} through the relation 
\begin{equation}\label{}
e^{s}=e^{y}-1.\end{equation}
As $s\to+\infty$, we  correspondingly  have the low temperature limit $y\to+\infty$; see  \cite[Eq. (27)]{ DDMS} and the discussion following the equation.  Consequently, the probability that no fermions at finite temperature, properly scaled,  lie in the interval $(-\frac{x}{\pi},\frac{x}{\pi})$ is given by the Fredholm determinant
\begin{equation}\label{eq:D0}
D(x, s)=\det(I-K_{\frac{x}{\pi}}),\end{equation}
where $K_{\frac{x}{\pi}}$ is  the integral  operator acting on $L^2(-x/\pi,x/\pi)$ with the kernel \eqref{eq:KInt}.
In the low temperature limit   $s\to+\infty$, we have
\begin{equation}\label{eq:sigmalimit}
\sigma(\sqrt{s}\tau;s)\to \chi_{(-1,1)}(\tau), \quad \tau\in\mathbb{R},
\end{equation}
where $ \chi_{(a, b)}(\tau)$ is the characteristic function on the interval $(a,b)$.
Thus,  the kernel \eqref{eq:KInt} approaches   the classical sine kernel as  $s\to+\infty$  
 \begin{equation}\label{eq:mPIIKernelLim}
\frac{1}{\sqrt{s}}K(\frac{\lambda}{\sqrt{s}},\frac{\mu}{\sqrt{s}}; s)\to \frac{\sin \pi(\lambda-\mu)}{\pi(\lambda-\mu)}.\end{equation}
On the other hand,   the kernel \eqref{eq:KInt} degenerates to the kernel of a uniform Poisson process as $s\to -\infty$  
 \begin{equation}\label{eq:KernelLim0}
e^{-s}K(e^{-s}\lambda, e^{-s}\mu; s)\to \frac{\sqrt{\pi}}{2}\delta(\lambda-\mu),\end{equation}
where $\delta$ is the Dirac delta function. 
Therefore, the process defined by the finite-temperature sine kernel \eqref{eq:KInt}
interpolates between the sine process  and the Poisson process, as observed in \cite{J}. 

 To see how the determinant \eqref{eq:D0} arises in the studies of   one-dimensional Bose gas  \cite{IIKS},  we rewrite \eqref{eq:KInt} in the following form:
  \begin{equation}\label{eq:KIntexp}
K(\lambda,\mu;s)=\frac{1}{2}\int_{\mathbb{R}} \exp\left(-i \pi (\lambda-\mu)\tau\right)
\sigma(\tau;s)d\tau,
\end{equation}
using the fact that $\sigma(\tau;s)$ defined in \eqref{eq:sigma} is an even function.
  Let $K_{x/\pi}$ be the integral operator acting on $L^2(\mathbb{R})$ with the kernel $K(\lambda,\mu;s)$ restricted to the interval 
  $(-x/\pi,x/\pi)$.
We have
\begin{equation}\label{eq:KxDecomp}
K_{x/\pi}=\chi_{(-x/\pi,x/\pi)}\mathcal{F}\sigma  \mathcal{F}^{*} \chi_{(-x/\pi,x/\pi)},
\end{equation}
where $ \mathcal{F} $ and $ \mathcal{F}^{*} $ denote the Fourier  transform and its adjoint 
\begin{equation}\label{eq:Fourier}
 \mathcal{F}(g)(\tau)=\frac{1}{\sqrt{2}}\int_{\mathbb{R}} e^{-i\pi\mu \tau} g(\mu)d\mu, \quad \mathcal{F}^{*}(g)(\tau)=\frac{1}{\sqrt{2}}\int_{\mathbb{R}} e^{i\pi\mu \tau} g(\mu)d\mu.
\end{equation}
Here,  $\chi_{(-x/\pi,x/\pi)}$ and $\sigma$  denote  the multiplication operators with the functions $\chi_{(-x/\pi,x/\pi)}$ and $\sigma$, respectively.
As shown in  \cite[Proposition 3.2]{CT}, both  operators $\chi_{(-x/\pi,x/\pi)}\mathcal{F}\sqrt{\sigma}$ and $\sqrt{\sigma}  \mathcal{F}^{*} \chi_{(-x/\pi,x/\pi)}$ 
are Hilbert-Schmidt operators. From the property $\det(I-T_1T_2)=\det(I-T_2T_1)$ for two Hilbert-Schmidt operators $T_1$ and $T_2$, it follows that
\begin{align}
\det(I- \chi_{(-x/\pi,x/\pi)}\mathcal{F}\sigma  \mathcal{F}^{*} \chi_{(-x/\pi,x/\pi)})
&=\det(I- \sqrt{\sigma} \mathcal{F}^{*}\chi_{(-x/\pi,x/\pi)}\mathcal{F}\sqrt{\sigma}  ).\label{eq:cyclic}
\end{align}
This leads to
 \begin{equation}\label{eq:D}
D(x,s)=\det(I-  K_{\sigma}).
\end{equation}
Here $D(x,s)$ is given in \eqref{eq:D0}.
And $ K_{\sigma}$ is the integral  operator acting on $L^2(\mathbb{R})$ with the kernel 
 \begin{equation}\label{eq:K0}
K_{\sigma}(\lambda,\mu)=\sqrt{\sigma(\lambda;s)}~\frac{\sin\left(x(\lambda-\mu)\right)}{\pi(\lambda-\mu)}\sqrt{\sigma(\mu;s)},
\end{equation}
with  $\sigma(\lambda;s)$ defined in \eqref{eq:sigma}.
Notably, the determinant of \eqref{eq:K0} is  precisely the one investigated  in the pioneering  work  \cite{IIKS} of  Its, Izergin, Korepin and Slavnov on one-dimensional  Bose gas and the theory of  integral Fredholm operators. 
It was shown in \cite{IIKS} that the second derivative of the logarithm of the determinant, 
 \begin{equation}\label{thm:AsyD}
b(x,s)^2=-\partial_x^2\log D(x,s), \end{equation}
satisfies the PDE
\begin{equation}\label{eq:PDEb}
\partial_x\left(\frac{\partial_s\partial_xb}{2b}\right)-\partial_s(b^2)+1=0,
\end{equation}
and that the logarithm of the  determinant $q(x,s)=\log D(x,s)$ satisfies another PDE
\begin{equation}\label{eq:PDEV}
(\partial_s\partial_x^2 q)^2+4(\partial_x^2 q)(2x\partial_s\partial_xq+(\partial_s\partial_x q)^2-2\partial_sq)=0.
\end{equation}
Recently,   Claeys and  Tarricone in  \cite{CT} rederived the PDEs  and showed that the PDEs hold true for a much broader class of
weight functions than the Fermi distribution \eqref{eq:sigma}. Furthermore, it was shown  in  \cite{CT} that the determinant also satisfies a system of integro-differential equations generalizing the fifth Painlev\'e equation.
From the fact that the operator $K_{\sigma}$ has small trace for small $x$,  the initial data of the PDE \eqref{eq:PDEV} was  also derived in \cite{CT} as $x\to0$
\begin{equation}\label{thm:Dsmallx}
 D(x,s)=1-\tr K_{\sigma} +O(x^2)=1-x\int_{\mathbb{R}}\sigma(\lambda;s)d\lambda+O(x^2), 
\end{equation}
with any  fixed $s\in\mathbb{R}$.

Since the  celebrated work of Its, Izergin, Korepin and Slavnov \cite{IIKS}  on the theory of  integrable Fredholm operators, the studies of Fredholm determinants of integrable operators have attracted great interest. In \cite{DIZ},  Deift, Its and Zhou derived the  asymptotics of the  Fredholm of the sine kernel on the union of disjoint intervals as the size of the intervals tends to infinity,  up to an undetermined constant term, with applications in random matrix theory and integrable statistical mechanics. For the one-interval  case, the constant was conjectured to be expressible  
in terms of the derivative of 
 the Riemann zeta-function $\zeta'(-1)$  in  earlier work of Dyson \cite{D} and Widom \cite{W}, 
and was later derived rigorously  by Deift, Its, Krasovsky and Zhou \cite{DIKZ}.  In the two-interval case, this constant  was obtained very recently by Fahs and Krasovsky in \cite{FK}. The asymptotics of the Fredholm determinant of the sine kernel with discontinuities on  consecutive intervals have been derived up to and including the constant  by Charlier in \cite{C21}. The asymptotics of the Fredholm determinants of  the Airy kernel, Bessel kernel, confluent hypergeometric kernel and Pearcey kernel have been explored in several works \cite{BCL, CC20,CM, DXZ21,DXZ22,DZ22, DKV, XD, XZZ,XZ}. Quite recently,  significant progress has been made in the studies of the asymptotics of the finite-temperature deformed Airy kernel determinant in \cite{CC, CCR, CCR2}.

In the present paper, we will study the asymptotics  for the Fredholm determinant \eqref{eq:D} of  the finite-temperature deformed sine kernel \eqref{eq:K0}, the logarithm of which solves the PDE \eqref{eq:PDEV}. This determinant represents the probability that there are no particles in the interval $(-x/\pi,x/\pi)$ in the bulk scaling limit of the  finite-temperature fermion system, as well as the finite-temperature correlation function of  one-dimensional Bose gas. We will derive the asymptotics of the Fredholm determinant \eqref{eq:D} in several different regimes in the $(x,s)$-plane. A third-order phase transition is observed  in the asymptotic expansions for both $x$ and $s$ tending to positive infinity at certain related speed. This phase transition is   shown to be descried by the  Hastings-McLeod solution of the second Painlev\'e equation.  
The main results are stated in the following section.

\section{Statement of results}\label{sec: results}
We derive the asymptotics of the Fredholm determinant $D(x,s)$ defined in  \eqref{eq:D}  when $x$ or $s$ is large \eqref{eq:D} in several different regimes in the $(x,s)$-plane as shown in Theorems \ref{thm: large gap asy x}-\ref{thm:PVasy} below.  The regimes  are illustrated in Fig. \ref{fig:Phase_diagram}.  
\begin{figure}[htb]
\begin{center}
\tikzset{every picture/.style={line width=0.75pt}} 

\begin{tikzpicture}[x=0.75pt,y=0.75pt,yscale=-1,xscale=1]

\draw    (203.43,202.69) -- (444.02,201.76) ;
\draw [shift={(447.02,201.75)}, rotate = 179.78] [fill={rgb, 255:red, 0; green, 0; blue, 0 }  ][line width=0.08]  [draw opacity=0] (6.25,-3) -- (0,0) -- (6.25,3) -- cycle    ;
\draw    (203.43,343.7) -- (204.3,44.66) ;
\draw [shift={(204.31,41.66)}, rotate = 90.17] [fill={rgb, 255:red, 0; green, 0; blue, 0 }  ][line width=0.08]  [draw opacity=0] (6.25,-3) -- (0,0) -- (6.25,3) -- cycle    ;
\draw    (203.43,202.69) .. controls (298.72,171.77) and (348.19,135.35) .. (387,63.7) ;
\draw  [dash pattern={on 4.5pt off 4.5pt}]  (224.24,62.92) .. controls (227.44,88.9) and (221.04,123.53) .. (256.25,181.67) ;
\draw (129,125)  --  (214,70);
\draw [shift={(214,70)}, rotate = 145] [fill={rgb, 255:red, 0; green, 0; blue, 0 }  ][line width=0.08]  [draw opacity=0] (6.25,-3) -- (0,0) -- (6.25,3) -- cycle    ;
\draw (440.4,209.13) node [anchor=north west][inner sep=0.75pt]   [align=left] {x};
\draw (185.93,36.49) node [anchor=north west][inner sep=0.75pt]   [align=left] {s};
\draw (173.13,197.29) node [anchor=north west][inner sep=0.75pt]   [align=left] {$0$};
\draw (366,34) node [anchor=north west][inner sep=0.75pt]   [align=left] {$x=\frac{2}{\pi}\sqrt{s}$};
\draw (247,73) node [anchor=north west][inner sep=0.75pt]   [align=left] {One-gap regime};
\draw (368,131) node [anchor=north west][inner sep=0.75pt]   [align=left] {No-gap regime};
\draw (290,282) node [anchor=north west][inner sep=0.75pt]   [align=left] {No-gap regime};
\draw (79,125) node [anchor=north west][inner sep=0.75pt]   [align=left] {Painlev\'e V regime};
\draw (212,40) node [anchor=north west][inner sep=0.75pt]   [align=left] {$x\sqrt{s}= c$};
\end{tikzpicture}
\end{center}
\caption{Phase diagram showing different types of asymptotics of $D(x,s)$ when $x$ or $s$ is large in the $(x,s)$-plane.   \label{fig:Phase_diagram}
 Near the solid critical curve $x=\frac{2}{\pi}\sqrt{s}$,  the asymptotic behavior exhibits a third-order phase transition; see Remark \ref{rem:PhaseTran}. This transition is  described by  the  Hastings-McLeod solution of the  second Painlev\'e equation similar to the closing of a gap between two soft edges in unitary matrix models \cite{BDJ,CK,CKV}. The asymptotics in the one-gap regime, no-gap regime and the transition regime near the critical curve are given in Theorems \ref{thm: large gap asy x}-\ref{thm:doubleScaling}, respectively. The regime lies on the left of the dashed curve $x\sqrt{s}=c$ corresponds to the  Painlev\'e V regime, where the asymptotic expansion is described by the Painlev\'e V transcendents as $s\to+\infty$ and $x\to 0$ such that $x\sqrt{s}$ remains bounded; see Theorem \ref{thm:PVasy}.
  }
\end{figure}
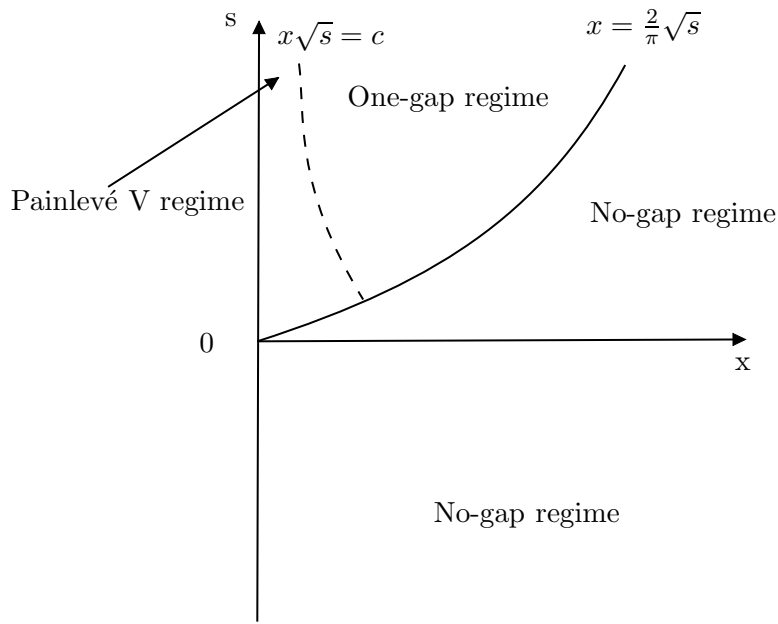

In the no-gap regime lies below the  critical curve $x=\frac{2}{\pi}\sqrt{s}$ as depicted in Fig. \ref{fig:Phase_diagram}, we derive the following  asymptotic expansion of  $D(x,s)$.

\begin{thm}\label{thm: large gap asy x}(Large $x$ asymptotics of  $D(x,s)$ in no-gap regime) Let $D(x,s)$ be defined in \eqref{eq:D}, 
we have the   asymptotic expansion as  $x\to+\infty$
\begin{equation}\label{thm:AsyD}
\log D(x,s)=-\frac{2 x}{\pi}\int_{\mathbb{R}}\sigma(\lambda;s)\lambda^2d\lambda+c_0(s)+O(e^{-cx^2}),\end{equation}
where $c$ is some positive constant. The term $c_0(s)$ is given by
\begin{equation}\label{eq:c0}
c_0(s)=\frac{1}{2\pi^2}\int_{-\infty}^s\left(\int_{\mathbb{R}}\sigma(\lambda;\tau)d\lambda\right)^2d\tau.
\end{equation}
with $\sigma(\lambda;s)$   defined in \eqref{eq:sigma}. The error term is uniform for  
$s\in(-\infty, M_0x^2 ]$  for any constant $0<M_0< \frac{\pi^2}{4}$. 
\end{thm}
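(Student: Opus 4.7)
The plan is to analyze the Riemann--Hilbert problem (RHP) attached by the IIKS formalism to $K_\sigma$, and then combine the outcome with differential identities in $x$ and $s$ to recover both the leading term and the constant $c_0(s)$.

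First, I would use the identity
\[
\frac{\sin(x(\lambda-\mu))}{\pi(\lambda-\mu)}=\frac{e^{ix\lambda}e^{-ix\mu}-e^{-ix\lambda}e^{ix\mu}}{2\pi i(\lambda-\mu)}
\]
to cast $K_\sigma$ in IIKS form and set up the associated $2\times 2$ RHP for $Y(\lambda)$ on $\mathbb{R}$: the jump matrix is built from $\sqrt{\sigma(\lambda;s)}$ and the phases $e^{\pm ix\lambda}$, with normalization $Y\to I$ at infinity. Next, I would perform a Deift--Zhou steepest descent: factor the jump into upper and lower triangular pieces carrying the phases $e^{\pm 2ix\lambda}$, and deform them into thin strips $0<|\Im\lambda|<\delta$. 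The Fermi function $\sigma(\lambda;s)$ extends meromorphically to $\mathbb{C}$ with poles on the set $\{\lambda^2=s+i\pi(2k+1)\}_{k\in\mathbb{Z}}$, and the strict bound $s\le M_0 x^2$ with $M_0<\pi^2/4$ permits $\delta$ to be chosen below the first pole line so that, on the deformed contour, the product of $|\sigma|$ and the oscillatory factor is bounded by $Ce^{-cx^2}$ uniformly in $s$. The residual problem is a small-norm RHP, with an explicit global parametrix $Y_\infty$ built from Fourier integrals of $\sigma$.

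Second, I would use the $x$-differential identity
\[
\partial_x\log D(x,s)=-\tr\bigl((I-K_\sigma)^{-1}\partial_x K_\sigma\bigr)
\]
together with the IIKS expression for the resolvent kernel in terms of $Y$, which reduces the right side to a finite combination of the coefficients in the expansion $Y(\lambda)=I+Y_1(x,s)/\lambda+O(\lambda^{-2})$. Substituting the steepest-descent approximation and exploiting a Parseval-type identity specific to the Fermi weight yields
\[
\partial_x\log D(x,s)=-\frac{2}{\pi}\int_{\mathbb{R}}\sigma(\lambda;s)\lambda^2\,d\lambda+O(e^{-cx^2}),
\]
so integration in $x$ pins down the leading term up to an additive function of $s$. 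To identify that function as the stated $c_0(s)$, I would invoke the companion $s$-identity together with the boundary condition $\log D(x,s)\to 0$ as $s\to-\infty$ (which follows from the trace-norm bound $\|K_\sigma\|_1\le (x/\pi)\int\sigma(\cdot;s)\,d\lambda\to 0$ and the standard estimate $|\log\det(I-K)|\le C\|K\|_1$ valid for small trace norms, in the spirit of \eqref{thm:Dsmallx}). Asymptotically evaluating $\partial_s\log D$ through the RHP solution and subtracting the $s$-derivative of the $x$-leading term should produce exactly $\partial_s c_0(s)=(1/2\pi^2)\bigl(\int\sigma(\lambda;s)\,d\lambda\bigr)^2$, which integrates to \eqref{eq:c0}.

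The principal obstacle is the uniformity of the error $O(e^{-cx^2})$ throughout the broad regime $s\in(-\infty,M_0x^2]$: the lattice of poles of $\sigma(\cdot;s)$ varies drastically with $s$ (receding to infinity as $s\to-\infty$ but pinching onto the real axis as $s\to(\pi^2/4)x^2$), so the deformation height $\delta$, the global parametrix, and the resulting small-norm bounds must all be calibrated in terms of the ratio $s/x^2$ and tracked continuously up to the boundary of the admissible region. A secondary subtlety is that the naive leading term coming from a Neumann-type expansion, $-\tr K_\sigma=-(x/\pi)\int\sigma\,d\lambda$, is not the correct one; rather, summing all non-negligible resolvent contributions and simplifying via integration by parts using $\sigma'(\lambda;s)=-2\lambda\sigma(1-\sigma)$ produces the energy-type integral $-(2x/\pi)\int\sigma(\lambda;s)\lambda^2\,d\lambda$.
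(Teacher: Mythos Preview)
Your overall architecture---IIKS RHP for $Y$, steepest-descent lens opening, diagonal global parametrix, then differential identities integrated from the boundary $s\to-\infty$---matches the paper's proof. There is one genuine technical error and one structural difference worth flagging.

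\textbf{The pole mechanism is wrong.} After the standard triangular factorization of the jump, the off-diagonal entries carried into the strips are not $\sigma(\lambda;s)e^{\pm 2ix\lambda}$ but
\[
\frac{\sigma(\lambda;s)}{1-\sigma(\lambda;s)}\,e^{\pm 2ix\lambda}=e^{-\lambda^2+s}\,e^{\pm 2ix\lambda},
\]
which is \emph{entire}; there are no poles obstructing the deformation. The threshold $M_0<\pi^2/4$ does not come from the pole lattice of $\sigma$. It arises instead from a sign analysis of the effective phase. Writing $b=\sqrt{s}/x$ and combining the Gaussian with the global parametrix $d(\lambda)^{\sigma_3}$ (built from the Cauchy transform of $\log(1-\sigma)$), one is led to the function
\[
\varphi(\lambda)=-\lambda^2+2i\lambda+b^2+\frac{1}{\pi i}\Bigl(2b\lambda+(\lambda^2-b^2)\log\tfrac{\lambda-b}{\lambda+b}\Bigr),
\]
and the key inequality $\frac{d}{d\lambda}\Im\varphi_+(\lambda)>0$ on $(-b,b)$ requires precisely $b<\pi/2$. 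This is why the paper handles the range in two pieces: for $s\le c_1x^2$ with $c_1<1$ the contours $\Im\lambda=\pm 1$ and the crude bound $e^{-(x^2-s)}$ suffice, while for $c_1x^2\le s\le c_2x^2$ with $c_2<\pi^2/4$ one must take $\Im\lambda=\pm\delta$ with $\delta$ small and invoke the $\varphi$-analysis. Your pole-pinching picture would not produce the correct threshold and would leave you unable to justify uniformity near the upper edge of the regime.

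\textbf{The integration route is a detour.} You propose to extract the $x$-leading term from the $x$-identity and then call the $s$-identity only to fix $c_0(s)$. The paper works entirely with the $s$-identity $\partial_s\log D=-\int_{\mathbb R}\mathbf h^tY_+^{-1}(Y_+\mathbf f)'\,\partial_s\log\sigma\,d\lambda$: substituting $Y\approx d^{\sigma_3}$ and using $\partial_s\sigma/(1-\sigma)=\sigma$, $\sigma'=-2\lambda\sigma(1-\sigma)$ reduces the integrand to three explicit pieces, one of which vanishes by parity and the other two give $-\frac{x}{\pi}\int\sigma\,d\lambda$ and $\frac{1}{2\pi^2}(\int\sigma\,d\lambda)^2$ directly. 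One $s$-integration from $-\infty$ (where $\log D\to 0$) then yields both the leading term, via $\int\log(1-\sigma)\,d\lambda=-2\int\lambda^2\sigma\,d\lambda$, and $c_0(s)$ simultaneously. Your route works too, but the paper's is shorter; also, your remark that the ``naive'' term $-\tr K_\sigma$ must be upgraded by resumming resolvent contributions overstates the issue---the conversion from $\int\sigma$ to $\int\lambda^2\sigma$ is a single integration by parts, not a resummation.
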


\begin{rem}
The formula \eqref{thm:AsyD} is consistent with the one derived in \cite[Eq.(9.9)]{IIKS}  as $x\to+\infty$  for fixed $s$. 
Specifically, for fixed $s$, the formula  \eqref{thm:AsyD} describes the large $x$ asymptotics of  the  probability of there being no free fermions in $(-\frac{x}{\pi}, \frac{x}{\pi})$ at finite temperature. 
At zero temperature, the probability of finding no free fermions in $(-\frac{x}{\pi}, \frac{x}{\pi})$ is characterized  by the classical sine kernel, which has different large gap asymptotics as $x\to+\infty$:
\begin{equation}\label{eq:sinDet}\log \det(I-K_{\sin,x})= -\frac{x^2}{2}-\frac{\log x}{4}+\frac{\log 2}{12}+3\zeta'(-1)+O(1/x),\end{equation}
where $K_{\sin,x}$ denotes the integral operator acting on on $L^2(-x/\pi,x/\pi)$  with the sine kernel  given in \eqref{eq:Ksine}
and $\zeta$ is the Riemann zeta-function; see \cite{DIKZ,D}. 
\end{rem}


In the one-gap regime lies above the  critical curve $x=\frac{2}{\pi}\sqrt{s}$ as shown in Fig. \ref{fig:Phase_diagram}, we have the following  asymptotic expansion of  $D(x,s)$.

\begin{thm}\label{thm: DAsylargs} (Large $x$ and $s$ asymptotics of  $D(x,s)$ in one-gap regime)
As $s\to+\infty$ and $x\to+\infty$, we have  the asymptotic expansion 
\begin{equation}\label{thm:Dasylargs}
\log D(x,s)= -\frac{sx^2}{2}\left(1-\frac{\pi^2}{24}\frac{x^2}{s}\right)+O(s),\end{equation}
where the error term is uniform for $\frac{x}{\sqrt{s}} \in [M_2, \frac{2 }{\pi }-M_1s^{-\frac{2}{3}}]$ with a  sufficiently large  constant $M_1>0$ and any constant $0<M_2<\frac{2 }{\pi }$.
\end{thm}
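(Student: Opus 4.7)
My plan is to apply the Deift--Zhou nonlinear steepest descent method to the Riemann--Hilbert (RH) problem associated, via the Its--Izergin--Korepin--Slavnov construction, to the integrable operator $K_\sigma$ of \eqref{eq:K0}. This yields a $2\times 2$ RH problem for $Y(z)$, normalized by $Y(z)\to I$ as $z\to\infty$, with jump
\[
Y_+(z)=Y_-(z)\begin{pmatrix}1-\sigma(z;s) & -\sigma(z;s)\,e^{2ixz}\\ \sigma(z;s)\,e^{-2ixz} & 1+\sigma(z;s)\end{pmatrix},\qquad z\in\mathbb R.
\]
The determinant $\log D(x,s)$ is then recovered from $Y$ through standard differential identities in $s$ (or $x$), which can be integrated once the asymptotics of $Y$ are in hand.

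Rescale $z=\sqrt s\,\eta$ and set $a:=x\sqrt s$. The rescaled weight $\hat\sigma(\eta;s):=(1+e^{s(\eta^2-1)})^{-1}$ converges exponentially fast to $\chi_{(-1,1)}(\eta)$ outside any fixed neighbourhood of $\eta=\pm 1$ and is analytic in a narrow horizontal strip around the real line. In the range of the theorem, $a/s\in[M_2,\,2/\pi-M_1 s^{-2/3}]$, so $a$ and $s$ are both large and comparable. I would then carry out the Deift--Zhou steps in the standard order: first, introduce a scalar $g$-function associated to the one-cut equilibrium measure of the appropriate external-field problem so as to absorb the oscillations $e^{\pm 2ia\eta}$; second, open lenses off the support, within the strip of analyticity of $\hat\sigma$; third, construct the global outer parametrix from the one-cut equilibrium data; and fourth, construct local parametrices near the endpoints of the support. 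The last step requires a model RH problem that captures the Fermi--Dirac smoothing of $\hat\sigma$ on the $1/s$-scale and, as $a/s\uparrow 2/\pi$, degenerates into the Painlev\'e II / Airy model that governs the transition regime of Theorem~\ref{thm:doubleScaling}.

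With the small-norm RH problem in hand, the asymptotics of $\log D(x,s)$ are read off from the large-$z$ expansion of $Y$. The leading term $-\tfrac{sx^2}{2}=-\tfrac{a^2}{2}$ comes from the equilibrium energy and coincides with the leading Dyson--Widom asymptotic for the classical sine kernel on $(-1,1)$ with parameter $a$. The subleading correction $+\tfrac{\pi^2 x^4}{48}$ is a finite-temperature effect whose universal coefficient $\pi^2/24$ can be traced, via the Sommerfeld expansion, to the first non-trivial moment of the Fermi--Dirac profile, $\int_{\mathbb R}u\bigl[(1+e^{2u})^{-1}-\chi_{\{u<0\}}\bigr]\,du=\pi^2/24$. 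Indeed, applied directly to \eqref{eq:KInt}, Sommerfeld yields $K(\lambda,\mu;s)=\sin(\pi\sqrt s(\lambda-\mu))/(\pi(\lambda-\mu))-\pi^3(\lambda-\mu)\sin(\pi\sqrt s(\lambda-\mu))/(24\,s)+O(s^{-3/2})$, and I expect this expansion to propagate through the RH solution to produce the stated correction.

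The principal technical obstacle is uniformity of the error estimate $O(s)$ all the way up to the boundary $a/s=2/\pi-M_1 s^{-2/3}$. As $a/s\uparrow 2/\pi$, the endpoints of the equilibrium measure approach the critical points of $\hat\sigma$, the spectral gap closes at the Painlev\'e II rate $s^{-2/3}$, and the local Fermi--Dirac parametrix must merge smoothly with the Airy / Painlev\'e II parametrix that governs the transition regime of Theorem~\ref{thm:doubleScaling}. Constructing a single uniform local model valid across this crossover, and showing that its residual mismatch contributes at most $O(s)$ uniformly in the stated range, is the technical heart of the argument. A secondary difficulty is that near the critical curve two disparate scales enter the local analysis simultaneously---$1/s$ from the Fermi--Dirac width and $s^{-2/3}$ from the Airy width---and must be handled consistently in the matching of local and global parametrices.
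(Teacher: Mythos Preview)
Your overall framework is right and matches the paper: IIKS theory gives a RH problem for $Y$, rescale by $\sqrt{s}$, introduce a one-cut $g$-function, open lenses, build global and local parametrices, and read off $\log D$ via differential identities. But you have misidentified where the work lies. The endpoints of the equilibrium support are at $\pm\lambda_0$ with $\lambda_0^2\sim 1-\pi^2x^2/(4s)$, determined by the endpoint condition $\int_{\lambda_0}^\infty \zeta\sigma(\sqrt{s}\zeta;s)/\sqrt{\zeta^2-\lambda_0^2}\,d\zeta=\pi x/(2\sqrt{s})$. These are bounded away from $\pm 1$, so the Fermi--Dirac smoothing never enters the local analysis: the paper uses \emph{standard Airy parametrices} at $\pm\lambda_0$, nothing more. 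Uniformity up to the boundary $\pi x/(2\sqrt{s})=1-M_1s^{-2/3}$ does not require any crossover model; it comes for free from the observation that the small-norm error is $O(1/(s\lambda_0^3))$ and $s\lambda_0^3\gtrsim M_1^{3/2}$, which is large once $M_1$ is. The Painlev\'e~II model appears only in the separate transition regime of Theorem~\ref{thm:doubleScaling}.

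The $\pi^2 x^4/48$ correction is also not a Sommerfeld effect in the way you describe. It comes directly from the one-cut equilibrium data: with $1-\lambda_0^2=\pi^2x^2/(4s)+O(s^{-2})$, the coefficient $g_1$ in the large-$\lambda$ expansion of the $g$-function is $g_1=-\tfrac{i}{2}x\sqrt{s}\,\lambda_0^2-\tfrac{2is}{3\pi}(1-\lambda_0^2)^{3/2}+O(1/s)$, and the differential identity $\partial_x\log D=-2i\sqrt{s}\,g_1+O(1/(\sqrt{s}\lambda_0^2))$ then gives $\partial_x\log D=-xs+\pi^2x^3/12+O(\cdot)$. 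To turn this into \eqref{thm:Dasylargs} the paper also computes $\partial_s\log D=-x^2/2+O(1)$ (via a more delicate integral identity involving the resolvent kernel, split over regions near and far from $\pm\lambda_0$), and then integrates along the path $(M_2\sqrt{s_0},s_0)\to(M_2\sqrt{s'},s')\to(x',s')$ in the $(x,s)$-plane. You mention the differential identity in passing but do not have this integration scheme, and without the second identity and a starting point there is no way to fix the additive function of $s$ in $\log D$.
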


\begin{rem}\label{rem:PhaseTran}
From Theorems \ref{thm: large gap asy x} and \ref{thm: DAsylargs},  we observe a third-order phase transition near the critical curve  $x=\frac{2}{\pi}\sqrt{s}$ in the  asymptotics of $\log D(x;s)$ as both $x$ and $s$ tend to positive infinity in  such a way  that $x=\frac{2l}{\pi} \sqrt{s} $ with $l>0$.  To see this, we first derive the asymptotics in the regime $l>1$ using Theorem \ref{thm: large gap asy x}. 
It follows from \eqref{eq:sigma} that $\sigma(\sqrt{s}\lambda;s)\to\chi_{(-1,1)}(\lambda)$ as $s\to+\infty$.
Therefore, we have  as $s\to+\infty$
\begin{equation}\label{eq:SigmaAsy1}
\int_{\mathbb{R}}\sigma(\lambda;s) \lambda^2 d\lambda=s^{\frac{3}{2}}\int_{\mathbb{R}}\sigma(\sqrt{s}\lambda;s) \lambda^2 d\lambda
 \sim \frac{2}{3}s^{\frac{3}{2}}, 
\end{equation}
and 
\begin{equation}\label{eq:SigmaAsy2}
c_0(s)  \sim \frac{s^2}{\pi^2},
\end{equation}
with $c_0(s)$ defined in \eqref{eq:c0}.
 Substituting these into \eqref{thm:AsyD}, we have for $l>1$
 \begin{align}\label{eq:Dlimit1}
\log D(x,s) \sim C_+(l)sx^2, ~~\mbox{with} ~~C_+(l)=-\frac{2}{3 l}+\frac{1}{4l^2},
\end{align}
as $s\to+\infty$.
For $0<l<1$, it follows from \eqref{thm:Dasylargs}  that
\begin{align}\label{eq:Dlimit2}
\log D(x,s) \sim C_{-}(l)sx^2, ~~\mbox{with} ~~C_{-}(l) =-\frac{1}{2}+\frac{1}{12}l^2,
\end{align}
as $s\to+\infty$.
We  expand further the coefficients $C_{\pm}(l)$ near $l=1$
\begin{equation}\label{eq:F1}
C_+(l)=-\frac{5}{12}+\frac{1}{6}(l-1)+\frac{1}{12}(l-1)^2-\frac{1}{3}(l-1)^3+O(l-1)^4,
\end{equation}
and
\begin{equation}\label{eq:F2}
C_{-}(l)=-\frac{5}{12}+\frac{1}{6}(l-1)+\frac{1}{12}(l-1)^2. 
\end{equation}
Comparing \eqref{eq:F1} with \eqref{eq:F2}, we observe that the  third derivative of the 
 coefficient of the  leading order asymptotics of  $\log D(x,s)$  is discontinuous at $l=1$. This phenomenon is known   in the literature as  a third-order phase transition \cite{BDJ, GW,  J98}.  
\end{rem}

%

In the transition regime near the critical curve $x=\frac{2}{\pi} \sqrt{s}$,  the asymptotics can be described by  the  Hastings-McLeod solution of the  second Painlev\'e equation
\begin{equation}\label{eq:PII}
u''(x)=xu(x)+2u^3(x),\end{equation} 
subject to the boundary conditions
\begin{equation}\label{eq:PIIAsy}
u(x)\sim \Ai(x), \quad x\to +\infty, \qquad  u(x)\sim \sqrt{-\frac{x}{2}}, \quad x\to -\infty.\end{equation} 

\begin{thm}\label{thm:doubleScaling}
Let $\frac{\pi}{2}\frac{x}{\sqrt{s}}=1+\left(\frac{\pi}{4s} \right)^{\frac{2}{3}}y $, 
we have as $s\to+\infty$
\begin{equation}\label{eq:DasyScaling}
\log D(x,s)=-\frac{2x}{\pi}\int_{\mathbb{R}}\sigma(\lambda;s)\lambda^2d\lambda+c_0(s)-\int^{+\infty}_{y}(\tau-y)u^2(\tau)d\tau+O(s^{-1/6}),
\end{equation} 
where  $c_0$ is given in  \eqref{eq:c0}, and $u(\tau)$ is the Hastings-McLeod solution of the  second Painlev\'e equation.  The error term is uniform for $-M_3\leq y\leq M_4s^{\frac{2}{3}}$ with any constants $M_3>0$ and $M_4>0$. \end{thm}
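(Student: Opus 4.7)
The plan is to perform a Deift--Zhou steepest descent analysis of the Riemann--Hilbert (RH) problem associated with the integrable operator $K_\sigma$, install a Hastings--McLeod Painlev\'e II local parametrix at the coalescing soft edges in the double-scaling regime, and finally integrate a differential identity in $y$ to recover $\log D(x,s)$. The structure of the argument parallels the birth-of-a-cut analyses of \cite{BDJ,CK,CKV}.

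First, I would encode $(I-K_\sigma)^{-1}$ through the Its--Izergin--Korepin--Slavnov integrable operator formalism, so that the resolvent and the logarithmic derivatives of $D(x,s)$ are read off from the large-$\lambda$ expansion of the solution $Y$ of a $2\times 2$ RH problem on $\mathbb{R}$ whose jump data are built from $\sqrt{\sigma(\lambda;s)}$ and $e^{\pm i x\lambda}$. After rescaling $\lambda = \sqrt{s}\,\zeta$, the phase $e^{\pm i x\sqrt{s}\,\zeta}$ has large amplitude of order $s$, while $\sigma(\sqrt{s}\,\zeta;s)$ concentrates on $\zeta \in (-1,1)$ with a transition layer of width $O(s^{-1/2})$. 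I would then introduce the $g$-function associated with the equilibrium measure whose density vanishes like a square root at the symmetric endpoints $\pm 1$ precisely when $\pi x/(2\sqrt{s}) = 1$, to normalize $Y$ at infinity; opening lenses would render the jumps exponentially close to $I$ away from $\pm 1$.

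Outside small disks around $\pm 1$, the global parametrix $P^{(\infty)}$ would be built from a Szeg\H{o}-type function. Inside these disks, two square-root soft edges are on the verge of coalescing, which is precisely the birth-of-a-cut regime. Following the construction of \cite{CKV}, I would install a local parametrix built from the Hastings--McLeod Painlev\'e II solution, with the local scaling variable matched so that $\frac{\pi x}{2\sqrt{s}} = 1 + (\pi/4s)^{2/3} y$ plays the role of the Painlev\'e II argument. The matching of this parametrix with $P^{(\infty)}$ on the disk boundary produces an error of order $O(s^{-1/3})$, and the resulting ratio RH problem $R = S(P^{(\infty)})^{-1}$ is of small norm, yielding $R = I + O(s^{-1/3})$ uniformly.

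To pass from $Y$ to $\log D(x,s)$ itself, I would use the differential identity $\partial_x \log D = -\tr\bigl((\partial_x K_\sigma)(I - K_\sigma)^{-1}\bigr)$, which after the integrable operator reduction expresses $\partial_y \log D$ through the subleading coefficient of $Y$ at infinity. Substituting the parametrix expansion, the leading $y$-dependent contribution comes from the Painlev\'e II model and, by standard identities for the Hastings--McLeod function, it matches $\partial_y \int_y^{+\infty}(\tau - y) u^2(\tau)\, d\tau$. Integrating from $y$ up to $y = M_4 s^{2/3}$, where $u^2 \sim \Ai^2$ is super-exponentially small, and matching with the no-gap expansion of Theorem \ref{thm: large gap asy x} at that endpoint then pins down the $y$-independent terms $-\frac{2x}{\pi}\int_\mathbb{R}\sigma(\lambda;s)\lambda^2 d\lambda + c_0(s)$. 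The hardest step will be establishing uniformity of the error $O(s^{-1/6})$ across the wide range $-M_3 \le y \le M_4 s^{2/3}$: when $y$ is of order $s^{2/3}$, the two Painlev\'e II disks must deform smoothly into the single Airy-edge configuration expected in the no-gap regime, and the matching between local and global parametrices must remain consistent as the topology of the support of the equilibrium measure changes. The comparatively weak exponent $-1/6$, rather than the $-1/3$ inherited from the parametrix matching, is consistent with integrating a derivative identity once over a $y$-range of length $O(s^{1/3})$; making this integration uniform while keeping the Painlev\'e II description valid throughout the entire transition window is where the delicate work lies.
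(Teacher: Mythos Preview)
Your overall plan---Deift--Zhou analysis with a $g$-function, a Hastings--McLeod Painlev\'e~II local parametrix, and integration of the differential identity $\partial_x\log D=-2i(Y_1)_{11}$ back to the no-gap expansion of Theorem~\ref{thm: large gap asy x}---is exactly the route the paper takes. But the geometry of the critical configuration is misidentified, and this is where your argument would fail.

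After the rescaling $\lambda=\sqrt{s}\,\zeta$, the band for the $g$-function is \emph{not} $[-1,1]$. The endpoint equation (see the analogue of \eqref{eq:lambda01}) forces the cut to be $[-\lambda_0,\lambda_0]$ with $\lambda_0^2\sim 1-\tfrac{\pi^2x^2}{4s}$; at criticality $\tfrac{\pi x}{2\sqrt{s}}=1$ one has $\lambda_0\to 0$, so the two soft edges coalesce at the \emph{origin}, not at $\pm 1$. Consequently there is a \emph{single} Painlev\'e~II disk centred at $0$, not two disks at $\pm 1$. The points $\zeta=\pm1$ are only where the Fermi weight $\sigma(\sqrt{s}\,\zeta;s)$ transitions; this is absorbed into the $g$-function through $V(\zeta)=\log(1-\sigma)$ and requires no local parametrix. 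Your picture of ``two Painlev\'e~II disks deforming into a single Airy edge in the no-gap regime'' is therefore inverted: the no-gap side has no edges at all (the outer parametrix is just the identity/Szeg\H{o} factor), while the one-gap side has two Airy edges at $\pm\lambda_0$; these two Airy configurations merge into one Painlev\'e~II parametrix at the origin as $\lambda_0\to0$.

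Once the parametrix is placed correctly, the rest of your plan goes through essentially as in the paper: the subleading coefficient of $Y$ yields $\partial_x\log D=-\frac{2}{\pi}\int_{\mathbb R}\sigma\lambda^2\,d\lambda+(2\pi)^{1/3}s^{1/6}H(\tau(x,s))+O(s^{-2/3})$ with $H$ the Painlev\'e~II Hamiltonian; integrating in $x$ over the range corresponding to $y\in[-M_3,M_4s^{2/3}]$ (length $O(\sqrt{s})$ in $x$) turns the $O(s^{-2/3})$ into the stated $O(s^{-1/6})$. Note also that your error bookkeeping is slightly off: the $y$-interval has length $O(s^{2/3})$, and it is the $O(s^{-2/3})$ error in $\partial_x\log D$---not $O(s^{-1/3})$---that produces $O(s^{-1/6})$ after integration.
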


\begin{rem} The integral involving the  Hastings-McLeod solution in \eqref{eq:DasyScaling} corresponds to the Tracy-Widom distribution \cite{TW}
$$\log F_{TW}(y)=-\int^{+\infty}_{y}(\tau-y)u^2(\tau)d\tau.$$
This Painlev\'e II transition near the critical curve is  similar to the closing of a gap  in unitary matrix models \cite{BDJ,CK,CKV}. 
From \eqref{eq:PIIAsy}, we see that $\log F_{TW}(y)$ tends to zero exponentially fast as $y\to+\infty$, and 
\begin{equation}\label{eq:TWexpand}
\log F_{TW}(y)\sim \frac{1}{12}y^3, \quad y\to-\infty.
\end{equation}
As a consequence,  the asymptotic expansion \eqref{eq:DasyScaling} degenerates to \eqref{thm:AsyD} and also \eqref{eq:Dlimit1} when $y$ grows like $M_4s^{\frac{2}{3}}$ for any constant $M_4>0$.  Futhermore, let $l=\frac{\pi}{2}\frac{x}{\sqrt{s}}=1+\left(\frac{\pi}{4s} \right)^{\frac{2}{3}}y$, we have $ \frac{1}{3}(l-1)^3sx^2 \sim \frac{1}{12}  y^3$ for large $s$. 
It follows from \eqref{eq:Dlimit1}-\eqref{eq:F2} that the difference  between the leading asymptotics of $\log D(x,s)$  in the  regimes separated by the critical curve $\frac{\pi}{2}\frac{x}{\sqrt{s}}=1$ is given approximately by $ \frac{1}{3}(l-1)^3sx^2$. 
 Therefore, the difference in the  leading asymptotics  of $\log D(x,s)$ across the  regimes separated by the critical curve should be compared with  the tail of the Tracy-Widom distribution in \eqref{eq:TWexpand}.
\end{rem}

As $s\to+\infty$ and $x\to 0$ in such a way that $x\sqrt{s}$ remains bounded, we have $\sigma(\sqrt{s}\lambda;s)\to\chi_{(-1,1)}(\lambda)$.
We show that the determinant $D(x,s)$ degenerates to the classical sine kernel determinant, which can be expressed in terms of 
the solution of the $\sigma$-form of the fifth Painlev\'e equation \cite{JMMS}
\begin{equation}\label{eq:sigmaPV}
(xv'')^2+4(4v-4xv'-v'^2)(v-xv')=0,\end{equation} 
with the boundary conditions 
 \begin{equation}\label{eq:PVasy}
 v(x)\sim -\frac{2}{\pi} x, \quad x\to 0, \qquad v(x)\sim -x^2-\frac{1}{4}, \quad x\to+\infty.
\end{equation}

\begin{thm}\label{thm:PVasy}As $s\to+\infty$ and $x\to 0$ such that $\sqrt{s}x$ remains bounded,   we have 
\begin{equation}\label{eq:DPV1}
\log D(x,s)=\log \det(I-K_{\sin,\sqrt{s}x})+O\left(\frac{x}{s^2}\right), \end{equation}
or equivalently 
\begin{equation}\label{eq:DPV}
\log D(x,s)=\int_{0}^{\sqrt{s}x}\frac{v(\tau)}{\tau}d\tau+
O\left(\frac{x}{s^2}\right), \end{equation}
where $K_{\sin,x}$ denotes the integral operator acting on $L^2(-x/\pi,x/\pi)$ with the sine kernel given in \eqref{eq:Ksine}  
  and $v(\tau)$ solves the $\sigma$-form of the fifth Painlev\'e equation \eqref{eq:sigmaPV} with the boundary conditions 
\eqref{eq:PVasy}.
\end{thm}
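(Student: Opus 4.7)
The strategy is to show that in the stated regime the finite-temperature sine kernel converges to a rescaled classical sine kernel operator, and then to invoke the Jimbo-Miwa-M\^ori-Sato (JMMS) representation of the sine kernel Fredholm determinant in terms of $\sigma$-Painlev\'e V. Starting from the representation \eqref{eq:D}-\eqref{eq:K0} on $L^2(\mathbb{R})$, I would rescale $\lambda\to\lambda/\sqrt{s}$ so that $\sqrt{\sigma(\sqrt{s}\lambda;s)}$ converges pointwise to $\chi_{(-1,1)}(\lambda)$, with exponential tails outside a transition window of width $O(1/s)$ about $|\lambda|=1$. The natural comparison operator is then the sine kernel on $L^2(-1,1)$ with oscillation parameter $\sqrt{s}x$, whose Fredholm determinant equals $\det(I-K_{\sin,\sqrt{s}x})$ after a further rescaling of the interval.

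To quantify the error, I would use the perturbation identity
\[
\log D(x,s)-\log\det(I-K_{\sin,\sqrt{s}x})=-\int_0^1 \tr\bigl[(I-K_t)^{-1}(K_\sigma-K_{\sin,\sqrt{s}x})\bigr]\,dt,
\]
where $K_t$ linearly interpolates between the two operators. Because $\sqrt{s}x$ stays bounded, the sine kernel has operator norm strictly less than one and $(I-K_t)^{-1}$ is uniformly bounded, so the task reduces to a careful trace-class estimate of $K_\sigma-K_{\sin,\sqrt{s}x}$. Working in the inner variable $w=2\sqrt{s}(\tau-\sqrt{s})$ near the transition of the Fermi distribution, the naive $O(1/\sqrt{s})$ contribution to this perturbation cancels because of the identity
\[
\int_{\mathbb{R}}\Bigl[\tfrac{1}{1+e^w}-\chi_{w<0}\Bigr]\,dw=0,
\]
which expresses the antisymmetry of the Fermi function about its inflection point. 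Pushing the inner-variable expansion to the next order in $1/s$, together with the oscillatory cancellation coming from the sine-kernel factor $\sin(x(\lambda-\mu))/(\pi(\lambda-\mu))$, should produce the sharp $O(x/s^2)$ error claimed in the theorem.

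Finally, the equivalence of \eqref{eq:DPV1} and \eqref{eq:DPV} follows from the classical JMMS identity \cite{JMMS}
\[
\log\det(I-K_{\sin,r})=\int_0^r\frac{v(\tau)}{\tau}\,d\tau,
\]
with $v$ the solution of \eqref{eq:sigmaPV} under the boundary conditions \eqref{eq:PVasy}; setting $r=\sqrt{s}x$ completes the argument. The hard part will be the sharpness of the error bound: a direct $L^1$ or Hilbert-Schmidt estimate of $\sqrt{\sigma(\lambda;s)}-\chi_{(-\sqrt{s},\sqrt{s})}(\lambda)$ yields only $O(x/\sqrt{s})$, so improving this all the way down to $O(x/s^2)$ requires systematically exploiting both the antisymmetry of the Fermi distribution near $|\lambda|=\sqrt{s}$ and the oscillatory structure of the sine kernel in the resolvent-weighted trace.
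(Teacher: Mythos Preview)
Your approach differs substantially from the paper's. The paper does not compare operators directly; it performs Riemann--Hilbert steepest descent on $Y$ (Section~\ref{sec:PVasy}): after rescaling $T(\lambda)=Y(\sqrt{s}\lambda)e^{i\sqrt{s}x\lambda\sigma_3}$, the Painlev\'e~V model problem $\Phi^{(\PV)}(\lambda,\sqrt{s}x)$ is used as global parametrix, and the ratio $R=T\,(\Phi^{(\PV)})^{-1}$ (with a local modification near $\pm1$) is a small-norm RH problem with jump $I+O(1/s^2)$. The crucial estimate is the Cauchy transform $\int_{\mathbb R}\widehat\sigma(\zeta;s)(\zeta-\lambda)^{-1}\,d\zeta=O(1/s^2)$ on $|\lambda\pm1|=\delta$, which encodes exactly the Fermi-function antisymmetry you identified, but evaluated at fixed positive distance from the transition layer. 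The differential identity \eqref{eq:DY} then converts this into the asymptotics of $\partial_x\log D(x,s)$, and integrating in $x$ from $0$ using $D(0,s)=1$ yields \eqref{eq:DPV}. The JMMS step is the same as yours.

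Your perturbation scheme is more elementary and the antisymmetry is indeed the right mechanism, but the route to the sharp error has a gap. The ``oscillatory cancellation from the sine kernel'' you invoke has no force here: since $\sqrt{s}x$ is bounded, the kernel $\sin(\sqrt{s}x(\lambda-\mu))/(\pi(\lambda-\mu))$ varies on scale $O(1)$, while the layer where $\widehat\sigma\neq0$ has width $O(1/s)$, so the sine factor is effectively constant across it. Using the antisymmetry alone, $\int_{\mathbb R}\widehat\sigma\,d\lambda=O(1/s^2)$ gives $\tr(K_\sigma-K_0)=O(\sqrt{s}\,x/s^2)=O(x/s^{3/2})$ after rescaling, but the remaining resolvent-weighted term $\tr\bigl[(I-K_t)^{-1}K_t\,(K_\sigma-K_0)\bigr]$ is bounded only through the trace-class norm of the difference, and that norm does not inherit the sign cancellation. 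The paper's RH route avoids this obstruction by reducing the entire error to a scalar Cauchy integral evaluated away from $\pm1$, so the antisymmetry acts without interference from a resolvent kernel.
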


\begin{rem} From \eqref{eq:PVasy} and  \eqref{eq:DPV}, we  formally  obtain  the leading-order term $-\frac{1}{2}sx^2$ in the  asymptotics of  $\log D(x,s)$ as both $s$ and  $\sqrt{s}x$ tend to positive infinity, which is consistent with 
 the  large gap asymptotics of   the classical sine kernel determinant shown in \eqref{eq:sinDet}.  The leading  term is also consistent with \eqref{thm:Dasylargs} after  formally taking  the limit $l=\frac{\pi}{2}\frac{x}{\sqrt{s}}\to 0$. We will study these regimes rigorously  in a separate paper.
\end{rem}

\begin{rem} We consider in the present paper  the asymptotics of the determinant of the finite-temperature sine kernel
with the special Fermi weight defined by \eqref{eq:sigma}, which  describes  precisely the gap probability of the  finite-temperature free fermions in the bulk scaling regime.  It is expected that the results of the paper would hold for more general class of weight functions. Particularly,
it would be interesting to extend the results to the weight function, considered in \cite{CT}, of the form $w(\lambda^2-s)$ with some function $w$ 
 sufficiently smooth  and decaying sufficiently fast at positive infinity. We will leave this problem to  a further investigation.
 \end{rem}

The rest of the paper is organized as follows. In Section \ref{RHforDet}, we express the logarithmic derivatives of the 
determinant \eqref{eq:D} with respect to $x$ and $s$ in terms of the solution of a Riemann-Hilbert (RH) problem by  using   the theory of  integrable  operators developed in \cite{IIKS, DIZ}.  In Sections \ref{sec: large x}-\ref{sec:PVasy}, we perform the Deift-Zhou nonlinear steepest descent analysis \cite{DMVZ1, DMVZ2, DZ} of the RH problem in several different regimes. 
The proofs of Theorems \ref{thm: large gap asy x}-\ref{thm:PVasy} are then provided in Section \ref{sec:ProofTheorems} by using the results obtained in Sections \ref{sec: large x}-\ref{sec:PVasy}.

\section{RH problem for the finite-temperature sine kernel determinant }\label{RHforDet}
In this section, we relate the determinant \eqref{eq:D} to the solution of a RH problem  using 
the general theory of integrable operators developed in \cite{IIKS, DIZ}.

For this purpose, we define 
\begin{equation}\label{def:fh}
\mathbf{ f}(\lambda)=\sqrt{\sigma(\lambda;s)}\left(
                               \begin{array}{c}
                                 e^{ix\lambda}  \\
                                 e^{-ix\lambda}\\
                                 \end{array}
                             \right), \qquad  \mathbf{ h}(\mu)=\frac{\sqrt{\sigma(\mu;s)}}{2\pi i}\left(
                               \begin{array}{c}
                                  e^{-ix\mu} \\
                                 - e^{ix\mu} \\
                                 \end{array}
                             \right),
\end{equation}
with $\sigma(\lambda;s)$ defined in \eqref{eq:sigma}.
Then, the kernel \eqref{eq:K0} can be expressed as 
\begin{equation}\label{eq:K}
K_{\sigma}(\lambda,\mu)=\frac{\mathbf{ f}(\lambda)^{t}\mathbf{ h}(\mu)}{\lambda-\mu},
\end{equation} 
where $\mathbf{ f}(\lambda)^{t}$ denotes the transpose of the vector $\mathbf{ f}(\lambda)$.
According to \cite{IIKS, DIZ}, the  logarithmic derivative of the determinant \eqref{eq:D} can be expressed in terms of the solution of the following RH problem.

\begin{rhp} \label{RHY}
The function $Y(\lambda)=Y(\lambda; x,s)$  satisfies the properties.
\begin{itemize}
\item[\rm (1)] 
$Y(\lambda)$ is analytic in
  $\mathbb{C}\backslash \mathbb{R}$.

\item[\rm (2)]   $Y(\lambda)$  has continuous boundary values $Y_+(\lambda)$ and $Y_-(\lambda)$ as $\lambda$ approaches the real axis
from the positive and negative sides, respectively.  And 
they satisfy the jump relation
 \begin{equation}\label{eq:YJump}Y_+(\lambda)
 =Y_-(\lambda) \left(
                               \begin{array}{cc}
                              1-\sigma( \lambda;s)&   e^{2ix\lambda}\sigma( \lambda;s)\\
                                 -  e^{-2ix\lambda} \sigma( \lambda;s) &  1+ \sigma( \lambda;s)\\
                                 \end{array}
                             \right),
\qquad \lambda\in \mathbb{R},
\end{equation}
where $\sigma(\lambda;s)$ is defined in \eqref{eq:sigma}.

\item[\rm (3)]  The behavior of $Y(\lambda)$ at infinity is
  \begin{equation}\label{eq:YInfinity}Y(\lambda)=I+\frac {Y_{1}}{\lambda}+O\left (\frac 1 {\lambda^2}\right).\end{equation}
\end{itemize}
\end{rhp}

\begin{pro}\label{pro:diffIdentity}  
We have the differential identities
\begin{equation}\label{eq:DY}
\partial_x\log D(x,s)=-2i (Y_1)_{11}, 
\end{equation}
and 
\begin{equation}\label{eq:BC2}
\partial_s \log D(x,s)=-\int_{\mathbb{R}}  \mathbf{ h}^{t}(\lambda)Y_{+}^{-1}(\lambda)\frac{d}{d\lambda}(Y_{+}(\lambda)\mathbf{f}(\lambda)) \partial_s \log \sigma (\lambda;s)d\lambda,\end{equation}
where $\mathbf{ f}$ and $\mathbf{ h}$ are defined in  \eqref{def:fh} and $Y_1$ is the coefficient in the large $\lambda$ expansion of $Y(\lambda)$ in \eqref{eq:YInfinity}. 
\end{pro}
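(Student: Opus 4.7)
The plan is a direct application of the standard IIKS--DIZ theory of integrable Fredholm determinants, specialized to the kernel \eqref{eq:K}. For either $t=x$ or $t=s$, Jacobi's formula gives $\partial_t \log D(x,s) = -\tr\bigl((I-K_\sigma)^{-1}\partial_t K_\sigma\bigr)$. From \eqref{def:fh} I check directly that $\mathbf{f}(\lambda)^t\mathbf{h}(\lambda)\equiv 0$, and the jump in \eqref{eq:YJump} can be rewritten as $J(\lambda)=I-2\pi i\,\mathbf{f}(\lambda)\mathbf{h}(\lambda)^t$. The IIKS theory then delivers: (i) $Y_+(\lambda)\mathbf{f}(\lambda)=Y_-(\lambda)\mathbf{f}(\lambda):=\mathbf{F}(\lambda)$ and $(Y_+(\lambda)^t)^{-1}\mathbf{h}(\lambda):=\mathbf{H}(\lambda)$ are well defined and continuous on $\mathbb{R}$; (ii) the resolvent $R:=K_\sigma(I-K_\sigma)^{-1}$ has integrable kernel $R(\lambda,\mu)=\mathbf{F}(\lambda)^t\mathbf{H}(\mu)/(\lambda-\mu)$; (iii) since $\mathbf{F}^t\mathbf{H}=\mathbf{f}^t\mathbf{h}=0$, L'Hopital yields
\[
R(\lambda,\lambda)=\mathbf{h}(\lambda)^t Y_+^{-1}(\lambda)\frac{d}{d\lambda}\bigl(Y_+(\lambda)\mathbf{f}(\lambda)\bigr);
\]
and (iv) the Plemelj formula, combined with $Y_+-Y_-=-2\pi i\,\mathbf{F}\mathbf{h}^t$, gives the integral representation $Y(\lambda)=I+\int \mathbf{F}(\mu)\mathbf{h}(\mu)^t/(\lambda-\mu)\,d\mu$, from which $(Y_1)_{jk}=\int_\mathbb{R} F_j(\mu) h_k(\mu)\,d\mu$.

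For \eqref{eq:DY}, differentiation of \eqref{def:fh} gives $\partial_x\mathbf{f}=i\lambda\sigma_3\mathbf{f}$ and $\partial_x\mathbf{h}=-i\lambda\sigma_3\mathbf{h}$ with $\sigma_3=\mathrm{diag}(1,-1)$. Inserting into the integrable form of $K_\sigma$, the singular factor $\lambda-\mu$ cancels and I obtain the rank-two expression $\partial_x K_\sigma(\lambda,\mu)=i\,\mathbf{f}(\lambda)^t\sigma_3\mathbf{h}(\mu)$. A routine finite-rank trace computation (using that the scalar operator $(I-K_\sigma)^{-1}$ acts componentwise and so commutes with multiplication by the constant matrix $\sigma_3$) reduces the trace to $i\int_{\mathbb{R}}\mathbf{F}(\lambda)^t\sigma_3\mathbf{h}(\lambda)\,d\lambda=i\bigl[(Y_1)_{11}-(Y_1)_{22}\bigr]$. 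A short computation from \eqref{eq:YJump} shows $\det J\equiv 1$, so $\det Y$ is entire and normalized to $1$ at infinity; by Liouville's theorem $\det Y\equiv 1$, which forces $\tr Y_1=0$ and hence $(Y_1)_{22}=-(Y_1)_{11}$. This produces \eqref{eq:DY}.

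For \eqref{eq:BC2}, the $s$-derivative acts only on the $\sqrt{\sigma}$ factors, so
\[
\partial_s K_\sigma = MK_\sigma+K_\sigma M,\qquad M(\lambda):=\tfrac{1}{2}\partial_s\log\sigma(\lambda;s).
\]
Cyclicity of the trace, together with $(I-K_\sigma)^{-1}K_\sigma=K_\sigma(I-K_\sigma)^{-1}=R$, collapses both terms to $\tr(MR)$, giving $\tr((I-K_\sigma)^{-1}\partial_s K_\sigma)=2\tr(MR)=\int_\mathbb{R}\partial_s\log\sigma(\lambda;s)\,R(\lambda,\lambda)\,d\lambda$; inserting the diagonal formula from step~(iii) yields precisely \eqref{eq:BC2}. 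No step is genuinely difficult, the argument being a verbatim specialization of the IIKS--DIZ mechanism. The only matters requiring care are the verification of $\mathbf{f}^t\mathbf{h}=0$ and $\det J=1$ (both direct, but necessary in order to have the integrable structure and the key identity $\tr Y_1=0$), and the reduction of $\partial_x K_\sigma$ to a rank-two operator, which is what replaces a potentially complicated integral over $\mathbb{R}$ by the single matrix entry $(Y_1)_{11}$ in \eqref{eq:DY}.
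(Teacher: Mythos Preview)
Your proof is correct and follows essentially the same IIKS--DIZ route as the paper: Jacobi's formula, the rank-two reduction $\partial_x K_\sigma(\lambda,\mu)=i\,\mathbf{f}(\lambda)^t\sigma_3\mathbf{h}(\mu)$, identification of the resulting integral with entries of $Y_1$ via \eqref{eq:Ysolution}, and for the $s$-derivative the resolvent diagonal $R(\lambda,\lambda)$. You make two steps explicit that the paper leaves implicit, namely the Liouville argument $\det Y\equiv 1\Rightarrow \tr Y_1=0$ (needed to pass from $(Y_1)_{11}-(Y_1)_{22}$ to $2(Y_1)_{11}$) and the cyclicity computation $\tr\bigl((I-K_\sigma)^{-1}(MK_\sigma+K_\sigma M)\bigr)=\tr\bigl((I-K_\sigma)^{-1}K_\sigma\,\partial_s\log\sigma\bigr)$, which the paper compresses into the single equality \eqref{eq:dslogD}.
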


\begin{proof}
According to \cite{IIKS}, there exists a unique solution to the above RH problem 
\begin{equation}\label{eq:Ysolution}Y(\lambda)=I-\int_{\mathbb{R} } \mathbf{ F}(\mu)\mathbf{ h}^{t}(\mu)\frac{d\mu}{\mu-\lambda},\end{equation}
where
  \begin{equation}\label{eq:YFH}
  \mathbf{ F}(\mu)=(I- K_{\sigma})^{-1} \mathbf{ f}(\mu).\end{equation}
  The operator $I-K_{\sigma}$ is invertible as shown in \cite{CT}.
  Moreover, the resolvent $K_{\sigma}(I-K_{\sigma})^{-1}$ is also an integrable operator with the kernel given by
  \begin{equation}\label{eq:Rkernel}
R(\lambda,\mu)=\frac{  \mathbf{F}^t(\lambda)  \mathbf{H}(\mu)}{\lambda-\mu},\end{equation}  
where 
\begin{equation}\label{eq:FH}
  \mathbf{F}(\lambda)=Y_{+}(\lambda)  \mathbf{f}(\lambda), \quad    \mathbf{H}(\mu)=Y_{+}^{-t}(\mu)  \mathbf{h}(\mu),\end{equation}  
with $\lambda,\mu\in\mathbb{R}$; see \cite{IIKS}.

It follows from \eqref{eq:D} that
 \begin{equation}\label{eq:logDF1}
\partial_x\log D(x,s)=-\mathrm{ tr} \left((I-K_{\sigma})^{-1}  \partial_x K_{\sigma}\right).\end{equation}
By \eqref{eq:K}, we have
 \begin{equation}\label{eq:dsin}
\partial_x K_{\sigma}(\lambda,\mu) =i\mathbf{ f}(\lambda)^{t}\sigma_3\mathbf{ h}(\mu)=i(\sigma_3\mathbf{ h}(\mu))^{t}\mathbf{ f}(\lambda),\end{equation} 
where $\sigma_3$ is one of the Pauli matrices:
                     \begin{equation}\label{eq:PauliMatrices}
                     \sigma_1=\left(
                               \begin{array}{cc}
                                 0&1\\
                              1& 0 \\
                                 \end{array}
                             \right),  \quad  \sigma_2=\left(
                               \begin{array}{cc}
                                 0&-i\\
                              i& 0 \\
                                 \end{array}
                             \right), \quad  \sigma_3=\left(
                               \begin{array}{cc}
                                 1&0\\
                              0& -1 \\
                                 \end{array}
                             \right). \end{equation}
Therefore, we have
 \begin{equation}\label{eq:logDF2}
\partial_x \log D(x,s)=-i\int_{\mathbb{R}}(\sigma_3 \mathbf{ h}(\mu))^{t} \mathbf{ F}(\mu) d\mu.\end{equation}
 The differential identity
\eqref{eq:DY} then follows from \eqref{eq:YInfinity}, \eqref{eq:Ysolution}, \eqref{eq:YFH} and \eqref{eq:logDF2}.

Similarly, using \eqref{eq:D},  \eqref{eq:K}, \eqref{eq:YFH} and Jacobi’s identity, we have
 \begin{equation}\label{eq:dslogD}
\partial_s \log D(x,s)=-\mathrm{ tr} \left((I-K_{\sigma})^{-1} \partial_s K_{\sigma}\right)
=-\mathrm{ tr} \left((I-K_{\sigma})^{-1}  K_{\sigma}\partial_s \log \sigma\right). \end{equation}
Therefore, the  differential identity \eqref{eq:BC2}  follows from \eqref{eq:Rkernel}, \eqref{eq:FH} and \eqref{eq:dslogD}.
 
 \end{proof}

\section{Large $x$ asymptotics of $Y$   
}\label{sec: large x}

In this section, we carry out the Deift-Zhou \cite{DMVZ1, DMVZ2, DZ}  nonlinear steepest descent analysis of the RH problem for $Y(\lambda)$ as  $x\to+\infty $ for $s\in(-\infty, M_0x^2 ]$  with any constant $0<M_0< \frac{\pi^2}{4}$.
\subsection{First transformation: $Y\to S$}
The jump matrix in \eqref{eq:YJump} can be factorized in the following way
 \begin{align}\label{eq:TFac}\left(
                               \begin{array}{cc}
                              1-\sigma( x\lambda;s)&   e^{2ix^2\lambda}\sigma(x \lambda;s)\\
                                 -  e^{-2ix^2\lambda} \sigma( x\lambda;s) &  1+ \sigma( x\lambda;s)\\
                                 \end{array}
                             \right)=&\left(
                               \begin{array}{cc}
                                 1 &0\\
                                 -e^{-2ix^2\lambda}\frac{\sigma(x \lambda;s)}{1- \sigma( x\lambda;s)}& 1 \\
                                 \end{array}
                             \right)\\ \nonumber
                            &\times \left( 1-\sigma(x \lambda;s)\right)^{\sigma_3} \left( \begin{array}{cc}
                                 1 & e^{2ix^2\lambda}\frac{\sigma(x \lambda;s)}{1- \sigma( x\lambda;s)}\\
                                0 & 1 \\
                                 \end{array}\right).
                             \end{align}  
Based on the above factorization, we  define 
\begin{equation}\label{eq:S}
S(\lambda)=Y(x\lambda)\left\{        \begin{array}{cc}
\left(
                               \begin{array}{cc}
                                 1 & -e^{2ix^2\lambda}\frac{\sigma(x\lambda;s )}{1- \sigma(x\lambda;s )}\\
                                0 & 1 \\
                                 \end{array}
                             \right), &   \lambda\in \Omega_2, \\
                                                             \left(
                               \begin{array}{cc}
                                 1 &0\\
                                 -e^{-2ix^2\lambda}\frac{\sigma(x\lambda;s )}{1- \sigma(x\lambda;s )}& 1 \\
                                 \end{array}
                             \right), &  \lambda\in \Omega_3,\\
                                                        
                               I,            &  \lambda\in \Omega_1\cup \Omega_4.             
                                                            \end{array}
                      \right.
\end{equation}
The regions are illustrated in Fig. \ref{fig:S}, the boundaries of which will be specified later.
From the function $\sigma(\lambda;s)$ defined in \eqref{eq:sigma}, we have 
 \begin{equation}\label{eq:DeformWelldefine}
\frac{\sigma(\lambda;s)}{1- \sigma(\lambda;s)}=\exp(-\lambda^2+s),
\end{equation}
which is an entire function in the complex plane. 
Therefore, it follows from \eqref{eq:S} that $S(\lambda)$ satisfies the following RH problem.

\begin{rhp} \label{RHPT}
The function  $S(\lambda)$  satisfies the following properties.
\begin{itemize}
\item[\rm (1)]  $S(\lambda)$ is analytic in $\mathbb{C}\setminus \{ \mathbb{R}\cup\Gamma_1\cup\Gamma_2\}$ with the contours shown in Fig. \ref{fig:S}.

 \item[\rm (2)]  $S(\lambda)$  satisfies the jump condition
\begin{equation}\label{eq:Sjump}
S_{+}(\lambda)=S_{-}(\lambda)\left\{        \begin{array}{cc}
\left(
                               \begin{array}{cc}
                                 1 &e^{2ix^2\lambda}\frac{\sigma(x\lambda;s)}{1- \sigma(x\lambda;s)}\\
                                 0 & 1 \\
                                 \end{array}
                             \right), &   \lambda\in \Gamma_1,\\

                       \left( 1-\sigma(x\lambda;s)\right)^{\sigma_3},      &  \lambda\in\mathbb{R},\\
                                                      \left(
    \begin{array}{cc}
                                 1 &0\\
                                 -e^{-2ix^2\lambda}\frac{\sigma(x\lambda;s)}{1- \sigma(x\lambda;s)}& 1 \\
                                 \end{array}
                             \right),&   \lambda\in \Gamma_2. 
                                 \end{array}
                             \right.
\end{equation}
 \item[\rm (3)]   As $\lambda\to\infty$, we have
  \begin{equation}\label{eq:SInfinity}S(\lambda)=I+O\left(\frac{1}{\lambda}\right).\end{equation}                     
  \end{itemize}
  
  \end{rhp}
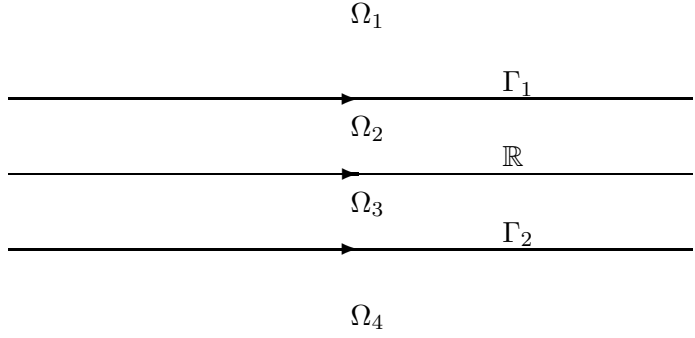
\begin{figure}[htb]
\begin{center}
   \setlength{\unitlength}{1truemm}
   \begin{picture}(100,50)(-5,20)

       \put(25,40){\line(1,0){30}}
          \put(55,40){\line(1,0){30}}
             \put(25,40){\line(-1,0){30}}
  \put(25,50){\line(1,0){30}}
          \put(55,50){\line(1,0){30}}
             \put(25,50){\line(-1,0){30}}
 \put(25,30){\line(1,0){30}}
          \put(55,30){\line(1,0){30}}
             \put(25,30){\line(-1,0){30}}

 \put(40,40){\thicklines\vector(1,0){1}}
  \put(40,50){\thicklines\vector(1,0){1}}
   \put(40,30){\thicklines\vector(1,0){1}}

       \put(60,51){$\Gamma_1$}
   
 \put(60,31){$\Gamma_2$}

 \put(60,41){$\mathbb{R}$}

  \put(40,60){$\Omega_1$}
   \put(40,45){$\Omega_2$}
    \put(40,35){$\Omega_3$}
     \put(40,20){$\Omega_4$}

\end{picture}
   \caption{Jump contours and regions for the RH problem for $S$.}
   \label{fig:S}
\end{center}
\end{figure}

\subsection{Global parametrix}

From  \eqref{eq:DeformWelldefine}, we have 
 \begin{equation}\label{eq:SJ}
e^{\pm 2ix^2\lambda}\frac{\sigma(x\lambda;s)}{1- \sigma(x\lambda;s)}=\exp(-x^2(\lambda\mp i)^2-x^2+s 
).
\end{equation}
Therefore, we may choose  appropriate contours $\Gamma_1$ and $\Gamma_2$ such that the  function in \eqref{eq:SJ} is exponentially small  for $\lambda$ on these contours as $x\to+\infty$. 
Neglecting the jumps along  $\Gamma_1$ and $\Gamma_2$, we arrive at the following approximate RH problem with the only jump  along 
the real axis
\begin{equation}\label{eq:NJump}
N_{+}(\lambda) =N_{-}(\lambda)   \left( 1-\sigma(x\lambda;s)\right)^{\sigma_3},  \quad \lambda\in\mathbb{R},
\end{equation}
and  the asymptotic behavior at infinity
\begin{equation}\label{eq:NInfty}
N(\lambda) =I+O\left(\frac{1}{\lambda}\right).
\end{equation}
It is direct to see that 
\begin{equation}\label{eq:N}
N(\lambda) =d(\lambda)^{\sigma_3}
\end{equation}
with 
\begin{equation}\label{eq:d}
d(\lambda)=\exp\left(\frac{1}{2\pi i} \int_{\mathbb{R}}\log(1-\sigma(x\zeta;s)) \frac{d\zeta}{ \zeta-\lambda}\right).
\end{equation}
If $\Im\lambda>0$, we have
\begin{equation}\label{eq:Int-est}
\Re \left(\frac{1}{2\pi i} \int_{\mathbb{R}}\log(1-\sigma(x\zeta;s))\frac{d\zeta}{ \zeta-\lambda}\right)=
\frac{\Im\lambda}{2\pi} \int_{\mathbb{R}}\log(1-\sigma(x\zeta;s)) \frac{d\zeta}{ |\zeta-\lambda|^2}
<0,
\end{equation}
since $\sigma(x\zeta;s)\in(0,1)$ for $\zeta\in\mathbb{R}$.
Therefore, we find that for  $\Im\lambda>0$
\begin{equation}\label{eq:d-est1}
|d(\lambda)|<1.
\end{equation}
Similarly, we have for  $\Im\lambda<0$
\begin{equation}\label{eq:d-est2}
|d(\lambda)^{-1}|<1.
\end{equation}
For later use, we derive the asymptotic expansion for $\log d(\lambda)$ as $x\to+\infty$. After performing integration by parts on \eqref{eq:d} and using the relation $\frac{\sigma'(\lambda;s)}{1-\sigma(\lambda;s)}=-2\lambda\sigma(\lambda;s)$, we have
\begin{equation}\label{eq:logh}
\log d(\lambda)=-\frac{x^2}{\pi i} \int_{\mathbb{R}}\log (\lambda-\zeta)\zeta\sigma(x\zeta;s)d\zeta,
\end{equation}
where $\arg(\lambda-\zeta)\in(-\pi, \pi)$.
Replacing $\sigma(x\zeta;s)$ by  $\chi_{(-b,b)}(\zeta)$ with $b=\frac{\sqrt{s}}{x}$, we have as $x\to+\infty$
\begin{align}\label{eq:loghEst}
\log d(\lambda)&=-\frac{x^2}{\pi i} \int_{-b}^b\log (\lambda-\zeta)\zeta d\zeta+O(1) \nonumber\\
&=\frac{x^2}{2\pi i}\left(2b\lambda+(\lambda^2-b^2)\log\left(\frac{\lambda-b}{\lambda+b}\right)\right)+O(1),
\end{align}
where $\arg(\lambda\pm b)\in(-\pi,\pi)$ and the error term is uniform for $\lambda\in \mathbb{C}\setminus \mathbb{R}$.

\subsection{Small-norm RH problem}

We define
\begin{equation}\label{eq:1R}
R(\lambda) =S(\lambda)N(\lambda)^{-\sigma_3}.
\end{equation}
Then, it follows from \eqref{eq:Sjump}, \eqref{eq:SJ} and  \eqref{eq:N} that $R(\lambda)$ satisfies the following RH problem.
\begin{rhp} \label{RHPR}
The function  $R(\lambda)$  satisfies the following properties.
\begin{itemize}
\item[\rm (1)]  $R(\lambda)$ is analytic in $\mathbb{C}\setminus \{\Gamma_1\cup\Gamma_2\}$.

 \item[\rm (2)]  $R(\lambda)$  satisfies the jump condition
\begin{equation}\label{eq:Rjump}
R_{+}(\lambda)=R_{-}(\lambda)J_{R}(\lambda), ~ J _{R}(\lambda)= \left\{        \begin{array}{cc}
\left(
                               \begin{array}{cc}
                                 1 &\exp(-x^2(\lambda- i)^2-x^2+s )d(\lambda)^2\\
                                 0 & 1 \\
                                 \end{array}
                             \right),&   \lambda\in \Gamma_1,\\

                                                      \left(
    \begin{array}{cc}
                                 1 &0\\
                                 -\exp(-x^2(\lambda+ i)^2-x^2+s) d(\lambda)^{-2} & 1 \\
                                 \end{array}
                             \right), &   \lambda\in \Gamma_2. 
                                 \end{array}
                             \right.
\end{equation}
 \item[\rm (3)]   As $\lambda\to\infty$, we have
       \begin{equation}\label{eq:RInfinity} R(\lambda)=I+O\left(\frac{1}{\lambda}\right).\end{equation}                
  \end{itemize}
  \end{rhp}
 
 \subsection{Large $x$ asymptotics of $R$ for $s\in(-\infty, c_1x^2]$ with $0<c_1<1$ }
 If $s\in(-\infty, c_1x^2]$ with any constant $0<c_1<1$, we choose the contours 
  \begin{equation}\label{eq:Gamma1}\Gamma_1=\{\lambda\in\mathbb{C}: \Im \lambda=1\} \quad \mbox{and} \quad  \Gamma_2=\{\lambda\in\mathbb{C}: \Im \lambda=-1\}.
  \end{equation}
  Then,  from \eqref{eq:d-est1}, \eqref{eq:d-est2}  and \eqref{eq:Rjump}, we obtain
  \begin{equation}\label{eq:JRExpan}J_{R}(\lambda)=I+O\left(\exp\left(-(x^2-s+x^2|\Re \lambda|^2)\right)\right), \end{equation}   
where the error term is uniform for $\lambda\in\Gamma_1\cup\Gamma_2$ and 
$s\in(-\infty, c_1x^2]$. 

From  \eqref{eq:JRExpan}, we see that $R(\lambda)$ satisfies   a small-norm RH problem for large $x$. 
By standard  arguments of small-norm RH problems \cite{ Deift, DZ}, we have the expansion as $x\to+\infty$
\begin{equation}\label{eq:RExpand1}
R(\lambda) =I+O\left(\frac{\exp(-(x^2-s))}{1+|\lambda|}\right),
\end{equation}
and 
\begin{equation}\label{eq:dRExpand1}
\frac{d}{d\lambda}R(\lambda) =O\left(\frac{\exp(-(x^2-s))}{1+|\lambda|^2}\right),
\end{equation}
where the error term is uniform for $\lambda\in \mathbb{C}\setminus\{ \Gamma_1\cup\Gamma_2\}$ and 
 $s\in(-\infty, c_1x^2 ]$  with any constant $0<c_1< 1$. 
 
 \subsection{Large $x$ asymptotics of $R$ for $s\in[ c_1x^2, c_2x^2 ]$  with $0<c_1<c_2< \frac{\pi^2}{4}$ }

In this subsection, we consider the region  $s\in[ c_1x^2, c_2x^2 ]$  with any constant $0<c_1<c_2< \frac{\pi^2}{4}$.  From \eqref{eq:loghEst}, we have
\begin{equation}\label{eq:loghEst1}
\log \left(\exp(-x^2(\lambda- i)^2-x^2+s )d(\lambda)^2\right)
=x^2\varphi(\lambda)+ O(1),
\end{equation}
where
\begin{equation}\label{eq:varphi}
\varphi(\lambda)
= -\lambda^2+2i\lambda+b^2 +\frac{1}{\pi i}\left(2b\lambda+(\lambda^2-b^2)\log\left(\frac{\lambda-b}{\lambda+b}\right)\right),
\end{equation}
where $b=\frac{\sqrt{s}}{x}$ and the principle branches for $\log(\lambda\pm b)$ are taken such that $\arg(\lambda\pm b)\in(-\pi,\pi)$. From \eqref{eq:varphi}, we derive  the following properties:
\begin{equation}\label{eq:varphiPro1}
\Re \varphi(\lambda)
= -\lambda^2+b^2<0, \quad \lambda \in (-\infty, -b )\cup (b, +\infty ), 
\end{equation}
\begin{equation}\label{eq:varphiPro21}
 \Re \varphi_{+}(\lambda)=0, \quad \lambda\in ( -b, b ),
\end{equation}
\begin{equation}\label{eq:varphiPro22}
\frac{d}{d\lambda} \Im \varphi_{+}(\lambda)
= 2-\frac{4b}{\pi}-\frac{2}{\pi}\lambda\log\left|\frac{\lambda-b}{\lambda+b}\right|>0, \quad  \lambda\in ( -b, b ),
\end{equation}
and 
\begin{equation}\label{eq:varphiPro3}
\Re \varphi(\lambda)\sim \frac{2b}{\pi} \Im \lambda \log|\lambda\mp b|<0, \quad \Im \lambda>0, \quad \lambda\to \pm b.
\end{equation}
It is noted that the restriction $b=\frac{\sqrt{s}}{x}<\frac{\pi}{2}$ ensures that the inequality in  \eqref{eq:varphiPro22} holds.
 From \eqref{eq:varphiPro1}-\eqref{eq:varphiPro3},  there exists a sufficiently small constant $\delta>0$  such that
 \begin{equation}\label{eq:Revarphi}
\Re \varphi(\lambda)<-c|\lambda|^2,  \quad \Im \lambda=\delta,
\end{equation}  
with some constant $c>0$. 
In the derivation of  \eqref{eq:Revarphi} for $|\Re \lambda\pm b |<r$ with sufficiently small $r>0$, we use \eqref{eq:varphiPro3}. The estimate of  \eqref{eq:Revarphi} for 
$|\Re \lambda|>b+r/2$ follows from \eqref{eq:varphiPro1}. And the estimate of \eqref{eq:Revarphi} for 
$\Re \lambda\in (-b+r/2, b-r/2)$  is obtained by using  \eqref{eq:varphiPro21}, \eqref{eq:varphiPro22} and the Cauchy-Riemann equations.
Now, we choose the contours
  \begin{equation}\label{eq:Gamma2}\Gamma_1=\{\lambda\in\mathbb{C}: \Im \lambda=\delta\} \quad  \mbox{ and} \quad \Gamma_2=\{\lambda\in\mathbb{C}: \Im \lambda=-\delta\}.\end{equation}
Then, from \eqref{eq:Rjump}, \eqref{eq:loghEst1} and \eqref{eq:Revarphi},  we  have 
\begin{equation}\label{eq:JRExpan1}J_{R}(\lambda)=I+O(\exp(-cx^2|\lambda|^2)), \end{equation}   
where $c$ is some positive constant and the error term is uniform for $\lambda\in\Gamma_1$ defined in \eqref{eq:Gamma2} and $s\in[ c_1x^2, c_2x^2 ]$.  Similarly, we have  \eqref{eq:JRExpan1} uniformly for $\lambda\in\Gamma_2$. 

Similar to \eqref{eq:RExpand1} and  \eqref{eq:dRExpand1},  we have from \eqref{eq:JRExpan1}
\begin{equation}\label{eq:RExpand12}
R(\lambda) =I+O\left(\frac{\exp(-cx^2)}{1+|\lambda|}\right),
\end{equation}
and 
\begin{equation}\label{eq:dRExpand12}
\frac{d}{d\lambda}R(\lambda) =O\left(\frac{\exp(-cx^2)}{1+|\lambda|^2}\right),
\end{equation}
where $c$ is some positive constant and the error term is uniform for $\lambda\in \mathbb{C}\setminus \{\Gamma_1\cup\Gamma_2\}$ and
 $s\in[ c_1x^2, c_2x^2 ]$  with any constant $0<c_1<c_2< \frac{\pi^2}{4}$.

\section{Large $s$ asymptotics of $Y$ for $x=\frac{2l}{\pi} \sqrt{s} $ with $M_2\leq l\leq 1-M_1s^{-\frac{2}{3}}$}\label{sec:asyYlarges}
In this section, we will carry out the nonlinear steepest descent analysis of the RH problem for $Y(\lambda)$ as $s\to+\infty$ for $x=\frac{2l}{\pi} \sqrt{s} $, $l\in [M_2, 1-M_1s^{-\frac{2}{3}}]$ with  a sufficiently large constant $M_1>0$ and any constant $0<M_2<1$.

\subsection{First transformation: $Y\to T$}
To simplify the jump condition, we introduce the transformation
 \begin{equation}\label{eq:T}
T(\lambda)=Y(\sqrt{s}\lambda)e^{ix\sqrt{s}\lambda\sigma_3}.\end{equation}
Then, $T(\lambda)$ satisfies the RH problem below.
\begin{rhp} \label{RHhPsi}
The function  $T(\lambda)$   satisfies the following properties.
\begin{itemize}
\item[\rm (1)]  $T(\lambda)$ is analytic in $\mathbb{C}\setminus \mathbb{R}$.

 \item[\rm (2)] $T(\lambda)$ satisfies the jump condition
 \begin{equation}\label{eq:TJump1}T_+(\lambda)=T_-(\lambda)\left(
                               \begin{array}{cc}
                                 1-\sigma(\sqrt{s}\lambda;s) & \sigma(\sqrt{s}\lambda;s)\\
                                 -\sigma(\sqrt{s}\lambda;s) & 1+\sigma(\sqrt{s}\lambda;s) \\
                                 \end{array}
                             \right), 
                             \end{equation}    
                             for $\lambda \in \mathbb{R}$.
 \item[\rm (3)]   As $\lambda\to\infty$, we have
  \begin{equation}\label{eq:TInfinity}T(\lambda)=\left(I+O\left(\frac{1}{\lambda}\right)\right)e^{ix\sqrt{s}\lambda\sigma_3}.\end{equation}
  \end{itemize}
  \end{rhp}

\subsection{Second transformation: $T\to A$}
As $s\to +\infty$, we have $ \sigma(\sqrt{s}\lambda;s)\to \chi_{(-1,1)}(\lambda)$. Therefore, we may factorize 
the jump matrix in \eqref{eq:TJump1} as follows:
\begin{equation}\label{eq:AFac1}\left(
                               \begin{array}{cc}
                                 1-\sigma(\sqrt{s}\lambda;s) & \sigma(\sqrt{s}\lambda;s)\\
                                 -\sigma(\sqrt{s}\lambda;s)& 1+\sigma(\sqrt{s}\lambda;s) \\
                                 \end{array}
                             \right)=\left(
                               \begin{array}{cc}
                                 1 &0\\
                                 1 & 1 \\
                                 \end{array}
                             \right)\left(
                               \begin{array}{cc}
                                 1 - \sigma(\sqrt{s}\lambda;s)& 1 \\
                                 -1& 0 \\
                                 \end{array}
                             \right) \left(
                               \begin{array}{cc}
                                 1 &-1\\
                                 0 & 1 \\
                                 \end{array}
                             \right),
                             \end{equation}    
for $\lambda\in(-\lambda_0,\lambda_0)$ and     
      \begin{align}\nonumber 
      \left(
                               \begin{array}{cc}
                               \sigma(\sqrt{s}\lambda;s)& -1+\sigma(\sqrt{s}\lambda;s)\\
                                 1+ \sigma(\sqrt{s}\lambda;s)& \sigma(\sqrt{s}\lambda;s) \\
                                 \end{array}
                             \right)&=\left(
                               \begin{array}{cc}
                                 1 &0\\
                                 -\frac{\sigma(\sqrt{s}\lambda;s)}{1- \sigma(\sqrt{s}\lambda;s)}& 1 \\
                                 \end{array}
                             \right)
                             \left(1-\sigma(\sqrt{s}\lambda;s)\right)^{\sigma_3} \\ \label{eq:AFac2} 
                     &~~~~~~~~~~~~~~~~~~~~~~~~~~~\times       \left( \begin{array}{cc}
                                 1 &\frac{\sigma(\sqrt{s}\lambda;s)}{1- \sigma(\sqrt{s}\lambda;s)}\\
                                0 & 1 \\
                                 \end{array}\right),
                             \end{align}    
   for $\lambda\in(-\infty,-\lambda_0)\cup(\lambda_0, \infty)$.  The endpoint $\lambda_0$ will be specified later.      
   
Based on \eqref{eq:AFac1} and \eqref{eq:AFac2},  we introduce the second transformation
\begin{equation}\label{eq:A}
A(\lambda)=T(\lambda)\left\{        \begin{array}{cc}
\left(
                               \begin{array}{cc}
                                 1 &- \frac{\sigma(\sqrt{s}\lambda;s)}{1- \sigma(\sqrt{s}\lambda;s)}\\
                                0 & 1 \\
                                 \end{array}
                             \right),&   \lambda\in \Omega_1\cup\Omega_3,\\
                                 \left(
                               \begin{array}{cc}
                                 1 &1\\
                                 0 & 1 \\
                                 \end{array}
                             \right),&  \lambda\in \Omega_2,\\
                             \left(
                               \begin{array}{cc}
                                 1 &0\\
                                 -\frac{\sigma(\sqrt{s}\lambda;s)}{1- \sigma(\sqrt{s}\lambda;s)}& 1 \\
                                 \end{array}
                             \right),&  \lambda\in \Omega_4\cup \Omega_6,\\
                             \left(
                               \begin{array}{cc}
                                 1 &0\\
                                 1 & 1 \\
                                 \end{array}
                             \right),&  \lambda\in \Omega_5.
                                 \end{array}
                             \right.
\end{equation}
Here, the regions are illustrated in Fig. \ref{fig:A}. The contours $\Sigma_k$, $k=1,2,3,4$ are chosen such that $\Re (\lambda^2-\lambda_0^2)=0$.  From \eqref{eq:DeformWelldefine}, it follows that the function $\frac{\sigma(\sqrt{s}\lambda;s)}{1- \sigma(\sqrt{s}\lambda;s)}=\exp(-s(\lambda^2-1))$ is entire.
Therefore, it follows from \eqref{eq:TJump1}, \eqref{eq:TInfinity} and \eqref{eq:A}  that the function $A(\lambda)$  satisfies the following RH problem.

\begin{figure}[h]
\begin{center}
   \setlength{\unitlength}{1truemm}
   \begin{picture}(100,65)(-5,12)
              \put(25,40){\line(1,0){30}}
          \put(55,40){\line(1,0){30}}
               \put(25,40){\line(-1,0){30}}
     \qbezier(10,60)(40,40)(10,20)
 \qbezier(70,60)(40,40)(70,20)

 \put(40,40){\thicklines\vector(1,0){1}}
  \put(10,40){\thicklines\vector(1,0){1}}
   \put(70,40){\thicklines\vector(1,0){1}}
 
       \put(15,56.5){\thicklines\vector(1,-1){1}}
       
       \put(15,23.5){\thicklines\vector(1,1){1}}
      
       \put(65,23.5){\thicklines\vector(1,-1){1}}
     
       \put(65,56.5){\thicklines\vector(1,1){1}}

       \put(60,60){$\Sigma_1$}
      \put(20,60){$\Sigma_2$}
 \put(20,20){$\Sigma_3$}
\put(60,20){$\Sigma_4$}

  \put(80,50){$\Omega_1$}
   \put(40,50){$\Omega_2$}
    \put(0,50){$\Omega_3$}
     \put(0,25){$\Omega_4$}
   \put(40,25){$\Omega_5$}
    \put(80,25){$\Omega_6$}

       \put(25,40){\thicklines\circle*{1}}
       \put(55,40){\thicklines\circle*{1}}

       \put(25,36.3){$-\lambda_0$}
       \put(56,36.3){$\lambda_0$}
\end{picture}
   \caption{Jump contours and regions for the RH problem for $A$.}
   \label{fig:A}
\end{center}
\end{figure}

\vskip 2cm
\begin{rhp} \label{RHPA}
The function  $A(\lambda)$  satisfies the following properties.
\begin{itemize}
\item[\rm (1)]  $A(\lambda)$ is analytic in $\mathbb{C}\setminus \Sigma_A$, where $\Sigma_A=\cup_{1}^4\Sigma_k\cup \mathbb{R}$ is shown in Fig. \ref{fig:A}. 

 \item[\rm (2)]  $A(\lambda)$  satisfies the jump condition
 \begin{equation}\label{eq:AJump}A_+(\lambda)=A_-(\lambda)\left\{        \begin{array}{cc}
 \left(
                               \begin{array}{cc}
                                 1 & \frac{1}{1- \sigma(\sqrt{s}\lambda;s)}\\
                                0 & 1 \\
                                 \end{array}
                             \right),&  \lambda\in \Sigma_1\cup\Sigma_2,\\
\left(
                               \begin{array}{cc}
                                 1 &0\\
                                 -\frac{1}{1- \sigma(\sqrt{s}\lambda;s)} & 1 \\
                                 \end{array}
                             \right),&  \lambda\in \Sigma_3\cup\Sigma_4,\\

                             \left(
                               \begin{array}{cc}
                                 1-\sigma(\sqrt{s}\lambda;s)&1\\
                                 -1& 0\\
                                 \end{array}
                             \right),&  \lambda\in (-\lambda_0,\lambda_0),\\
                             \left(
                               1-\sigma(\sqrt{s}\lambda;s)
                             \right)^{\sigma_3}, &  \lambda\in(-\infty,-\lambda_0)\cup (\lambda_0,+\infty). 
                                 \end{array}
                             \right.
\end{equation}
 \item[\rm (3)]   As $\lambda\to\infty$, we have
  \begin{equation}\label{eq:AInfinity}A(\lambda)=
  \left( I+ O\left(\frac{1}{\lambda}\right)\right)e^{ix\sqrt{s}\lambda\sigma_3}.
\end{equation}
                  
  \end{itemize}
  \end{rhp}

\subsection{Third transformation: $A\to B$}
To normalize the behavior of $A(\lambda)$ at infinity, we introduce the transformation
\begin{equation}\label{eq:B}
B(\lambda)=(1-\sigma(\sqrt{s}\lambda_0;s))^{\pm\frac{1}{2}\sigma_3}
A(\lambda)\exp(-g(\lambda)\sigma_3)(1-\sigma(\sqrt{s}\lambda_0;s))^{\mp\frac{1}{2}\sigma_3},\end{equation}
for $\pm \Im \lambda>0$.
Here, the $g$-function satisfies the asymptotic behavior as $\lambda\to\infty$
\begin{equation}\label{eq:gInfty}
g(\lambda)=ix\sqrt{s}\lambda+O(1/\lambda).
\end{equation}
To simplify the jump condition,  we assume $g(\lambda)$ is analytic in $\mathbb{C}\setminus\mathbb{R}$ and satisfies the following properties:
\begin{equation}\label{eq:phase}
g_+(\lambda)-g_{-}(\lambda)-V(\lambda)+V(\lambda_0)=0, \quad \lambda\in(-\infty,-\lambda_0)\cup(\lambda_0,+\infty),
\end{equation}
with 
\begin{equation}\label{eq:V}
V(\lambda)=\log(1-\sigma(\sqrt{s}\lambda;s)),
\end{equation}
and 
\begin{equation}\label{eq:phase2}
g_+(\lambda)+g_{-}(\lambda)=0,  \quad \lambda\in(-\lambda_0,\lambda_0).
\end{equation}
 By the Sokhotski-Plemelj formula, we obtain  from \eqref{eq:gInfty}-\eqref{eq:phase2} that 
\begin{align}\label{eq:g}
g(\lambda)&=\sqrt{\lambda^2-\lambda_0^2} \left(ix\sqrt{s}+\frac{1}{2\pi i}\int_{J}\frac{V(\zeta)-V(\lambda_0)}{\sqrt{\zeta^2-\lambda_0^2}}\frac{d\zeta}{ \zeta-\lambda}\right)\nonumber\\
&=\sqrt{\lambda^2-\lambda_0^2} \left(ix\sqrt{s}+\frac{1}{\pi i}\int_{\lambda_0}^{+\infty}\frac{V(\zeta)-V(\lambda_0)}{\sqrt{\zeta^2-\lambda_0^2}}\frac{\zeta d\zeta}{ \zeta^2-\lambda^2}\right),
\end{align}
where   $J=(-\infty,-\lambda_0)\cup(\lambda_0,+\infty)$ and $ \sqrt{\lambda^2-\lambda_0^2}$ takes the branch cut along $[-\lambda_0, \lambda_0]$ and behaves like $\lambda$ as $\lambda\to\infty$.


\begin{figure}[h]
\begin{center}
   \setlength{\unitlength}{1truemm}
   \begin{picture}(100,50)(-5,20)
       
       \put(25,40){\line(1,0){30}}

\qbezier(10,60)(40,40)(10,20)
 \qbezier(70,60)(40,40)(70,20)

 \put(40,40){\thicklines\vector(1,0){1}}
 
       \put(15,56.5){\thicklines\vector(1,-1){1}}
       
       \put(15,23.5){\thicklines\vector(1,1){1}}
      
       \put(65,23.5){\thicklines\vector(1,-1){1}}
     
       \put(65,56.5){\thicklines\vector(1,1){1}}

       \put(60,60){$\Sigma_1$}
      \put(20,60){$\Sigma_2$}
 \put(20,20){$\Sigma_3$}
\put(60,20){$\Sigma_4$}

       \put(25,40){\thicklines\circle*{1}}
       \put(55,40){\thicklines\circle*{1}}

       \put(25,36.3){$-\lambda_0$}
       \put(56,36.3){$\lambda_0$}
\end{picture}
   \caption{Jump contours and regions for the RH problem for $B$.}
   \label{fig:B}
\end{center}
\end{figure}
Then, $B(\lambda)$ satisfies the following RH problem.

\begin{rhp} \label{RHPB}
The function  $B(\lambda)$  satisfies the following properties.
\begin{itemize}
\item[\rm (1)]  $B(\lambda)$ is analytic in $\mathbb{C}\setminus \Sigma_B$, where $\Sigma_B=\cup_{1}^4~\Sigma_{k}\cup(-\lambda_0,\lambda_0)$ is described in  Fig. \ref{fig:B}.

 \item[\rm (2)]  $B(\lambda)$  satisfies the jump condition
\begin{equation}\label{eq:Bjump}
B_{+}(\lambda)=B_{-}(\lambda)\left\{        \begin{array}{ll}
\left(
                               \begin{array}{cc}
                                 1 &\exp(2g(\lambda)-V(\lambda)+V(\lambda_0))\\
                                 0 & 1 \\
                                 \end{array}
                             \right), &   \lambda\in\Sigma_1\cup\Sigma_2,\\
                             \left(
                               \begin{array}{cc}
                                 1 &0\\
                                 - \exp(-2g(\lambda)-V(\lambda)+V(\lambda_0)) & 1 \\
                                 \end{array}
                             \right), &  \lambda\in\Sigma_3\cup\Sigma_4,\\
                             \left(
                               \begin{array}{cc}
                                 \exp(-2g_{+}(\lambda)+V(\lambda)-V(\lambda_0)) & 1 \\
                                 -1&0\\
                                 \end{array}
                             \right), &  \lambda\in(-\lambda_0,\lambda_0).
                                 \end{array}
                             \right.
\end{equation}
 \item[\rm (3)]   As $\lambda\to\infty$, we have
  \begin{equation}\label{eq:BInfinity}B(\lambda)=I+O\left(\frac{1}{\lambda}\right).\end{equation}         
  \end{itemize}
  \end{rhp}

We denote
\begin{equation}\label{eq: phi}
\phi(\lambda)=g(\lambda)\mp\frac{1}{2} (V(\lambda)-V(\lambda_0)),  \quad \pm \Im \lambda>0.
\end{equation}
Then, the jump condition in \eqref{eq:Bjump} can be written as follows:
\begin{equation}\label{eq:Bjump1}
B_{+}(\lambda)=B_{-}(\lambda)\left\{        \begin{array}{ll}
\left(
                               \begin{array}{cc}
                                 1 &\exp(2\phi(\lambda))\\
                                 0 & 1 \\
                                 \end{array}
                             \right),&   \lambda\in\Sigma_1\cup\Sigma_2,\\
                             \left(
                               \begin{array}{cc}
                                 1 &0\\
                                 - \exp(-2\phi(\lambda)) & 1 \\
                                 \end{array}
                             \right),&  \lambda\in\Sigma_3\cup\Sigma_4,\\
                             \left(
                               \begin{array}{cc}
                                 \exp(-2\phi_{+}(\lambda)) & 1 \\
                                 -1&0\\
                                 \end{array}
                             \right),&  \lambda\in(-\lambda_0,\lambda_0).
                                 \end{array}
                             \right.
\end{equation}

In the subsequent RH analysis, it will be convenient if  the function $\phi(\lambda)$ vanishes at the endpoints $\pm \lambda_0$ of order $\frac{3}{2}$.  In view of \eqref{eq:g} and \eqref{eq: phi}, 
we use the freedom to choose the endpoint $\lambda_0$ such that 
\begin{equation}\label{eq:lambda0}
\frac{1}{\pi}\int_{\lambda_0}^{+\infty}\frac{V(\zeta)-V(\lambda_0)}{(\zeta^2-\lambda_0^2)^{3/2}}\zeta d\zeta=x\sqrt{s}.
\end{equation}
From \eqref{eq:V}, we have the derivative of $V(\lambda)$ with respect to $\lambda$
\begin{equation}\label{eq:derV}
V'(\lambda)=2s\lambda\sigma(\sqrt{s}\lambda;s).
\end{equation}
Therefore, using integration by parts, the  equation \eqref{eq:lambda0} is equivalent to 
\begin{equation}\label{eq:lambda01}
\int_{\lambda_0}^{+\infty}\frac{\zeta\sigma(\sqrt{s}\zeta;s)}{\sqrt{\zeta^2-\lambda_0^2}} d\zeta=\int_{0}^{+\infty}\sigma\left(\sqrt{s}(\zeta^2+\lambda_0^2)^{\frac{1}{2}};s\right) d\zeta=\frac{\pi}{2}\frac{x}{\sqrt{s}}=l\in(0,1). 
\end{equation}
We will show in next section that such an endpoint $\lambda_0>0$ exists and is unique, and that the function $\phi(\lambda)$ vanishes at the endpoints $\pm \lambda_0$ of order $\frac{3}{2}$
under the condition \eqref{eq:lambda0} or  equivalently \eqref{eq:lambda01}.

\subsection{Properties of the $g$-function and $\phi$-function}
In this section, we derive some useful properties of the $g$-function and the $\phi$-function, formulated in a series of lemmas. These properties play important roles in the asymptotic analysis of the RH problem for $B(\lambda)$. 

For this purpose, we define
\begin{equation}\label{eq:L}
L(s)=\int_0^{+\infty}\sigma(\sqrt{s}\zeta;s)d\zeta.
\end{equation}
 We derive the asymptotic expansion of the above integral over the intervals 
$[0, 1]$ and $[1,+\infty)$ separately. From \eqref{eq:sigma},  we have 
\begin{equation}\label{eq:LAsy0}
\int_0^{1}\sigma(\sqrt{s}\zeta;s)d\zeta=\int_0^{1}\frac{1}{1+\exp(s(\zeta^2-1))}d\zeta=1-\int_0^{1}\frac{\exp(s(\zeta^2-1))}{1+\exp(s(\zeta^2-1))}d\zeta.
\end{equation}
As $s\to+\infty$, the  major contribution to the asymptotics of the last integral in \eqref{eq:LAsy0} arises from the neighborhood of $\zeta=1$. Therefore, we have as $s\to+\infty$
\begin{equation}\label{eq:LAsy1}
\int_0^{1}\frac{\exp(s(\zeta^2-1))}{1+\exp(s(\zeta^2-1))}d\zeta=\frac{1}{4s}+O(1/s^2).
\end{equation}
Similarly, we have 
as $s\to+\infty$
\begin{equation}\label{eq:LAsy2}
\int_{1}^{+\infty}\sigma(\sqrt{s}\zeta;s)d\zeta=\int_{1}^{+\infty} \frac{\exp(-s(\zeta^2-1))}{1+\exp(-s(\zeta^2-1))}d\zeta=\frac{1}{4s}+O(1/s^2).
\end{equation}
Substituting \eqref{eq:LAsy0}-\eqref{eq:LAsy2} into \eqref{eq:L}, we have the asymptotic expansion as $s\to+\infty$
\begin{equation}\label{eq:LAsy}
L(s)=1+O(1/s^2). 
\end{equation}

\begin{lem}\label{lem: lambda0}
Suppose $0<\frac{\pi}{2}\frac{x}{\sqrt{s}}\leq L(s)$, 
then there exists a unique solution $\lambda_0$ to the equation \eqref{eq:lambda01}. 
Moreover, we have the asymptotic expansion as $s\to +\infty$
\begin{equation}\label{eq:lambda0Est}
\lambda_0^2=1-\frac{\pi^2 x^2}{4s}+O(1/s^2),
\end{equation}
 where the error term is uniform for $\frac{\pi}{2}\frac{x}{\sqrt{s}}\in [M, L(s)]$ with any constant $0<M<L(s)$. 
\end{lem}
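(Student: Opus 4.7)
The plan is first to establish existence and uniqueness by monotonicity, then to extract the leading relation $\lambda_0^2\approx 1-l^2$ (with $l=\pi x/(2\sqrt{s})$) by replacing the Fermi factor with its $s\to\infty$ step-function limit, and finally to sharpen the error to $O(1/s^2)$ through a Sommerfeld-type expansion that exploits the reflection symmetry of $\sigma$ about the Fermi surface $y=1$.

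For existence and uniqueness, set $F(\lambda_0):=\int_0^{+\infty}\sigma(\sqrt{s}(\zeta^2+\lambda_0^2)^{1/2};s)\,d\zeta$. Differentiation under the integral sign, together with the fact that $\sigma(y;s)$ is smooth and strictly decreasing in $y\ge 0$, gives $\partial_{\lambda_0}F<0$ for $\lambda_0>0$, so $F$ is continuous and strictly decreasing on $[0,+\infty)$. Combined with $F(0)=L(s)$ and $\lim_{\lambda_0\to+\infty}F(\lambda_0)=0$, the intermediate value theorem produces a unique $\lambda_0\in[0,+\infty)$ solving $F(\lambda_0)=l$ for every $l\in(0,L(s)]$.

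For the asymptotic expansion, write $a:=1-\lambda_0^2$ and perform the successive substitutions $u=(\zeta^2+\lambda_0^2)^{1/2}$ and $v=u^2-1$ in \eqref{eq:lambda01}; using $1/(1+e^{sv})=1-1/(1+e^{-sv})$ on $(-a,0)$ then yields
\[
l=\sqrt{a}+I_2-I_1,\qquad I_2:=\tfrac12\int_0^{+\infty}\frac{dv}{\sqrt{v+a}\,(1+e^{sv})},\qquad I_1:=\tfrac12\int_0^{a}\frac{dv}{\sqrt{a-v}\,(1+e^{sv})}.
\]
For $l\in[M,L(s)]$, the leading relation $\sqrt{a}\approx l$ confines $a$ to a compact subinterval of $(0,1]$ for $s$ large, so the Fermi factor in $I_1$ is exponentially small on $[a/2,a]$ and the upper limit of $I_1$ may be effectively extended to $+\infty$ at cost $O(e^{-sa/2})$. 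Rescaling $v=w/s$, Taylor expanding $(1\pm w/(sa))^{-1/2}=1\mp w/(2sa)+O((w/(sa))^2)$, and using $\int_0^{+\infty}(1+e^w)^{-1}dw=\log 2$ together with $\int_0^{+\infty}w(1+e^w)^{-1}dw=\pi^2/12$ yields
\[
I_2=\frac{\log 2}{2s\sqrt{a}}-\frac{\pi^2}{48\,s^2 a^{3/2}}+O(s^{-3}),\qquad I_1=\frac{\log 2}{2s\sqrt{a}}+\frac{\pi^2}{48\,s^2 a^{3/2}}+O(s^{-3}),
\]
the opposite signs of the $O(1/s^2)$ terms being a direct consequence of the opposite signs of the Taylor coefficients of the two square roots.

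Feeding this into $\sqrt{a}=l-(I_2-I_1)$ and squaring gives $a=l^2+O(1/s^2)$, that is, $\lambda_0^2=1-l^2+O(1/s^2)=1-\pi^2 x^2/(4s)+O(1/s^2)$, as claimed. The main technical point is the cancellation of the $O(1/s)$ pieces of $I_1$ and $I_2$, which promotes the naive $O(1/s)$ bound on $\sqrt{a}-l$ to the announced $O(1/s^2)$; this reflects the symmetry of the Fermi distribution about the transition point $y=1$, and uniformity in $l\in[M,L(s)]$ is automatic once $a$ is bounded away from $0$ on this range.
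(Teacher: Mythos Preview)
Your argument is correct and follows the same plan as the paper: monotonicity for existence and uniqueness, then split the defining integral at the Fermi surface and exploit the cancellation of the two $O(1/s)$ corrections. Your substitution $v=u^2-1$ reduces the Fermi weight to exactly $1/(1+e^{sv})$ and makes the Sommerfeld-type cancellation more transparent than the paper's direct Laplace estimate in the $\zeta$-variable; your intermediate constants $\log 2/(2s\sqrt a)$ for $I_1,I_2$ are in fact the correct ones (the paper's $1/(4s\sqrt{1-\lambda_0^2})$ in \eqref{eq:dgint1}--(5.29) is inaccurate, though harmless since the two terms cancel either way). One phrasing quibble: the upper limit of $I_1$ cannot literally be ``extended to $+\infty$'' because $\sqrt{a-v}$ is undefined for $v>a$; what your subsequent computation actually does---and what is correct---is truncate at $v=a/2$ at cost $O(e^{-sa/2})$, rescale, Taylor expand $(1-w/(sa))^{-1/2}$ on $[0,sa/2]$, and then extend each resulting $w$-integral to $[0,\infty)$ with exponentially small error.
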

\begin{proof}
It follows from \eqref{eq:sigma} that the second integral in \eqref{eq:lambda01} is strictly decreasing from the value $L(s)$ defined in \eqref{eq:L} to zero as $\lambda_0$ increases from zero to positive infinity.  Therefore, there exists a unique positive solution $\lambda_0$ to the equation \eqref{eq:lambda01}  for $0<\frac{\pi}{2}\frac{x}{\sqrt{s}}\leq L(s)$. 

We proceed to derive the asymptotics of $\lambda_0$. For this purpose, we split the first integral in \eqref{eq:lambda01} into three parts
\begin{align}\label{eq:dginfty0}
 \int_{\lambda_0}^{+\infty} \zeta\sigma(\sqrt{s}\zeta;s)\frac{d\zeta}{\sqrt{\zeta^2-\lambda_0^2}} 
 = &\int_{\lambda_0}^{1} \frac{\zeta d\zeta}{\sqrt{\zeta^2-\lambda_0^2}} -\int_{\lambda_0}^{1}\zeta(1-\sigma(\sqrt{s}\zeta;s))\frac{d\zeta}{\sqrt{\zeta^2-\lambda_0^2}}\\ \nonumber
 &+ \int_{1}^{+\infty} \zeta\sigma(\sqrt{s}\zeta;s)\frac{d\zeta}{\sqrt{\zeta^2-\lambda_0^2}}.
 \end{align}
 From \eqref{eq:sigma}, we have as $s\to+\infty$
\begin{align}
\int_{\lambda_0}^{1} \zeta(1-\sigma(\sqrt{s} \zeta;s))\frac{d\zeta}{\sqrt{ \zeta^2-\lambda_0^2}}&=\int_{\lambda_0}^{1} \zeta\sigma(\sqrt{s} \zeta;s)\exp(s(1- \zeta^2))\frac{d \zeta}{\sqrt{ \zeta^2-\lambda_0^2}} \nonumber\\
&=\frac{\sigma(\sqrt{s};s)}{2s\sqrt{1-\lambda_0^2}}+O(1/s^2) \nonumber\\
&=\frac{1}{4s\sqrt{1-\lambda_0^2}}+O(1/s^2),
\label{eq:dgint1}
\end{align}
uniformly for  $0\leq\lambda_0\leq1-\delta$ with any constant $0<\delta<1$. Similarly, we have from \eqref{eq:sigma}
\begin{align}
\int_{1}^{+\infty} \zeta\sigma(\sqrt{s}\zeta;s)\frac{d\zeta}{\sqrt{\zeta^2-\lambda_0^2}}
&=\frac{1}{4s\sqrt{1-\lambda_0^2}}+O(1/s^2).
\end{align}
Therefore, we have as $s\to+\infty$
 \begin{equation}\label{eq:dginfty}
 \int_{\lambda_0}^{+\infty} \zeta\sigma(\sqrt{s}\zeta;s)\frac{d\zeta}{\sqrt{\zeta^2-\lambda_0^2}}  =
 \int_{\lambda_0}^{1} \frac{\zeta d\zeta}{\sqrt{\zeta^2-\lambda_0^2}}+O(1/s^2)= \sqrt{1-\lambda_0^2}+O(1/s^2).\end{equation}
Then, it follows from \eqref{eq:lambda01} and \eqref{eq:dginfty} that
\begin{equation}\label{eqIntAsy}
 \sqrt{1-\lambda_0^2}=\frac{\pi x}{2\sqrt{s}}+O(1/s^2).
\end{equation}
 Therefore, we have \eqref{eq:lambda0Est}.
 \end{proof}
\begin{rem} 
In the regime $x=\frac{2l}{\pi} \sqrt{s} $, $l\in [M_2, 1-M_1s^{-\frac{2}{3}}]$  with $M_1>0$ and $0<M_2<1$  considered in this section, it follows from  \eqref{eq:LAsy} that for sufficiently large  $s$
\begin{equation}\label{eq:LM}
l=\frac{\pi}{2}\frac{x}{\sqrt{s}}<L(s).
\end{equation}
As  a consequence of Lemma \ref{lem: lambda0}, the  endpoint $\lambda_0>0$ exists and is unique for $(x,s)$ in this regime. 
\end{rem}

We now derive the properties of the $g$-function and the $\phi$-function.
\begin{lem}\label{lem:dg}
Suppose  $\frac{\pi}{2}\frac{x}{\sqrt{s}}\in [M, L(s)]$ with any constant $0<M<L(s)$ and $L(s)$ given in \eqref{eq:L}, then we have 
\begin{align}\label{eq:dg}
g'(\lambda)=\frac{2}{\pi i}s\lambda\sqrt{\lambda^2-\lambda_0^2} \int_{\lambda_0}^{+\infty} \frac{ \zeta \sigma(\sqrt{s} \zeta;s)}{\sqrt{ \zeta^2-\lambda_0^2}}\frac{d \zeta}{ \zeta^2-\lambda^2},
\end{align}
where $\sqrt{\lambda^2-\lambda_0^2}$ takes the branch cut along $[-\lambda_0, \lambda_0]$ and behaves like $\lambda$ as $\lambda\to\infty$.
  Furthermore, we have the asymptotic expansion as $\lambda\to\infty$
\begin{equation}\label{eq:dgExpand}
g(\lambda) =  ix\sqrt{s}\lambda+\frac{g_1}{\lambda}+O(1/\lambda^3), \end{equation}
with
\begin{equation}\label{eq:dgExpandg1}
 g_1
 =- \frac{ix\sqrt{s}}{2}\lambda_0^2-\frac{2is}{3\pi} \left(1-\lambda_0^2\right)^{\frac{3}{2}} +O(1/s), \quad s\to+\infty.
 \end{equation}

\end{lem}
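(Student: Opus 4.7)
The plan is to derive the integral representation \eqref{eq:dg} by solving a scalar Riemann-Hilbert problem for $g'(\lambda)$, and then to extract \eqref{eq:dgExpand}-\eqref{eq:dgExpandg1} from this representation together with the endpoint condition \eqref{eq:lambda01} and a boundary-layer analysis of the resulting integral near $\zeta=1$.

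First I differentiate the jump relations \eqref{eq:phase}-\eqref{eq:phase2} and the behavior \eqref{eq:gInfty}, obtaining for $g'(\lambda)$ the scalar RH problem with jumps $g_+'-g_-'=V'$ on $J:=(-\infty,-\lambda_0)\cup(\lambda_0,+\infty)$, $g_+'+g_-'=0$ on $(-\lambda_0,\lambda_0)$, and $g'(\lambda)=ix\sqrt{s}+O(1/\lambda^2)$ at infinity. With $R(\lambda):=\sqrt{\lambda^2-\lambda_0^2}$ fixed as in \eqref{eq:g} (cut on $[-\lambda_0,\lambda_0]$, $R\sim\lambda$ at $\infty$), the ratio $H(\lambda):=g'(\lambda)/R(\lambda)$ is analytic across the cut (since $g_+'+g_-'=0$ cancels the sign flip in $R$), has jump $V'/R$ on $J$, and vanishes at infinity. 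Sokhotski-Plemelj then yields
\begin{equation*}
g'(\lambda)=\frac{R(\lambda)}{2\pi i}\int_{J}\frac{V'(\zeta)}{R(\zeta)(\zeta-\lambda)}\,d\zeta.
\end{equation*}
Substituting $V'(\zeta)=2s\zeta\sigma(\sqrt{s}\zeta;s)$ and $R(\zeta)=\operatorname{sgn}(\zeta)\sqrt{\zeta^2-\lambda_0^2}$ on $J$, and using the evenness of $\sigma(\sqrt{s}\zeta;s)$ in $\zeta$, the substitution $\zeta\mapsto-\zeta$ folds the negative ray onto $(\lambda_0,+\infty)$; the combination $\tfrac{1}{\zeta-\lambda}-\tfrac{1}{\zeta+\lambda}=\tfrac{2\lambda}{\zeta^2-\lambda^2}$ then produces \eqref{eq:dg}.

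For the expansion at infinity I expand $\tfrac{1}{\zeta^2-\lambda^2}=-\sum_{k\ge 0}\zeta^{2k}\lambda^{-2k-2}$ in \eqref{eq:dg} together with $R(\lambda)/\lambda=1-\lambda_0^2/(2\lambda^2)+O(1/\lambda^4)$, which gives
\begin{equation*}
g'(\lambda)=-\frac{2s}{\pi i}\Bigl[M_0+\frac{M_1-\lambda_0^2 M_0/2}{\lambda^2}\Bigr]+O\!\left(\frac{1}{\lambda^4}\right),\quad M_k:=\int_{\lambda_0}^{+\infty}\frac{\zeta^{2k+1}\sigma(\sqrt{s}\zeta;s)}{\sqrt{\zeta^2-\lambda_0^2}}\,d\zeta.
\end{equation*}
The endpoint equation \eqref{eq:lambda01} is precisely $M_0=\pi x/(2\sqrt{s})$, which both reproduces the leading term $ix\sqrt{s}$ and confirms \eqref{eq:dgExpand}. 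Integrating (with constant of integration fixed by \eqref{eq:gInfty}) yields $g_1=-\tfrac{2is}{\pi}(M_1-\lambda_0^2 M_0/2)$. Rewriting $\zeta^3=\zeta(\zeta^2-\lambda_0^2)+\lambda_0^2\zeta$ inside $M_1$ produces $M_1=\int_{\lambda_0}^{+\infty}\zeta\sqrt{\zeta^2-\lambda_0^2}\,\sigma(\sqrt{s}\zeta;s)\,d\zeta+\lambda_0^2 M_0$, so
\begin{equation*}
g_1=-\frac{ix\sqrt{s}}{2}\lambda_0^2-\frac{2is}{\pi}\int_{\lambda_0}^{+\infty}\zeta\sqrt{\zeta^2-\lambda_0^2}\,\sigma(\sqrt{s}\zeta;s)\,d\zeta.
\end{equation*}

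Finally I estimate the remaining integral along the lines of Lemma \ref{lem: lambda0}: splitting at $\zeta=1$ and extracting $\int_{\lambda_0}^{1}\zeta\sqrt{\zeta^2-\lambda_0^2}\,d\zeta=(1-\lambda_0^2)^{3/2}/3$ leaves two boundary-layer pieces, namely $-\int_{\lambda_0}^1\zeta\sqrt{\zeta^2-\lambda_0^2}(1-\sigma)\,d\zeta$ and $\int_1^{+\infty}\zeta\sqrt{\zeta^2-\lambda_0^2}\sigma\,d\zeta$. Laplace-type analysis at $\zeta=1$, where $\zeta\sqrt{\zeta^2-\lambda_0^2}\approx\sqrt{1-\lambda_0^2}$, shows each piece has the form $\pm\sqrt{1-\lambda_0^2}\cdot\tfrac{\log 2}{2s}+O(1/s^2)$ with opposite signs, so they cancel at order $1/s$, leaving $(1-\lambda_0^2)^{3/2}/3$ as the closed-form main term with net $O(1/s^2)$ error in the integral and thus $O(1/s)$ error in $g_1$, which is \eqref{eq:dgExpandg1}. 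The main technical point is uniformity as $l\to 1^-$, where by \eqref{eq:lambda0Est} the branch point $\lambda_0$ becomes small but, crucially, remains bounded away from $\zeta=1$; this separation makes the Laplace expansion uniform throughout $l\in[M_2,1-M_1 s^{-2/3}]$.
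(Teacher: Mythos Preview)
Your proof is correct and follows essentially the same path as the paper. The one genuine variation is in the derivation of \eqref{eq:dg}: the paper differentiates the explicit representation \eqref{eq:g} for $g$ and then simplifies via integration by parts together with the endpoint condition \eqref{eq:lambda0} (see \eqref{eq:dgIni}--\eqref{eq:dgBypart}), whereas you differentiate the defining jump conditions \eqref{eq:phase}--\eqref{eq:phase2} and solve the resulting scalar RH problem for $g'/\sqrt{\lambda^2-\lambda_0^2}$ directly by Sokhotski--Plemelj. Your route is a little cleaner, as it bypasses the somewhat opaque integration by parts in \eqref{eq:dgBypart}; the paper's route has the advantage of staying entirely within the explicit formula \eqref{eq:g}, which it already had in hand. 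From that point on (the large-$\lambda$ expansion leading to \eqref{eq: g1} and the boundary-layer estimate producing \eqref{eq:dgExpandg1}) the two arguments coincide; your Laplace computation at $\zeta=1$ makes explicit the $O(1/s)$ cancellation that the paper records more tersely in \eqref{eq: g1estI} by invoking the analogous calculation from Lemma~\ref{lem: lambda0}.
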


\begin{proof} 

From \eqref{eq:g}, we have 
\begin{align}\label{eq:dgIni}
g'(\lambda)=&\frac{\lambda}{\sqrt{\lambda^2-\lambda_0^2}}  \left(ix\sqrt{s}+\frac{1}{\pi i}\int_{\lambda_0}^{+\infty}\frac{V(\zeta)-V(\lambda_0)}{\sqrt{\zeta^2-\lambda_0^2}}\frac{\zeta d\zeta}{ \zeta^2-\lambda^2}\right)\nonumber\\
&+\frac{ 1}{\pi i}\lambda\sqrt{\lambda^2-\lambda_0^2}\int_{\lambda_0}^{+\infty}\frac{V(\zeta)-V(\lambda_0)}{\sqrt{\zeta^2-\lambda_0^2}}\frac{2\zeta d\zeta}{( \zeta^2-\lambda^2)^2}.
\end{align}
Using \eqref{eq:lambda0} and applying integration by parts, we have
\begin{align}\label{eq:dgBypart}
ix\sqrt{s}+&\frac{1}{\pi i}\int_{\lambda_0}^{+\infty}\frac{V(\zeta)-V(\lambda_0)}{\sqrt{\zeta^2-\lambda_0^2}}\frac{\zeta d\zeta}{ \zeta^2-\lambda^2}= \frac{\lambda^2-\lambda_0^2}{\pi i}\int_{\lambda_0}^{+\infty}\frac{V(\zeta)-V(\lambda_0)}{(\zeta^2-\lambda_0^2)^{3/2}}\frac{\zeta d\zeta}{ \zeta^2-\lambda^2}\nonumber\\
&=\frac{\lambda^2-\lambda_0^2}{\pi i}\left(\int_{\lambda_0}^{+\infty}\frac{V'(\zeta)}{\sqrt{\zeta^2-\lambda_0^2}}\frac{ d\zeta}{ \zeta^2-\lambda^2}-\int_{\lambda_0}^{+\infty}\frac{V(\zeta)-V(\lambda_0)}{\sqrt{\zeta^2-\lambda_0^2}}\frac{2\zeta d\zeta}{( \zeta^2-\lambda^2)^2}\right).
\end{align}
From  \eqref{eq:derV}, \eqref{eq:dgIni} and \eqref{eq:dgBypart}, we have \eqref{eq:dg}. 

We proceed to derive the asymptotic expansion of $g(\lambda)$ as $\lambda\to\infty$. In the expression for $g$ given in \eqref{eq:g}, the square function $\sqrt{\lambda^2-\lambda_0^2}$ is analytic in $\mathbb{C}\setminus[-\lambda_0, \lambda_0]$ and has the asymptotic expansion as $\lambda\to\infty$
\begin{equation}\label{eq: squarefuctionExpand}
\sqrt{\lambda^2-\lambda_0^2}=\lambda \left(1-\frac{\lambda_0^2}{2\lambda^2}+O(1/\lambda^4)\right).\end{equation}
We also have  the asymptotic expansion of the integral in \eqref{eq:g} as $\lambda\to\infty$
\begin{equation}\label{eq: gIntegralExpand}
\frac{1}{\pi i}\int_{\lambda_0}^{+\infty}\frac{V(\zeta)-V(\lambda_0)}{\sqrt{\zeta^2-\lambda_0^2}}\frac{\zeta d\zeta}{ \zeta^2-\lambda^2}=-\frac{1}{\pi i\lambda^2} \int_{\lambda_0}^{+\infty}\frac{V(\zeta)-V(\lambda_0)}{\sqrt{\zeta^2-\lambda_0^2}}\zeta d\zeta +O(1/\lambda^4).\end{equation}
Substituting \eqref{eq: squarefuctionExpand} and  \eqref{eq: gIntegralExpand} into  \eqref{eq:g}, we see that  $g(\lambda)$ has expansion of the form given in \eqref{eq:dgExpand} with certain coefficient $g_1$ as $\lambda\to\infty$. To derive the coefficient $g_1$,  we use \eqref{eq:lambda01} to rewrite the equation  \eqref{eq:dg} as follows:
\begin{equation}\label{eq:dg1}
g'(\lambda)=\frac{ix \sqrt{s} \lambda }{  \sqrt{\lambda^2-\lambda_0^2}} \left(1-\frac{2\sqrt{s}}{\pi x} \int_{\lambda_0}^{+\infty} \sigma(\sqrt{s}\zeta;s)\sqrt{\zeta^2-\lambda_0^2}\frac{\zeta d\zeta}{\zeta^2-\lambda^2} \right).\end{equation}
From  \eqref{eq:g} and \eqref{eq:dg1}, we have
\begin{equation}\label{eq: g1}
 g_1= -\frac{ix\sqrt{s}}{2}\lambda_0^2-\frac{2is}{\pi}\int_{\lambda_0}^{+\infty}\zeta (\zeta ^2-\lambda_0^2)^{\frac{1}{2}}\sigma(\sqrt{s}\zeta;s )d\zeta.\end{equation}
Similar to \eqref{eq:dginfty}, we have 
\begin{equation}\label{eq: g1estI}
\int_{\lambda_0}^{+\infty}\zeta (\zeta ^2-\lambda_0^2)^{\frac{1}{2}}\sigma(\sqrt{s}\zeta;s )d\zeta=\int_{\lambda_0}^{1}\zeta (\zeta ^2-\lambda_0^2)^{\frac{1}{2}}d\zeta+O(1/s^2).\end{equation}
This, together with \eqref{eq: g1}, implies \eqref{eq:dgExpandg1}.
\end{proof}

\begin{lem}
Under the assumptions of Lemma \ref{lem:dg}, we have
\begin{equation}\label{eq:phisign1}
\phi(\lambda)
= \left(\lambda^2-\lambda_0^2\right)^{\frac{3}{2}}(c_1(s)+O(\lambda^2-\lambda_0^2)), \quad \lambda\to\pm\lambda_0,
\end{equation}
where $\arg (\lambda\pm\lambda_0)\in(-\pi,\pi)$ and the coefficient
\begin{equation}\label{eq:coef1}
c_1(s)=  \frac{2is}{3\pi \sqrt{1-\lambda_0^2}}+O(1/s), \quad  s\to+\infty.
\end{equation}
\end{lem}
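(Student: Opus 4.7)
The plan is to factor $\phi'(\lambda)$ as $\sqrt{\lambda^2-\lambda_0^2}$ times a function analytic near $\pm\lambda_0$, then integrate from the endpoints and Taylor-expand. Differentiating the jump conditions \eqref{eq:phase}--\eqref{eq:phase2} for $g$ gives $g'_+-g'_-=V'$ on $J:=(-\infty,-\lambda_0)\cup(\lambda_0,+\infty)$ and $g'_++g'_-=0$ on $(-\lambda_0,\lambda_0)$. Combined with the definition $\phi=g\mp\tfrac12(V-V(\lambda_0))$ for $\pm\Im\lambda>0$ and the fact that $V$ is real-analytic on $\mathbb{R}$, these yield $\phi'_+=\phi'_-$ across $J$ (so $\phi'$ continues analytically there) and $\phi'_++\phi'_-=0$ on $(-\lambda_0,\lambda_0)$. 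Since the latter matches exactly the jump behavior of $\sqrt{\lambda^2-\lambda_0^2}$ with cut $[-\lambda_0,\lambda_0]$, the ratio $\phi'(\lambda)/\sqrt{\lambda^2-\lambda_0^2}$ has no jump on $\mathbb{R}\setminus\{\pm\lambda_0\}$ and is therefore analytic on $\mathbb{C}\setminus\{\pm\lambda_0\}$.

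To show that $\pm\lambda_0$ are removable singularities and to identify the leading coefficient, I change variable $\tilde\eta=\zeta^2-\lambda_0^2$ in \eqref{eq:dg} and set $\beta=\lambda^2-\lambda_0^2$. With $\psi(\tilde\eta):=\tfrac12\sigma(\sqrt{s(\tilde\eta+\lambda_0^2)};s)$, which is analytic in $\tilde\eta$ because $\sigma(\sqrt{z};s)=(1+e^{z-s})^{-1}$ depends only on the square of its first argument, the defining integral $I(\lambda)$ from \eqref{eq:dg} becomes
\begin{equation*}
I(\lambda)=\int_0^\infty\frac{\psi(\tilde\eta)}{\sqrt{\tilde\eta}\,(\tilde\eta-\beta)}\,d\tilde\eta=\frac{i\pi\,\psi(\beta)}{\sqrt{\beta}}+A(\beta),
\end{equation*}
where $A(\beta)=\int_0^\infty\tilde\eta^{-1/2}(\tilde\eta-\beta)^{-1}(\psi(\tilde\eta)-\psi(\beta))\,d\tilde\eta$ is analytic at $\beta=0$ and the elementary evaluation $\int_0^\infty\tilde\eta^{-1/2}(\tilde\eta-\beta)^{-1}d\tilde\eta=i\pi/\sqrt\beta$ uses the branch of $\sqrt\beta$ matching $\sqrt{\lambda^2-\lambda_0^2}$ in \eqref{eq:g}. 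Substituting into $g'(\lambda)=\frac{2s\lambda\sqrt\beta}{\pi i}I(\lambda)$ and noting that $2s\lambda\,\psi(\beta)=s\lambda\,\sigma(\sqrt{s}\lambda;s)=V'(\lambda)/2$, the singular $1/\sqrt\beta$ piece cancels $V'(\lambda)/2$ exactly, leaving
\begin{equation*}
\phi'(\lambda)=\sqrt{\lambda^2-\lambda_0^2}\,r(\lambda),\qquad r(\lambda):=\frac{2s\lambda}{\pi i}A(\lambda^2-\lambda_0^2),
\end{equation*}
with $r$ analytic near $\pm\lambda_0$.

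Since $g(\lambda_0)=0$ from \eqref{eq:g}, \eqref{eq: phi} gives $\phi(\lambda_0)=0$, so integrating $\phi'=\sqrt{\lambda^2-\lambda_0^2}\,r$ from $\lambda_0$ and Taylor-expanding $r$ (after the substitution $\tau=\zeta^2-\lambda_0^2$) yields \eqref{eq:phisign1} with $c_1(s)=r(\lambda_0)/(3\lambda_0)=\frac{2s}{3\pi i}A(0)$. To obtain the large-$s$ behavior of $A(0)$, substitute back to $\zeta$ to write
\begin{equation*}
A(0)=\int_{\lambda_0}^\infty\frac{\zeta\bigl(\sigma(\sqrt{s}\zeta;s)-\sigma(\sqrt{s}\lambda_0;s)\bigr)}{(\zeta^2-\lambda_0^2)^{3/2}}\,d\zeta,
\end{equation*}
and integrate by parts using $V'(\zeta)=2s\zeta\sigma(\sqrt{s}\zeta;s)$ and the identity $\sigma(1-\sigma)=1/(4\cosh^2(s(\zeta^2-1)/2))$; the boundary contributions vanish because $\sigma(\sqrt{s}\zeta;s)-\sigma(\sqrt{s}\lambda_0;s)=O(\zeta-\lambda_0)$ near $\lambda_0$. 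This gives $A(0)=-2s\int_{\lambda_0}^\infty\frac{\zeta}{\sqrt{\zeta^2-\lambda_0^2}}\cdot\frac{d\zeta}{4\cosh^2(s(\zeta^2-1)/2)}$. After the substitution $u=s(\zeta^2-1)/2$ the integrand is exponentially concentrated at $\zeta=1$, where the prefactor $\zeta/\sqrt{\zeta^2-\lambda_0^2}$ equals $1/\sqrt{1-\lambda_0^2}$, which is bounded because $1-\lambda_0^2\geq M_2^2$ by Lemma \ref{lem: lambda0}. Combined with $\int_{\mathbb{R}}\cosh^{-2}u\,du=2$, Laplace's method delivers $A(0)=-1/\sqrt{1-\lambda_0^2}+O(1/s^2)$, which yields \eqref{eq:coef1}. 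The case $\lambda\to -\lambda_0$ is identical using $V(-\lambda)=V(\lambda)$ and the oddness of $\sqrt{\lambda^2-\lambda_0^2}$ across the cut.

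The principal obstacle I anticipate is branch bookkeeping in the extraction step: one must verify that the $\sqrt\beta$ appearing in $\int_0^\infty\tilde\eta^{-1/2}(\tilde\eta-\beta)^{-1}d\tilde\eta=i\pi/\sqrt\beta$ is consistent with the $\sqrt{\lambda^2-\lambda_0^2}$ branch used to define $g$ in both half-planes, so that the cancellation against $V'/2$ is sign-exact rather than off by a factor of $\pm1$. A secondary delicate point is the uniformity of the Laplace estimate over the range $l\in[M_2,1-M_1 s^{-2/3}]$: although $1-\lambda_0^2\geq M_2^2$ keeps the limiting prefactor bounded, $\lambda_0$ itself can shrink like $s^{-1/3}$, so one must check that the $\cosh^{-2}$ weight dominates $\lambda_0$-dependent fluctuations of $\zeta/\sqrt{\zeta^2-\lambda_0^2}$ uniformly on $[\lambda_0,\infty)$.
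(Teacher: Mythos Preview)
Your proof is correct and, for the structural part \eqref{eq:phisign1}, is more explicit than the paper's argument. The paper simply cites \eqref{eq:g}, \eqref{eq: phi} and the endpoint condition \eqref{eq:lambda0} and asserts that the $(\lambda^2-\lambda_0^2)^{3/2}$ expansion holds; your subtraction argument, writing $I(\lambda)=i\pi\psi(\beta)/\sqrt\beta+A(\beta)$ so that the singular piece cancels $V'/2$ exactly and leaves $\phi'=\sqrt{\lambda^2-\lambda_0^2}\,r(\lambda)$ with $r$ analytic, makes this transparent. Your worry about branch signs is justified but resolves cleanly: for $\Im\lambda<0$ the elementary integral evaluates to $-i\pi/\sqrt\beta$ with the paper's branch of $\sqrt{\lambda^2-\lambda_0^2}$, and this sign flip is precisely what compensates the flip in $\phi=g\pm\tfrac12(V-V(\lambda_0))$, so the same $r$ works in both half-planes.

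For the coefficient \eqref{eq:coef1} the two arguments genuinely diverge. The paper replaces $\sigma(\sqrt{s}\,\cdot\,;s)$ by $\chi_{(-1,1)}$ directly in \eqref{eq:dg}, evaluates the resulting integral in closed form as a logarithm \eqref{eq:I0}, and reads off $c_1(s)$ from the Taylor expansion \eqref{eq:g0behavior}. You instead integrate $A(0)$ by parts to an integral against $\cosh^{-2}\bigl(s(\zeta^2-1)/2\bigr)$ and apply Laplace's method. The paper's route is shorter because it exploits an explicit antiderivative; your route is more robust in that it would adapt to weights other than the Fermi factor, and it makes the $O(1/s)$ error visible through the symmetry $\int_{\mathbb R}u\cosh^{-2}u\,du=0$ after the lower limit is sent to $-\infty$ at exponentially small cost. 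Regarding your second concern, since $1-\lambda_0^2=l^2+O(1/s^2)\geq M_2^2/2$ on the stated range, the prefactor $\zeta/\sqrt{\zeta^2-\lambda_0^2}$ and its $\zeta$-derivatives are uniformly bounded near $\zeta=1$, so the Laplace estimate is uniform in $l$ without further work.
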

\begin{proof}
From \eqref{eq:g}, \eqref{eq: phi} and \eqref{eq:lambda0},  it follows that $\phi(\lambda)$ has  asymptotic expansion of the form \eqref{eq:phisign1}. We proceed to  derive the asymptotics of the coefficient $c_1(s)$ in \eqref{eq:phisign1} as $s\to+\infty$.  For convenience, we suppose $|\lambda\pm \lambda_0|<\delta$ with a sufficiently small $\delta>0$ such that $ \lambda_0+\delta<1$. Similar to  \eqref{eq:dginfty},  we obtain, after replacing $\sigma(\sqrt{s}\lambda;s)$ by  $\chi_{(-1,1)}(\lambda)$ in  \eqref{eq:dg},  the following asymptotics:
\begin{equation}\label{eq:dglarges}
g'(\lambda) =I_{0}(\lambda)(1+O(1/s^2)), 
\end{equation}
where
\begin{equation}\label{eq:dg0}
I_0(\lambda)=\frac{2}{\pi i}s\lambda\sqrt{\lambda^2-\lambda_0^2} \int_{\lambda_0}^{1} \frac{\zeta}{\sqrt{\zeta^2-\lambda_0^2}(\zeta^2-\lambda^2)}d\zeta,
\end{equation}
and the square function $\sqrt{\lambda^2-\lambda_0^2}$ takes the branch cut along $[-\lambda_0, \lambda_0]$ and behaves like $\lambda$ as $\lambda\to\infty$. 
 By a change of variables, we have 
\begin{align}\label{eq:intlarges}
\int_{\lambda_0}^{1} \frac{\zeta }{\sqrt{\zeta^2-\lambda_0^2}(\zeta^2-\lambda^2)}d\zeta&=\int_{0}^{\sqrt{1-\lambda_0^2}}\frac{1}{\zeta^2+\lambda_0^2-\lambda^2}d\zeta
\nonumber\\
&=\frac{1}{2\sqrt{\lambda^2-\lambda_0^2}} \log\left(\frac{\sqrt{\lambda^2-\lambda_0^2}-\sqrt{1-\lambda_0^2}}{\sqrt{\lambda^2-\lambda_0^2}+\sqrt{1-\lambda_0^2}}\right),
\end{align}
where the logarithm takes the principal branch.  
Therefore, we have
\begin{equation}\label{eq:I0}
I_0(\lambda) =\frac{1}{\pi i}s\lambda \log\left(\frac{\sqrt{\lambda^2-\lambda_0^2}-\sqrt{1-\lambda_0^2}}{\sqrt{\lambda^2-\lambda_0^2}+\sqrt{1-\lambda_0^2}}\right). \end{equation}
From  \eqref{eq:I0}, we have the asymptotic expansion  \begin{equation}\label{eq:g0behavior}
I_0(\lambda)-s\lambda =- \frac{2s\lambda }{\pi i} \sqrt{\frac{\lambda^2-\lambda_0^2}{1-\lambda_0^2}}\sum_{k=0}^{\infty}\frac{1}{2k+1} \left(\frac{\lambda^2-\lambda_0^2}{1-\lambda_0^2}\right)^{k}.\end{equation}
Integrating  both sides of \eqref{eq:g0behavior} and comparing it with \eqref{eq: phi} and \eqref{eq:phisign1}, we have \eqref{eq:coef1}.
\end{proof}

\begin{lem}\label{lem:phiEst}
Under the assumptions of Lemma \ref{lem:dg}, there exists some constant $c>0$ such that
\begin{equation}\label{eq:phiEst1}
\phi_{+}(\lambda)\geq cs \left|\lambda_0^2-\lambda^2\right|^{\frac{3}{2}}, \quad \lambda\in(-\lambda_0, \lambda_0),
\end{equation} 
\begin{equation}\label{eq:phiEst2}
\Re \phi(\lambda)
\leq  -cs |\lambda^2-\lambda_0^2|^{\frac{3}{2}} (|\lambda|+1)^{-2}, \quad \lambda\in \Sigma_1\cup\Sigma_2,\end{equation}
and
\begin{equation}\label{eq:phiEst3}
\Re \phi(\lambda)
\geq cs |\lambda^2-\lambda_0^2|^{\frac{3}{2}} (|\lambda|+1)^{-2}, \quad \lambda\in \Sigma_3\cup\Sigma_4.\end{equation} 
\end{lem}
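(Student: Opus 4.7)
The plan is to derive all three estimates from the local expansion \eqref{eq:phisign1}--\eqref{eq:coef1} near $\pm\lambda_0$ combined with global information from the integral formula \eqref{eq:dg} and the large-$\lambda$ asymptotics \eqref{eq:dgExpand}. The key structural fact is that $\phi$ is analytic across $J=(-\infty,-\lambda_0)\cup(\lambda_0,\infty)$ (so $\Re\phi$ is harmonic in $\mathbb{C}\setminus[-\lambda_0,\lambda_0]$) while having a controlled jump on the cut, and the contours $\Sigma_k$ coincide near $\pm\lambda_0$ with steepest-descent directions for $\Re\phi$.

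For \eqref{eq:phiEst1}, I would first show that $\phi_+$ is real on the gap. On the upper boundary of $(-\lambda_0,\lambda_0)$ one has $\sqrt{\lambda^2-\lambda_0^2}_+=i\sqrt{\lambda_0^2-\lambda^2}$, and the integral in \eqref{eq:dg} is real because its singularities (at $\pm\lambda$) lie outside the gap, so $\phi'_+=g'_+-s\lambda\sigma(\sqrt{s}\lambda;s)$ is real; integrating from $\phi_+(\pm\lambda_0)=0$ makes $\phi_+$ itself real on the gap. Using $(\lambda^2-\lambda_0^2)_+^{3/2}=-i|\lambda_0^2-\lambda^2|^{3/2}$ (from $\arg(\lambda-\lambda_0)_+=\pi$, $\arg(\lambda+\lambda_0)_+=0$) in \eqref{eq:phisign1}, the leading behaviour near the endpoints is $\phi_+(\lambda)\sim\tfrac{2s}{3\pi\sqrt{1-\lambda_0^2}}|\lambda_0^2-\lambda^2|^{3/2}$. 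To upgrade this to a uniform bound on the entire gap, I would replace $\sigma(\sqrt{s}\zeta;s)$ by $\chi_{(-1,1)}(\zeta)$ in \eqref{eq:dg} by way of \eqref{eq:dglarges}, reducing to the explicit leading approximation $\phi_{0,+}$ computable from \eqref{eq:I0}; a direct elementary check shows $\phi_{0,+}\geq cs|\lambda_0^2-\lambda^2|^{3/2}$, while the relative error $\phi_+-\phi_{0,+}=O(s^{-2})\,\phi_{0,+}$ is negligible.

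For \eqref{eq:phiEst2}--\eqref{eq:phiEst3}, the contours $\Sigma_k$ lie on the hyperbola $\Re(\lambda^2-\lambda_0^2)=0$. Close to $\pm\lambda_0$, the branch of $(\lambda^2-\lambda_0^2)^{3/2}$ specified in \eqref{eq:phisign1} via $\arg(\lambda\pm\lambda_0)\in(-\pi,\pi)$ evaluates to $|\lambda^2-\lambda_0^2|^{3/2}e^{i\theta}$ with $\theta=3\pi/4,\,\pi/4,\,-\pi/4,\,-3\pi/4$ on $\Sigma_1,\Sigma_2,\Sigma_3,\Sigma_4$ respectively. Multiplying by $c_1(s)\sim\tfrac{2is}{3\pi\sqrt{1-\lambda_0^2}}$ and taking real parts gives $\Re\phi\sim-\tfrac{\sqrt{2}\,s}{3\pi\sqrt{1-\lambda_0^2}}|\lambda^2-\lambda_0^2|^{3/2}$ on $\Sigma_1\cup\Sigma_2$ and the opposite sign on $\Sigma_3\cup\Sigma_4$, yielding the claimed sign and order near the endpoints. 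For the global bound with the decay factor $(|\lambda|+1)^{-2}$, I would combine this local information with two global facts: $\Re\phi=0$ on $J$, which follows from \eqref{eq:phase} together with the anti-reflection symmetry $g(\bar\lambda)=-\overline{g(\lambda)}$ readable off \eqref{eq:g} (giving $2\Re g_+=V-V(\lambda_0)$ on $J$); and the far-field behaviour $\Re\phi\sim-x\sqrt{s}\,\Im\lambda$ from \eqref{eq:dgExpand}. Writing $\phi(\lambda)=\sqrt{\lambda^2-\lambda_0^2}\,h(\lambda)$ with $h$ analytic in $\mathbb{C}\setminus[-\lambda_0,\lambda_0]$ and represented by an integral of the form \eqref{eq:g}, one evaluates $\Re\phi$ along the level-curve parametrization of $\Sigma_k$ and reads off the required estimate.

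The main obstacle will be the uniform quantitative control along the full hyperbolic contour, joining the near-endpoint $|\lambda^2-\lambda_0^2|^{3/2}$ regime to the far-field linear regime so that the constant $c>0$ is uniform over the parameter range $\tfrac{\pi}{2}x/\sqrt{s}\in[M_2,\,1-M_1s^{-2/3}]$. The dependence on $\lambda_0$, which can approach $1$ (for $l\to 0$) or degenerate as $1-\lambda_0^2\to 0$ (for $l\to 1$), requires especially careful bookkeeping, and the decay exponent $2$ in $(|\lambda|+1)^{-2}$ must be extracted from a direct estimate of the integral formula for $h$ rather than from a purely potential-theoretic argument.
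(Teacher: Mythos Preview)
Your outline is plausible in spirit but diverges from the paper's route, and the part you flag as ``the main obstacle'' is exactly where the paper proceeds differently and more concretely.

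For \eqref{eq:phiEst1}, rather than approximating $\sigma(\sqrt{s}\,\cdot\,;s)$ by $\chi_{(-1,1)}$ and checking an explicit model function, the paper works directly with the exact formula. Using the elementary identity
\[
\int_{\lambda_0}^{+\infty}\frac{\zeta\,d\zeta}{\sqrt{\zeta^2-\lambda_0^2}\,(\zeta^2-\lambda^2)}=\frac{\pi}{2\sqrt{\lambda_0^2-\lambda^2}},\qquad \lambda\in(-\lambda_0,\lambda_0),
\]
together with \eqref{eq:dg} and \eqref{eq:derV}, one obtains
\[
\phi'_+(\lambda)=\frac{2s\lambda}{\pi}\sqrt{\lambda_0^2-\lambda^2}\int_{\lambda_0}^{+\infty}\frac{\zeta\bigl(\sigma(\sqrt{s}\zeta;s)-\sigma(\sqrt{s}\lambda;s)\bigr)}{\sqrt{\zeta^2-\lambda_0^2}\,(\zeta^2-\lambda^2)}\,d\zeta,
\]
and the integrand is strictly negative by monotonicity of $\sigma$. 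Restricting the integration to $[1,\infty)$ and using $\sigma(\sqrt{s}\lambda_0;s)\to 1$ gives $\phi'_+(\lambda)\le -\tfrac{2}{3\pi}s\lambda\sqrt{\lambda_0^2-\lambda^2}$ for $0\le\lambda<\lambda_0$, and \eqref{eq:phiEst1} follows by integration from $\pm\lambda_0$. This avoids any uniform control on the error $\phi_+-\phi_{0,+}$ that your route would require (note that \eqref{eq:dglarges} was only stated near $\pm\lambda_0$).

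For \eqref{eq:phiEst2}--\eqref{eq:phiEst3}, your ingredients ($\Re\phi=0$ on $J$, the endpoint expansion, the far-field behaviour) are all correct, but you do not say how to pass from these boundary-type data to a uniform lower bound of $-\Re\phi$ along the entire hyperbola; harmonicity alone does not suffice since $\Sigma_k$ is not the boundary of a bounded region, and your last sentence (``evaluate $\Re\phi$ along the level-curve parametrization and read off the estimate'') is precisely the step that needs doing. The paper does it directly: on $\Sigma_1\cup\Sigma_2$ one has $\Re\lambda^2=\lambda_0^2$, so $\Re(V(\lambda_0)-V(\lambda))\to 0$ and it suffices to bound $\Re g$. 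Integrating \eqref{eq:dg} by parts and then substituting $\eta^2-\lambda_0^2=i\tau$ along $\Sigma_1$ yields an explicit double integral
\[
\Re g(\lambda)=-\frac{s}{2\pi}\int_0^{|\lambda^2-\lambda_0^2|}\!\!\int_{\lambda_0}^{+\infty}\log\Bigl|\frac{e^{i\pi/4}\sqrt{\tau}-\sqrt{\zeta^2-\lambda_0^2}}{e^{i\pi/4}\sqrt{\tau}+\sqrt{\zeta^2-\lambda_0^2}}\Bigr|\,d\sigma(\sqrt{s}\zeta;s)\,d\tau,
\]
in which both the integrand and $d\sigma$ are negative; cutting the inner integral to $[\lambda_0,1]$ and replacing $\sigma$ by its step-function limit gives an explicit logarithmic integrand whose $\tau$-integral has the required $|\lambda^2-\lambda_0^2|^{3/2}(|\lambda|+1)^{-2}$ behaviour. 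Finally \eqref{eq:phiEst3} follows from the symmetry $\phi(\lambda)=-\phi(-\lambda)$. This computation is what actually produces the decay exponent $2$ and the uniform constant; your proposed factorisation $\phi=\sqrt{\lambda^2-\lambda_0^2}\,h$ does not by itself deliver it without an equivalent calculation.
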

\begin{proof}
For $-\lambda_0<\lambda<\lambda_0$,  after a change of variables $\eta=\sqrt{\zeta^2-\lambda_0^2}$, we obtain
\begin{equation}\label{eq:Identity}
\int_{\lambda_0}^{+\infty} \frac{\zeta }{\sqrt{\zeta^2-\lambda_0^2}(\zeta^2-\lambda^2)}d\zeta=\int_{0}^{+\infty} \frac{1 }{\eta^2+\lambda_0^2-\lambda^2}d\eta=\frac{\pi }{2\sqrt{\lambda_0^2-\lambda^2}}.
\end{equation}
This, together with \eqref{eq:derV} and \eqref{eq:dg}, implies
\begin{align}\label{eq:dphi0}
\phi_{+}'(\lambda)=g'_{+}(\lambda)-\frac{1}{2}V'(\lambda)=\frac{2}{\pi  }s\lambda\sqrt{\lambda_0^2-\lambda^2} \int_{\lambda_0}^{+\infty} \frac{\zeta(\sigma(\sqrt{s}\zeta;s)-\sigma(\sqrt{s}\lambda;s))}{\sqrt{\zeta^2-\lambda_0^2}(\zeta^2-\lambda^2)}d\zeta.\end{align}
It is seen from \eqref{eq:sigma} that the integrand in \eqref{eq:dphi0} is negative. Therefore,  from \eqref{eq:sigma} and \eqref{eq:lambda0Est}, we have
\begin{align}\label{eq:Inq}
\int_{\lambda_0}^{+\infty} \frac{\zeta(\sigma(\sqrt{s}\zeta;s)-\sigma(\sqrt{s}\lambda;s))}{\sqrt{\zeta^2-\lambda_0^2}(\zeta^2-\lambda^2)}d\zeta
&\leq \int_{1}^{+\infty} \frac{\zeta(\sigma(\sqrt{s}\zeta;s)-\sigma(\sqrt{s}\lambda;s)) }{\sqrt{\zeta^2-\lambda_0^2}(\zeta^2-\lambda^2)}d\zeta \nonumber\\
&\leq (\sigma(\sqrt{s};s)-\sigma(\sqrt{s}\lambda_0;s))\int_{1}^{+\infty} \frac{\zeta }{\sqrt{\zeta^2-\lambda_0^2}(\zeta^2-\lambda^2)}d\zeta\nonumber\\
&\leq \left(\frac{1}{2}-\sigma(\sqrt{s}\lambda_0;s)\right)\int_{1}^{+\infty} \frac{1 }{\zeta^2}d\zeta\nonumber\\
&\leq -\frac{1}{3},
\end{align}
for $s$ sufficiently large. To get the last inequality, we use the fact that $\sigma(\sqrt{s}\lambda_0;s)\to 1$ as $s\to+\infty$.
Therefore, we have 
\begin{equation}\label{eq:dphiEst1}
\phi_{+}'(\lambda)\leq -\frac{2}{3\pi}s\lambda\sqrt{\lambda_0^2-\lambda^2}, \quad 0\leq \lambda<\lambda_0,\end{equation}
and 
\begin{equation}\label{eq:dphiEst2}
\phi_{+}'(\lambda)\geq -\frac{2}{3\pi}s\lambda\sqrt{\lambda_0^2-\lambda^2}, \quad -\lambda_0<\lambda<0.\end{equation}
Since $\phi(\pm \lambda_0)=0$, we have from \eqref{eq:dg}-\eqref{eq:dphiEst2} the estimate \eqref{eq:phiEst1}.

We proceed to derive  \eqref{eq:phiEst2} and \eqref{eq:phiEst3}. For $\lambda\in\Sigma_1\cup \Sigma_2\cup \Sigma_3\cup \Sigma_4$,  we have $\Re \lambda^2=\lambda_0^2<1$. Therefore, we have as $s\to+\infty$
\begin{equation}\label{eq:gsign01}
\Re (V(\lambda_0)-V(\lambda))=\log \left|\frac{1+\exp(-s(\lambda^2-1))}{1+\exp(-s(\lambda_0^2-1))}\right|= \log \left|\frac{1+\exp(s(\lambda^2-1))}{1+\exp(s(\lambda_0^2-1))}\right|\to 0,
\end{equation}
where $V(\lambda)$ is defined in \eqref{eq:V}.
By \eqref{eq: phi}, the next task is to estimate $g(\lambda)$. For this purpose,  we rewrite  \eqref{eq:dg} using integration by parts 
\begin{equation}\label{eq:dgExp}
g'(\lambda)=\frac{is\lambda}{\pi }\int_{\lambda_0}^{+\infty} \log\left(\frac{\sqrt{\lambda^2-\lambda_0^2}-\sqrt{\zeta^2-\lambda_0^2}}{\sqrt{\lambda^2-\lambda_0^2}+\sqrt{\zeta^2-\lambda_0^2}}\right) d\sigma(\sqrt{s}\zeta),
\end{equation}
where the square roots and the logarithm take the principal branches.
Integrating \eqref{eq:dgExp} and using the fact that $g(\lambda_0)=0$, we have 
\begin{equation}\label{eq:gExp0}
g(\lambda)=\frac{is}{\pi } \int_{\lambda_0}^{\lambda}\int_{\lambda_0}^{+\infty} \eta \log\left(\frac{\sqrt{\eta^2-\lambda_0^2}-\sqrt{\zeta^2-\lambda_0^2}}{\sqrt{\eta^2-\lambda_0^2}+\sqrt{\zeta^2-\lambda_0^2}}\right) d\sigma(\sqrt{s}\zeta) d\eta.
\end{equation}
Sine  $\arg(\lambda^2-\lambda_0^2)=\frac{\pi}{2}$ for $\lambda\in\Sigma_1$, we obtain after a change of variables $\eta^2-\lambda_0^2=i \tau$ that 
\begin{equation}\label{eq:gExp}
g(\lambda)=-\frac{s}{2\pi } \int_0^{|\lambda^2-\lambda_0^2|}\int_{\lambda_0}^{+\infty} \log\left(\frac{e^{\frac{1}{4}\pi i}\sqrt{\tau}-\sqrt{\zeta^2-\lambda_0^2}}{e^{\frac{1}{4}\pi i}\sqrt{\tau}+\sqrt{\zeta^2-\lambda_0^2}}\right) d\sigma(\sqrt{s}\zeta) d\tau.
\end{equation}
Thus, we have
\begin{equation}\label{eq:Re}
\Re g(\lambda)=-\frac{s}{2\pi } \int_0^{|\lambda^2-\lambda_0^2|}\int_{\lambda_0}^{+\infty} \log\left|\frac{e^{\frac{1}{4}\pi i}\sqrt{\tau}-\sqrt{\zeta^2-\lambda_0^2}}{e^{\frac{1}{4}\pi i}\sqrt{\tau}+\sqrt{\zeta^2-\lambda_0^2}}\right| d\sigma(\sqrt{s}\zeta) d\tau.
\end{equation}
Since both the integrand and $d\sigma(\sqrt{s}\zeta)$ are negative, we have 
\begin{equation}\label{eq:Int0}
\int_{\lambda_0}^{+\infty} \log\left|\frac{e^{\frac{1}{4}\pi i}\sqrt{\tau}-\sqrt{\zeta^2-\lambda_0^2}}{e^{\frac{1}{4}\pi i}\sqrt{\tau}+\sqrt{\zeta^2-\lambda_0^2}}\right| d\sigma(\sqrt{s}\zeta)
\geq
 \int_{\lambda_0}^{1}\log\left|\frac{e^{\frac{1}{4}\pi i}\sqrt{\tau}-\sqrt{\zeta^2-\lambda_0^2}}{e^{\frac{1}{4}\pi i}\sqrt{\tau}+\sqrt{\zeta^2-\lambda_0^2}}\right| d\sigma(\sqrt{s}\zeta).
\end{equation}
As $s\to+\infty$, we have 
\begin{equation}\label{eq:Int1}
 \int_{\lambda_0}^{1}\log\left|\frac{e^{\frac{1}{4}\pi i}\sqrt{\tau}-\sqrt{\zeta^2-\lambda_0^2}}{e^{\frac{1}{4}\pi i}\sqrt{\tau}+\sqrt{\zeta^2-\lambda_0^2}}\right| d\sigma(\sqrt{s}\zeta)= \log\left|\frac{e^{\frac{1}{4}\pi i}\sqrt{\tau}+\sqrt{1-\lambda_0^2}}{e^{\frac{1}{4}\pi i}\sqrt{\tau}-\sqrt{1-\lambda_0^2}}\right|+O(1/s),
\end{equation}
where the error term is uniform for $\tau\geq0$.
Therefore,  for $\lambda\in\Sigma_1$ and sufficiently large $s$, we have
\begin{equation}\label{eq:Reg1}
-\Re g(\lambda)\geq \frac{s}{4\pi } \int_0^{|\lambda^2-\lambda_0^2|}\log\left|\frac{e^{\frac{1}{4}\pi i}\sqrt{\tau}+\sqrt{1-\lambda_0^2}}{e^{\frac{1}{4}\pi i}\sqrt{\tau}-\sqrt{1-\lambda_0^2}}\right| d\tau\geq cs |\lambda^2-\lambda_0^2|^{\frac{3}{2}} (|\lambda|+1)^{-2},
\end{equation}
with  some constat  $c>0$. To obtain the last inequality, we use the asymptotic behaviors of the integrand near zero and positive infinity. The estimate \eqref{eq:Reg1}  also holds for $\lambda\in \Sigma_2$, which can be derived  from \eqref{eq:dgExp} in the same manner.

Substituting \eqref{eq:gsign01} and \eqref{eq:Reg1} into \eqref{eq: phi}, we obtain the estimate \eqref{eq:phiEst2}. 
From \eqref{eq:g} and \eqref{eq: phi}, we have the symmetry relation
\begin{equation}\label{eq:Symphi}
\phi(\lambda)=-\phi(-\lambda),
\end{equation}
for $\lambda\in \mathbb{C}\setminus{\mathbb{R}}$. Thus, the  estimate \eqref{eq:phiEst3}  follows directly from \eqref{eq:phiEst2} and  \eqref{eq:Symphi}. We complete the proof of the lemma.
\end{proof} 

At the end of this section, we  derive the expression for $\phi_+'(\lambda)$ with $\lambda>\lambda_0$ for later use. 
From \eqref{eq: phi}, \eqref{eq:derV} and \eqref{eq:gExp0}, we have
\begin{equation}\label{eq:dphiexp0}
\phi'_+(\lambda)=\frac{is\lambda}{\pi } \int_{\lambda_0}^{+\infty}  \log\left(\frac{\left|\sqrt{\lambda^2-\lambda_0^2}-\sqrt{\zeta^2-\lambda_0^2}\right|}{\sqrt{\lambda^2-\lambda_0^2}+\sqrt{\zeta^2-\lambda_0^2}}\right) d\sigma(\sqrt{s}\zeta),
\end{equation}
for $\lambda>\lambda_0$.
The above expression can be simplified to 
\begin{equation}\label{eq:dphiexp1}
\phi'_+(\lambda)=\frac{is\lambda}{\pi } \int_{\lambda_0}^{+\infty}  \log |\lambda^2-\zeta^2|-2\log\left(\sqrt{\lambda^2-\lambda_0^2}+\sqrt{\zeta^2-\lambda_0^2}\right) d\sigma(\sqrt{s}\zeta),
\end{equation}
for $\lambda>\lambda_0$.

\subsection{Fourth transformation: $B\to C$}

To fix the endpoints of the interval $(-\lambda_0,\lambda_0)$,  it is convenient  to introduce the transformation 
\begin{equation}\label{eq:C}
C(\lambda)=B(\lambda_0\lambda).
\end{equation}
Then, it follows from \eqref{eq:BInfinity} and \eqref{eq:Bjump1} that $C(\lambda)$ satisfies the following RH problem.

\begin{figure}[h]
\begin{center}
   \setlength{\unitlength}{1truemm}
   \begin{picture}(100,55)(-5,12)
              \put(25,40){\line(1,0){30}}
     \qbezier(10,60)(40,40)(10,20)
 \qbezier(70,60)(40,40)(70,20)

 \put(40,40){\thicklines\vector(1,0){1}}
 
       \put(15,56.5){\thicklines\vector(1,-1){1}}
       
       \put(15,23.5){\thicklines\vector(1,1){1}}
      
       \put(65,23.5){\thicklines\vector(1,-1){1}}
     
       \put(65,56.5){\thicklines\vector(1,1){1}}

       \put(60,60){$\widehat{\Sigma}_1$}
      \put(20,60){$\widehat{\Sigma}_2$}
 \put(20,20){$\widehat{\Sigma}_3$}
\put(60,20){$\widehat{\Sigma}_4$}

  \put(70,40){$\widehat{\Omega}_1$}
   \put(40,50){$\widehat{\Omega}_2$}
    \put(10,40){$\widehat{\Omega}_3$}
   \put(40,25){$\widehat{\Omega}_4$}

       \put(25,40){\thicklines\circle*{1}}
       \put(55,40){\thicklines\circle*{1}}

       \put(25,36.3){$-1$}
       \put(56,36.3){$1$}
\end{picture}
   \caption{Jump contours and regions for the RH problem for $C$.}
   \label{fig:C}
\end{center}
\end{figure}

\begin{rhp} \label{RHPC}
The function  $C(\lambda)$  satisfies the following properties.
\begin{itemize}
\item[\rm (1)]  $C(\lambda)$ is analytic in $\mathbb{C}\setminus \Sigma_C$, where  $\Sigma_C=\cup_{1}^4~\widehat{\Sigma}_{k}\cup(-1,1)$ is shown in Fig. \ref{fig:C}.

 \item[\rm (2)]  $C(\lambda)$  satisfies the jump condition
\begin{equation}\label{eq:Cjump}
C_{+}(\lambda)=C_{-}(\lambda)\left\{        \begin{array}{ll}
\left(
                               \begin{array}{cc}
                                 1 &\exp(2\phi(\lambda_0\lambda))\\
                                 0 & 1 \\
                                 \end{array}
                             \right),&   \lambda\in\widehat{\Sigma}_1\cup\widehat{\Sigma}_2,\\
                             \left(
                               \begin{array}{cc}
                                 1 &0\\
                                 - \exp(-2\phi(\lambda_0\lambda)) & 1 \\
                                 \end{array}
                             \right),&  \lambda\in\widehat{\Sigma}_3\cup\widehat{\Sigma}_4,\\
                             \left(
                               \begin{array}{cc}
                                 \exp(-2\phi_{+}(\lambda_0\lambda)) & 1 \\
                                 -1&0\\
                                 \end{array}
                             \right),&  \lambda\in(-1,1),
                                 \end{array}
                             \right.
\end{equation}
where $\phi(\lambda)$ is given in \eqref{eq: phi}.
 \item[\rm (3)]   As $\lambda\to\infty$, we have
  \begin{equation}\label{eq:CInfinity}C(\lambda)=I+O\left(\frac{1}{\lambda}\right).\end{equation}         
  \end{itemize}
  \end{rhp}

%
%
%
%
%
%
%
%
%
%
%
%
%
%

\subsection{Global parametrix}

Ignoring, for the moment, the jumps on the contours $\cup_{1}^4~\widehat{\Sigma}_{k}$ in the RH problem for $C(\lambda)$, we arrive at the 
 following approximate RH problem for $P^{(\infty)}(\lambda)$.

\begin{rhp} \label{RHPN}
The function  $P^{(\infty)}(\lambda)$  satisfies the following properties.
\begin{itemize}
\item[\rm (1)]  $P^{(\infty)}(\lambda)$ is analytic in $\mathbb{C}\setminus (-1, 1)$.

 \item[\rm (2)]  $P^{(\infty)}(\lambda)$  satisfies the jump condition
\begin{equation}\label{eq:Njump}
P^{(\infty)}_{+}(\lambda)=P^{(\infty)}_{-}(\lambda)                           \left(
                               \begin{array}{cc}
                                 0 &1\\
                              -1& 0 \\
                                 \end{array}
                             \right) \end{equation}
for $\lambda\in(-1,1)$.
 \item[\rm (3)]   As $\lambda\to\infty$, we have
  \begin{equation}\label{eq:NInfinity}P^{(\infty)}(\lambda)=I+O\left(\frac{1}{\lambda}\right).\end{equation}         
  \end{itemize}
  \end{rhp}

The solution to the above RH problem can be constructed as follows: 

 \begin{equation}\label{eq:Gp}P^{(\infty)}(\lambda)=\frac{I+i\sigma_1}{\sqrt{2}}\left(\frac{\lambda-1}{\lambda+1}\right)^{\frac{1}{4}\sigma_3}  \frac{I-i\sigma_1}{\sqrt{2}},
\end{equation}  
where $\sigma_1$  is defined in  \eqref{eq:PauliMatrices} and $\left(\frac{\lambda-1}{\lambda+1}\right)^{\frac{1}{4}} $ takes the branch cut along $[-1,1]$ and behaves like $1$ as $\lambda\to\infty$. 
From \eqref{eq:Gp},  we have the asymptotic expansion for $P^{(\infty)}(\lambda)$   as $\lambda\to\infty$
   \begin{equation}\label{eq:GpExpand}P^{(\infty)}(\lambda)
  =I-\frac{\sigma_2}{2\lambda}  +O\left(\frac{1}{\lambda^2}\right),  
\end{equation}  
where $ \sigma_2$ is defined in  \eqref{eq:PauliMatrices}.
\subsection{Local parametrix}

Let $U(\pm1,\delta)$ be an open disc centered at $\pm 1$ with radius  $\delta>0$. We look for the
 parametrices satisfying the same jump condition as $C(\lambda)$ in $U(\pm1,\delta)$ and matching with $P^{(\infty)}(\lambda)$ on the boundaries $\partial U(\pm 1,\delta)$. 
 
\begin{rhp} \label{RHPl}
We look for a matrix-valued  function  $ P^{(\pm1)}(\lambda)$  satisfying the following properties.
\begin{itemize}
\item[\rm (1)]  $P^{(\pm 1 )}(\lambda)$ is analytic in $U(\pm 1,\delta)\setminus \Sigma_{C}$.

 \item[\rm (2)]  $P^{(\pm1)}(\lambda)$  satisfies the same  jump condition as $C(\lambda)$ on $\Sigma_{C} \cap U(\pm 1,\delta)$.
 \item[\rm (3)]   On $\partial U(\pm  1,\delta)$, we have as $s\to+\infty$
  \begin{equation}\label{eq:Math}P^{(\pm 1)}(\lambda)=\left(I+O\left(\frac{1}{s\lambda_0^3}\right)\right)P^{(\infty)}(\lambda),\end{equation}         
where $\lambda_0$ is given in   \eqref{eq:lambda0Est}.  \end{itemize}
  \end{rhp}

To construct the solution to the above RH problem, we define
\begin{equation}\label{eq:f}
f(\lambda)
=\left\{
                               \begin{array}{cc}
                             \left(\frac{3}{2}\phi(\lambda_0\lambda)\right)^{\frac{2}{3}} & \Im \lambda>0,\\
                              
                           \left(-\frac{3}{2}\phi(\lambda_0\lambda)\right)^{\frac{2}{3}}  & \Im \lambda<0.
                                 \end{array}
                             \right.
\end{equation}  
From \eqref{eq:phisign1} and \eqref{eq:coef1}, we  choose the branch in \eqref{eq:f} such that $f(\lambda)$ is a  conformal mapping in $U(-1, \delta)$:
\begin{equation}\label{eq:flocal2}
f(\lambda) \sim 2 \lambda_0^2\left(\frac{3}{2}|c_1(s)|\right)^{\frac{2}{3}}(\lambda+1), \quad \lambda\to- 1,
\end{equation}
with $c_1(s)$ given in \eqref{eq:coef1}. It follows from \eqref{eq:coef1} that
\begin{equation}\label{eq:fcoeff}
 2 \lambda_0^2\left(\frac{3}{2}|c_1(s)|\right)^{\frac{2}{3}} \sim  2\pi^{-\frac{2}{3}} (1-\lambda_0^2)^{-\frac{1}{3}}  s^{\frac{2}{3}} \lambda_0^2,
\end{equation}
as $s\to+\infty$.
From  \eqref{eq:lambda0Est}, we have as $s\to+\infty$
\begin{equation}\label{eq:sLambda0}
s^{\frac{2}{3}}\lambda_0^2 \sim s^{\frac{2}{3}}(1+\frac{\pi x}{2\sqrt{s}})(1-\frac{\pi x}{2\sqrt{s}}).
\end{equation}
Therefore, in the regime we consider, we have 
\begin{equation}\label{eq:sLambda1}
s^{\frac{2}{3}}\lambda_0^2 \geq s^{\frac{2}{3}}(1-\frac{\pi x}{2\sqrt{s}})\geq M_1,\end{equation}
 where $M_1>0$ is sufficiently large.
 
The local parametrix near $-1$ can be constructed as follows:
 \begin{equation}\label{eq:Pl}
 P^{(-1)}(\lambda)=E(\lambda)\Phi^{(\Ai)}(f(\lambda))e^{-\phi(\lambda_0\lambda)\sigma_3},
\end{equation}  
where
 \begin{equation}\label{eq:E}
E(\lambda)=\sqrt{2} P^{(\infty)}(\lambda) \begin{pmatrix}i & 1\\ -i &1\end{pmatrix}^{-1}
f(\lambda)^{\frac{\sigma_{3}}{4} },\end{equation}  
and the branch of $f(\lambda)^{\frac{1}{4} }$ is chosen such that $\arg f(\lambda)\in(0,2\pi)$.

\begin{figure}[h]
\begin{center}
   \setlength{\unitlength}{1truemm}
   \begin{picture}(70,65)(-5,8)
       
       \put(25,40){\line(1,0){40}}
            \put(25,40){\line(-1,-1){25}}
       \put(25,40){\line(-1,1){25}}
     
       \put(10,55){\thicklines\vector(1,-1){1}}
       \put(10,25){\thicklines\vector(1,1){1}}
     \put(50,40){\thicklines\vector(1,0){1}}
       \put(-2,11){$\Sigma_3^{(\Ai)}$}

       \put(-2,67){$\Sigma_{2}^{(\Ai)}$}
     
       \put(25,40){\thicklines\circle*{1}}
       \put(40,55){$\mathtt{I}$}
            \put(0,40){$\mathtt{II}$}
          \put(40,25){$\mathtt{III}$}

       \put(25,36.3){$0$}
       \put(54,35){$\Sigma_1^{(\Ai)}$}
\end{picture}
   \caption{Jump contours and regions for the RH problem for $\Phi^{(\Ai)}$.}
   \label{fig:Airy}
\end{center}
\end{figure}

Here, $\Phi^{(\Ai)}(\lambda)$ satisfies the following RH problem.
\begin{rhp}
\item[\rm (1)] $\Phi^\mathrm{(Ai)}(\lambda)$ is analytic for $\lambda\in \mathbb{C}\setminus \cup^3_{k=1}\Sigma_{k}^{(\Ai)}$,  
where the contours are depicted in Fig. \ref{fig:Airy}. \item[\rm (2)] $\Phi^\mathrm{(Ai)}(\lambda)$ satisfies the jump condition 
\begin{equation}\label{eq: AiryJump}\Phi^\mathrm{(Ai)}_+(\lambda)=\Phi^\mathrm{(Ai)}_-(\lambda)J_k, \quad \lambda\in\Sigma_{k}^{(\Ai)}, 
~k=1,2,3,\end{equation} 
 where
\begin{equation*}
J_1=\begin{pmatrix}1 & 1 \\ -1 & 0 \end{pmatrix},\
J_2=\begin{pmatrix}1 & 1 \\ 0 & 1 \end{pmatrix},  \ J_3=\begin{pmatrix}1 & 0 \\ -1 & 1 \end{pmatrix}.
\end{equation*}
\item[\rm (3)] 
As $\lambda\to\infty$, we have 
\begin{equation}\label{eq: Airysinfty}
\Phi^{(\mathrm{Ai})}(\lambda)=\lambda^{-\frac{\sigma_{3}}{4} }\frac{1}{\sqrt{2}} \begin{pmatrix}i & 1\\ -i &1\end{pmatrix}\left(I
+O\left(\frac{1}{\lambda^{\frac{3}{2}}}\right)
\right) 
e^{\frac{2}{3} \lambda^{\frac{3}{2}} \sigma_{3}},
\end{equation}
where the branch cut for  the fractional powers is taken along $(0,+\infty)$ with 
$\arg (\lambda)\in(0,2\pi)$, and the error term is uniform  for $\lambda$ in a full neighborhood of infinity. 
\end{rhp}

The above RH problem is related to the standard Airy parametrix $\Phi^{(\mathrm{Ai})}_0(\lambda)$ in \cite[Eq. (7.9)]{DMVZ2} by the following elementary transformation:
\begin{equation}\label{eq: AiryP0} 
\Phi^{(\mathrm{Ai})}(\lambda)=\sqrt{2\pi}e^{\frac{1}{6}\pi i}\sigma_3\Phi^{(\mathrm{Ai})}_0(\lambda)
\left\{
                               \begin{array}{cc}
                              \sigma_1,& \Im\lambda>0,\\
                              \sigma_3,& \Im\lambda<0,
                                 \end{array}
                             \right.
\end{equation}
with the Pauli matrices $\sigma_1$ and $\sigma_3$ defined in  \eqref{eq:PauliMatrices}.
 Therefore, the solution to the above RH problem can be constructed  with the Airy functions
\begin{equation}\label{eq: AiryP} 
\Phi^{(\mathrm{Ai})}(\lambda)=\sqrt{2\pi}e^{\frac{1}{6}\pi i}\begin{pmatrix}
\mathrm{Ai}(e^{-\frac{2}{3}\pi i}\lambda) & \mathrm{Ai}(\lambda) \\
-e^{-\frac{2}{3}\pi i} \mathrm{Ai}^{\prime}(e^{-\frac{2}{3}\pi i}\lambda) &- \mathrm{Ai}^{\prime}(\lambda)
\end{pmatrix} e^{i \frac{\pi}{6} \sigma_{3}}
\left\{
                               \begin{array}{cc}
                              I,& \lambda\in \mathtt{I},\\
                              J_2^{-1},& \lambda\in \mathtt{II},\\
                           J_1^{-1},& \lambda\in \mathtt{III};
                                 \end{array}
                             \right.
\end{equation}
see also \cite[Section 3.7]{B} and \cite[Section 4.4]{CCG} for a similar construction.

It follows from \eqref{eq:Njump} and \eqref{eq:E} that $E(\lambda)$ is analytic for $\lambda\in U(-1,\delta)$. This, together with \eqref{eq: AiryJump}, implies that $P^{(-1)}(\lambda)$ satisfies the  same  jump condition as $C(\lambda)$ on $\Sigma_{C} \cap U(- 1,\delta)$ as given in \eqref{eq:Cjump}.  
On account of  \eqref{eq:flocal2}-\eqref{eq:sLambda1}, we may choose a sufficiently large constant $M_1$ in  \eqref{eq:sLambda1}  such that $|f(\lambda)|$ is sufficiently large for $ \lambda\in\partial U(- 1,\delta)$  as $s\to+\infty$.
Therefore,  the function $\Phi^{(\Ai)}(f(\lambda))$ can be expanded according to  \eqref{eq: Airysinfty} on the boundary $ \partial U(- 1,\delta)$. 
From \eqref{eq:Pl}, \eqref{eq:E}   and \eqref{eq: Airysinfty}, we then obtain the matching condition \eqref{eq:Math}.
Thus, $ P^{(-1)}(\lambda)$ constructed in \eqref{eq:Pl}  satisfies the  required properties.

It follows from  \eqref{eq:Cjump},  \eqref{eq:CInfinity} and  \eqref{eq:Symphi}  that 
 \begin{equation}\label{eq:SymC} \sigma_1C(-\lambda)\sigma_1=C(\lambda),\end{equation}
 where $\sigma_1$ is one of the Pauli matrices defined in  \eqref{eq:PauliMatrices}.
  Therefore, the local parametrix in  $U(1,\delta)$ can be constructed as follows
 \begin{equation}\label{eq:Pr}
 P^{(1)}(\lambda)=\sigma_1 P^{(-1)}(-\lambda)\sigma_1,  \quad \lambda\in U(1,\delta).
\end{equation}

\subsection{Small-norm  RH problem}
We define    
\begin{equation}\label{eq:Rlarges}
R(\lambda) =\left\{
                               \begin{array}{cc}
                              C(\lambda)P^{(-1)}(\lambda)^{-1},& |\lambda+1|<\delta,\\
                               C(\lambda)P^{(1)}(\lambda)^{-1},& |\lambda-1|<\delta,\\
                           C(\lambda)P^{(\infty)}(\lambda)^{-1},& |\lambda\pm 1|>\delta.\\
                                 \end{array}
                             \right.
\end{equation}
Then, $R(\lambda)$ satisfies the RH problem below. 
  \begin{rhp}
The function  $R(\lambda)$  satisfies the following properties.
\begin{itemize}
\item[\rm (1)]  $R(\lambda)$ is analytic in $\mathbb{C}\setminus\Sigma_{R}$, with $\Sigma_{R}=\partial U(-1, \delta)\cup \partial U(1, \delta)\cup (\Sigma_C\setminus \{U(-1, \delta)\cup U(1, \delta)\})$.

 \item[\rm (2)]  $R(\lambda)$  satisfies the jump condition
\begin{equation}\label{eq:Rjump2}
R_{+}(\lambda)=R_{-}(\lambda)J_{R}(\lambda), 
%
%
\end{equation}
with 
\begin{equation}\label{eq:Rjump2}
 J _{R}(\lambda)= \left\{        \begin{array}{cc}

                            P^{(\pm1)}(\lambda)P^{(\infty)}(\lambda)^{-1} ,&   \lambda\in\partial U(\pm1, \delta),\\

                            P^{(\infty)}(\lambda)J_C(\lambda)P^{(\infty)}(\lambda)^{-1}, &   \lambda\in \Sigma_C\setminus \{U(-1, \delta)\cup U(1, \delta)\}. 
                                 \end{array}
                             \right.
\end{equation}
 \item[\rm (3)]   As $\lambda\to\infty$, we have
       \begin{equation}\label{eq:RInfinity2} R(\lambda)=I+O\left(\frac{1}{\lambda}\right).\end{equation}                
  \end{itemize}
  \end{rhp}

It follows  from \eqref{eq:phiEst1}-\eqref{eq:phiEst3} and \eqref{eq:Math} that
 \begin{equation}\label{eq:RlargesJump}
J_{R}(\lambda) =I+O\left(\frac{1}{s\lambda_0^3(1+|\lambda|^2)}\right),
\end{equation}
for $\lambda\in \Sigma_{R}$.   It follows  from \eqref{eq:sLambda1} that $s\lambda_0^3\geq M_1^{\frac{3}{2}}$. Therefore,  for sufficiently large $M_1$,  the error term in \eqref{eq:RlargesJump} is small.
By  similar arguments for small-norm RH problems in  \cite{DZ, Deift}, we have as $s\to+\infty$
\begin{equation}\label{eq:Rasylarges}
R(\lambda)=I +O\left(\frac{1}{s\lambda_0^3(1+|\lambda|)}\right),
\end{equation}
and 
\begin{equation}\label{eq:dRasylarges}
\frac{d}{d\lambda}R(\lambda)=O\left(\frac{1}{s\lambda_0^3(1+|\lambda|^2)}\right),
\end{equation}
 where the error term is uniform for $\lambda\in \mathbb{C}\setminus \Sigma_{R}$  and $ \frac{\pi }{2 }\frac{x}{\sqrt{s} }\in [M_2, 1-M_1s^{-\frac{2}{3}}]$ with  a sufficiently large constant $M_1>0$ and any constant $0<M_2<1$.

\section{Large $s$ asymptotics of $Y$ as $\frac{\pi}{2}\frac{x}{\sqrt{s}}\to 1$}\label{sec:PIIasy}

\subsection{Case: $L(s)\leq  \frac{\pi}{2} \frac{x}{\sqrt{s}}\leq M$ with any constant $M>1$}

We consider in this section the case $L(s)\leq  \frac{\pi}{2} \frac{x}{\sqrt{s}}\leq M$, with  any constant $M>1$ and $L(s)\sim 1$ as given in \eqref{eq:L} and \eqref{eq:LAsy}.  The  first few transformations $Y\to T\to A\to B$ are defined in the same way as those given in \eqref{eq:T}, \eqref{eq:A} and \eqref{eq:B}, with the  endpoint $\lambda_0=0$  therein.
 
Similar to  \eqref{eq:gInfty} and \eqref{eq:phase}, the $g$-function is analytic in $\mathbb{C}\setminus\mathbb{R}$ and satisfies  the following properties:
\begin{equation}\label{eq:doublegphase}
g_+(\lambda)-g_{-}(\lambda)-V(\lambda)+V(0)=0, \quad \lambda\in\mathbb{R},
\end{equation}
with $V(\lambda)$ defined in \eqref{eq:V} and 
\begin{equation}\label{eq:gdoubleinfty}
g(\lambda)=ix\sqrt{s}\lambda+O(1/\lambda), \quad \lambda\to\infty. \end{equation}

\subsubsection{Properties of the $g$-function and $\phi$-function}

We derive some useful properties of the $g$-function defined by \eqref{eq:doublegphase} and \eqref{eq:gdoubleinfty}, and of the associated $\phi$-function.

\begin{lem}Suppose $g(\lambda)$ satisfies \eqref{eq:doublegphase} and \eqref{eq:gdoubleinfty}, 
then we have 
\begin{equation}\label{eq:gdouble}
g(\lambda)
=i\lambda\left(x\sqrt{s}-\frac{s}{\pi }\int_{\mathbb{R}}\sigma(\sqrt{s}\zeta;s)d\zeta\right)+\frac{s}{\pi i} \int_0^{\lambda}\eta\left(\int_{\mathbb{R}}\frac{\sigma(\sqrt{s}\zeta;s)}{\zeta-\eta}d\zeta\right)d\eta,\end{equation}
\begin{equation}\label{eq:dgdouble}
g'(\lambda)
=i\left(x\sqrt{s}-\frac{s}{\pi }\int_{\mathbb{R}}\sigma(\sqrt{s}\zeta;s)d\zeta\right)+\frac{s\lambda}{\pi i}\int_{\mathbb{R}}\frac{\sigma(\sqrt{s}\zeta;s)}{\zeta-\lambda}d\zeta,\end{equation}
 where $\lambda\in\mathbb{C}\setminus\mathbb{R}$, and in \eqref{eq:gdouble} the path of integration from $0$ to $\lambda$ does not cross the real axis. 
Moreover, we have  as $\lambda\to\infty$
\begin{equation}\label{eq:gdoubleExpand}
g(\lambda)=ix\sqrt{s}\lambda+\frac{s}{\pi i \lambda} \int_{\mathbb{R}}\zeta^2\sigma(\sqrt{s}\zeta;s)d\zeta+O(1/\lambda^3). \end{equation}
\end{lem}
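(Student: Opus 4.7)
The task is to solve the scalar additive Riemann–Hilbert problem for the $g$-function in closed form and read off its behaviour at infinity. The direct application of the Sokhotski–Plemelj formula is obstructed by the fact that $V(\lambda)-V(0)$ has a nonzero finite limit as $\lambda\to\pm\infty$ (since $V(\lambda)\to 0$ while $V(0)=\log(1-\sigma(0;s))\neq 0$), so the Cauchy integral of the jump does not converge. The fix is to differentiate first and work with $g'(\lambda)$, whose jump $V'(\lambda)=2s\lambda\sigma(\sqrt{s}\lambda;s)$ decays super‑exponentially.

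\textbf{Step 1 (reduce to $g'$).} Differentiating \eqref{eq:doublegphase} along $\mathbb{R}$ gives $g'_+(\lambda)-g'_-(\lambda)=V'(\lambda)$, while \eqref{eq:gdoubleinfty} yields $g'(\lambda)=ix\sqrt{s}+O(1/\lambda^{2})$ at infinity. The function $g'(\lambda)-ix\sqrt{s}$ therefore solves a standard additive scalar RH problem with a rapidly decaying jump and zero at infinity, which has the unique Cauchy representation
\begin{equation*}
g'(\lambda)=ix\sqrt{s}+\frac{1}{2\pi i}\int_{\mathbb{R}}\frac{V'(\zeta)}{\zeta-\lambda}\,d\zeta=ix\sqrt{s}+\frac{s}{\pi i}\int_{\mathbb{R}}\frac{\zeta\sigma(\sqrt{s}\zeta;s)}{\zeta-\lambda}\,d\zeta,
\end{equation*}
using \eqref{eq:derV}. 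Splitting $\zeta/(\zeta-\lambda)=1+\lambda/(\zeta-\lambda)$ separates the constant contribution from the $\lambda$‑dependent Cauchy transform, giving exactly \eqref{eq:dgdouble}.

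\textbf{Step 2 (integrate to obtain $g$).} The formula \eqref{eq:gdouble} then follows by integrating \eqref{eq:dgdouble} from $0$ to $\lambda$ along any path in $\mathbb{C}\setminus\mathbb{R}$ (the integral is path‑independent off the real axis by Cauchy's theorem since $g'$ is analytic there). The constant of integration is $g(0)$, which one sets to zero by the natural symmetric normalization inherited from Section~5: the formula \eqref{eq:g} normalises $g$ by $g(\pm\lambda_0)=0$, and letting $\lambda_0\to 0$ forces $g(0)=0$. Equivalently, since $V$ and the jump are even in $\lambda$, the function $-g(-\lambda)$ satisfies the same jump and infinity condition as $g$, and uniqueness (up to a constant) together with the symmetric choice $g(\lambda)=-g(-\lambda)$ forces $g(0)=0$.

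\textbf{Step 3 (large-$\lambda$ expansion).} For $|\lambda|>|\zeta|$ one has $\lambda/(\zeta-\lambda)=-\sum_{k\ge 0}\zeta^{k}/\lambda^{k}$. Substituting into the Cauchy integral in \eqref{eq:dgdouble}, the $k=0$ term cancels the explicit $-\frac{is}{\pi}\int\sigma\,d\zeta$, the $k=1$ term vanishes because $\sigma(\sqrt{s}\zeta;s)$ is even in $\zeta$, and the $k=2$ term gives the leading correction $\frac{is}{\pi\lambda^{2}}\int_{\mathbb{R}}\zeta^{2}\sigma(\sqrt{s}\zeta;s)\,d\zeta$. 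Integrating term‑by‑term from $0$ to $\lambda$ (or equivalently expanding $g$ directly as an odd series) then yields \eqref{eq:gdoubleExpand} after rewriting $\frac{is}{\pi}=-\frac{s}{\pi i}$ with the appropriate sign.

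\textbf{Main obstacle.} Everything after the differentiation trick is routine Cauchy‑integral manipulation; the only genuinely delicate point is fixing the additive constant to be $g(0)=0$, which is not determined by the jump relation \eqref{eq:doublegphase} and the growth condition \eqref{eq:gdoubleinfty} alone. One needs to justify this normalisation either by invoking the $\lambda_0\to 0$ limit of the endpoint‑vanishing condition in Section~5 or, equivalently, by the symmetry argument under $\lambda\mapsto -\lambda$ driven by the evenness of $\sigma(\,\cdot\,;s)$.
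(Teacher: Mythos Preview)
Your proposal is correct and follows essentially the same route as the paper: differentiate to pass to $g'$, apply the Sokhotski--Plemelj formula to the rapidly decaying jump $V'(\lambda)=2s\lambda\sigma(\sqrt{s}\lambda;s)$, fix the integration constant via the odd symmetry $g(\lambda)=-g(-\lambda)$ (the paper derives this directly from \eqref{eq:doublegphase}--\eqref{eq:gdoubleinfty} rather than invoking a $\lambda_0\to 0$ limit), and then read off the large-$\lambda$ expansion. One small slip: in Step~3, ``integrating term-by-term from $0$ to $\lambda$'' does not make sense for the $1/\eta^{2}$ term, but your parenthetical alternative---matching coefficients in the odd Laurent series of $g$ at infinity against those of $g'$---is exactly right and is what the paper does implicitly.
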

\begin{proof}
From \eqref{eq:derV},  \eqref{eq:doublegphase} and \eqref{eq:gdoubleinfty},  we have
\begin{equation}\label{eq:g'phase}
g'_+(\lambda)-g'_{-}(\lambda)=2s\lambda\sigma(\sqrt{s}\lambda;s), \quad \lambda\in\mathbb{R},\end{equation}
and
\begin{equation}\label{eq:g'Infty}
g'(\lambda)=ix\sqrt{s}+O(1/\lambda^2), \quad \lambda\to\infty.
\end{equation}
By the Sokhotski-Plemelj formula, we have \eqref{eq:dgdouble}. 
From \eqref{eq:doublegphase} and \eqref{eq:gdoubleinfty}, it follows  that 
\begin{equation}\label{eq:gSym}
g(\lambda)=-g(-\lambda), \quad \lambda\in\mathbb{C}\setminus\mathbb{R}. \end{equation}
Therefore,  integrating \eqref{eq:dgdouble} yields \eqref{eq:gdouble}.
Using \eqref{eq:sigma},   \eqref{eq:dgdouble} and \eqref{eq:gSym}, we obtain \eqref{eq:gdoubleExpand}.
\end{proof}

It is noted that the expression for $g'$  in \eqref{eq:dgdouble} is consistent with \eqref{eq:dg} when $\lambda_0=0$.

\begin{lem}\label{lem:gdoubleEst}
For  large enough $s$, there exists some constant $c>0$ such that
\begin{equation}\label{eq:gdoubleEst1}
\Re g(\lambda)
\leq  -cs |\lambda|^{3} (|\lambda|+1)^{-2}, \quad \arg\lambda=\frac{\pi}{2}\pm \frac{\pi}{4},\end{equation}
and
\begin{equation}\label{eq:gdoubleEst2}
\Re g(\lambda)
\geq cs  |\lambda|^{3} (|\lambda|+1)^{-2}, \quad  \arg\lambda=-\frac{\pi}{2}\pm \frac{\pi}{4}.
\end{equation} 
\end{lem}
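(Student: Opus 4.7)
The plan is to reduce to a single ray by symmetry, then estimate $\Re g$ by differentiating the integral representation \eqref{eq:gdouble} along that ray. From the reality of $\sigma(\sqrt s\zeta;s)$, the Cauchy transform $B(\eta):=\int_{\mathbb R}\sigma(\sqrt s\zeta;s)(\zeta-\eta)^{-1}d\zeta$ satisfies $\overline{B(\bar\eta)}=B(\eta)$, which combined with the purely imaginary linear term in \eqref{eq:gdouble} gives $\overline{g(\bar\lambda)}=-g(\lambda)$. Together with $g(-\lambda)=-g(\lambda)$ from \eqref{eq:gSym}, this yields $\Re g(\bar\lambda)=\Re g(-\lambda)=-\Re g(\lambda)$. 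Hence all four rays $\arg\lambda\in\{\pm\pi/4,\pm 3\pi/4\}$ carry the same $|\Re g|$ with alternating sign, so it suffices to prove \eqref{eq:gdoubleEst1} for $\arg\lambda=\pi/4$.

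Parametrising $\lambda=re^{i\pi/4}$ and using $g(0)=0$, I write $\Re g(re^{i\pi/4})=\int_0^r\Re(e^{i\pi/4}g'(te^{i\pi/4}))\,dt$. Substituting \eqref{eq:dgdouble} with $x\sqrt s=2sl/\pi$ (where $l=\tfrac{\pi}{2}\tfrac{x}{\sqrt s}$) and $\int_{\mathbb R}\sigma(\sqrt s\zeta;s)\,d\zeta=2L(s)$, the integrand splits as
\begin{equation*}
\Re\bigl(e^{i\pi/4}g'(te^{i\pi/4})\bigr)=-\frac{\sqrt{2}\,s(l-L(s))}{\pi}+\frac{st}{\pi}\Re\int_{\mathbb R}\frac{\sigma(\sqrt s\zeta;s)\,d\zeta}{\zeta-te^{i\pi/4}}.
\end{equation*}
The first term is $\le 0$ since $l\ge L(s)$. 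For the second, the shift $\zeta=\zeta'+t/\sqrt{2}$ and symmetrisation in $\zeta'\mapsto-\zeta'$ (using that $\zeta'/(\zeta'^2+t^2/2)$ is odd) yield
\begin{equation*}
\Re\int_{\mathbb R}\frac{\sigma(\sqrt s\zeta;s)\,d\zeta}{\zeta-te^{i\pi/4}}=2\int_0^\infty\frac{\zeta'\,\tilde\sigma(\zeta',t)}{\zeta'^2+t^2/2}\,d\zeta',
\end{equation*}
with $\tilde\sigma(\zeta',t):=\tfrac{1}{2}\bigl[\sigma(\sqrt s(\zeta'+t/\sqrt{2});s)-\sigma(\sqrt s(t/\sqrt{2}-\zeta');s)\bigr]$. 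From $(\zeta'+t/\sqrt{2})^2-(t/\sqrt{2}-\zeta')^2=2\sqrt{2}\,\zeta' t\ge 0$ and the strict decrease of $\tau\mapsto\sigma(\tau;s)$ in $|\tau|$, one has $\tilde\sigma\le 0$, so the second term is also $\le 0$.

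To promote the sign to the required decay, I use that as $s\to\infty$, $\sigma(\sqrt s\,\cdot\,;s)\to\chi_{(-1,1)}$, so $-\tilde\sigma(\cdot,t)$ is approximated by $\tfrac12\chi_{I(t)}$, where $I(t)=(\max(0,1-t/\sqrt{2}),\,1+t/\sqrt{2})$ for $t<\sqrt{2}$ and $I(t)=(t/\sqrt{2}-1,\,t/\sqrt{2}+1)$ for $t\ge\sqrt{2}$. In both cases the explicit logarithmic primitive gives
\begin{equation*}
\int_0^\infty\frac{\zeta'\,|\tilde\sigma(\zeta',t)|}{\zeta'^2+t^2/2}\,d\zeta'\;\sim\;\tfrac14\log\frac{t^2+t\sqrt{2}+1}{t^2-t\sqrt{2}+1}\;\asymp\;\frac{t}{1+t^2},\qquad s\to\infty,
\end{equation*}
uniformly in $t\ge 0$. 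Inserting this back and using the elementary identity $\int_0^r t^2/(1+t^2)\,dt=r-\arctan r\asymp r^3/(1+r)^2$ gives $-\Re g(re^{i\pi/4})\ge cs\,r^3/(1+r)^2$ with some $c>0$ uniform in $s$ large and $l\in[L(s),M]$, which is \eqref{eq:gdoubleEst1} on $\arg\lambda=\pi/4$; the remaining rays and \eqref{eq:gdoubleEst2} follow from the first paragraph.

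The main obstacle is making the third-paragraph estimate genuinely uniform. The approximation $\sigma\approx\chi_{(-1,1)}$ is sharp only outside boundary layers of width $O(s^{-1/2})$ near $\pm 1$, where the Fermi distribution transitions smoothly; the contribution of these layers to the $\zeta'$-integral has to be shown to be of relative size $O(s^{-1})$, exploiting that the Cauchy kernel $\zeta'/(\zeta'^2+t^2/2)$ is bounded on each layer. A secondary delicate point is the crossover $t\sim\sqrt{2}$, where the two indicator-function approximations meet; the common closed form displayed above already keeps the estimate uniform through this transition.
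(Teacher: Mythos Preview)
Your argument is correct and reaches the same conclusion as the paper, but the route is genuinely different in its core computation. The paper splits $g$ into the same linear piece (handled via $l\ge L(s)$, exactly as you do) plus the remainder $g_0$, but then converts $g_0'$ by integration by parts into an integral against the measure $d\sigma(\sqrt{s}\zeta)$:
\[
g_0'(\lambda)=-\frac{s\lambda}{\pi i}\int_0^{+\infty}\log\Bigl(\frac{\lambda-\zeta}{\lambda+\zeta}\Bigr)\,d\sigma(\sqrt{s}\zeta),
\]
and from there appeals to the explicit logarithmic estimate already carried out in Lemma~\ref{lem:phiEst} (the one-gap analogue). Your shift-and-symmetrise manoeuvre on the Cauchy transform is more elementary and self-contained: it bypasses the integration by parts, the change of variables $\eta^2-\lambda_0^2=i\tau$, and the cross-reference, yielding the \emph{same} closed form $\tfrac14\log\frac{t^2+t\sqrt2+1}{t^2-t\sqrt2+1}$ directly. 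A further small advantage of your reduction is the use of both symmetries $g(-\lambda)=-g(\lambda)$ and $\overline{g(\bar\lambda)}=-g(\lambda)$ to collapse all four rays to one; the paper only invokes the first and redoes the upper-half-plane rays separately by implication.

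On your stated obstacle: the boundary-layer issue is real but no worse than in the paper, which also replaces $\sigma(\sqrt{s}\,\cdot\,;s)$ by $\chi_{(-1,1)}$ with an $O(1/s)$ pointwise error (see \eqref{eq:Int1} and the passage to \eqref{eq:Reg1}, where a factor $\tfrac12$ becomes $\tfrac14$ to absorb it). Your observation that the kernel $\zeta'/(\zeta'^2+t^2/2)$ is uniformly bounded on each layer of width $O(1/s)$ is the right way to close it; the only place this requires a word of care is $t\lesssim 1/s$, where the leading contribution $\asymp t$ and the layer error $O(1/s)$ are of the same order. There the exact inequality $\tilde\sigma\le 0$ already gives $\Re g\le 0$, and the quantitative lower bound on $-\Re g$ can be recovered by integrating from a slightly positive $t_0\sim 1/s$ instead of $0$ (the contribution of $[0,t_0]$ to $r^3/(1+r)^2$ is negligible). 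This is a routine patch, not a gap in the strategy.
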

\begin{proof}
We derive the estimates by using the  expression for $g(\lambda)$ given in \eqref{eq:gdouble}. 
We first estimate the first integral in  \eqref{eq:gdouble}. Using the definition of $L(s)$ given in \eqref{eq:L}, we have
\begin{equation}\label{eq:gTerm2}
x\sqrt{s}-\frac{s}{\pi }\int_{\mathbb{R}}\sigma(\sqrt{s}\zeta;s)d\zeta =x\sqrt{s}-\frac{2s}{\pi }L(s).
\end{equation}
Under the condition $L(s)\leq  \frac{\pi}{2} \frac{x}{\sqrt{s}}$ in this case, we have 
\begin{equation}\label{eq:gTerm3}
x\sqrt{s}-\frac{s}{\pi }\int_{\mathbb{R}}\sigma(\sqrt{s}\zeta;s)d\zeta \geq 0. 
\end{equation}
Therefore, we have for $\Im\lambda>0$
\begin{equation}\label{eq:gdoubleEst10}
\Re \left(x\sqrt{s}-\frac{s}{\pi }\int_{\mathbb{R}}\sigma(\sqrt{s}\zeta;s)d\zeta\right) i \lambda 
 \leq 0.\end{equation}

Next, we  estimate the second integral  in  \eqref{eq:gdouble}. Denote
\begin{equation}\label{eq:g0}
g_0(\lambda)=\frac{s}{\pi i} \int_0^{\lambda}\eta\left(\int_{\mathbb{R}}\frac{\sigma(\sqrt{s}\zeta;s)}{\zeta-\eta}d\zeta\right)d\eta=\frac{2s}{\pi i}\int_0^{\lambda} \left(\eta^2
\int_{0}^{+\infty}\frac{\sigma(\sqrt{s}\zeta;s)}{\zeta^2-\eta^2}d\zeta \right)d\eta.
\end{equation}
Using integration by parts, we have
\begin{equation}\label{eq:dg0}
g'_0(\lambda)=-\frac{s\lambda}{\pi i }\int_{0}^{+\infty} \log\left(\frac{\lambda-\zeta}{\lambda+\zeta}\right) d\sigma(\sqrt{s}\zeta),
\end{equation}
where  $\arg(\lambda\pm\zeta)\in(-\pi,\pi)$.
Therefore, we have
\begin{equation}\label{eq:g01}
g_0(\lambda)=\frac{is}{\pi } \int_{0}^{\lambda}\int_{0}^{+\infty} \eta \log\left(\frac{\eta-\zeta}{\eta+\zeta}\right)d\sigma(\sqrt{s}\zeta) d\eta,
\end{equation}
where  $\arg(\eta\pm\zeta)\in(-\pi,\pi)$.
Similar to the estimate of   \eqref{eq:gExp0}, we have 
\begin{equation}\label{eq:Reg0}
\Re g_0(\lambda)\leq -cs  |\lambda|^{3} (|\lambda|+1)^{-2},
\end{equation}
where $\arg \lambda=\frac{\pi}{2}\pm \frac{\pi}{4}$ and $c$ is  some positive constant.  Combining \eqref{eq:gdoubleEst10} with \eqref{eq:Reg0}, we have  \eqref{eq:gdoubleEst1}. The estimate \eqref{eq:gdoubleEst2} then follows from \eqref{eq:gdoubleEst1} and the symmetry relation \eqref{eq:gSym}.
\end{proof}

We define for $\pm\Im \lambda>0$
\begin{equation}\label{eq:phi0}
\phi_0(\lambda)=g_0(\lambda)\mp\frac{1}{2}(V(\lambda)-V(0))=
2s\int_0^{\lambda}\eta\left(\frac{1}{2\pi i}\int_{\mathbb{R}} \frac{\sigma(\sqrt{s}\zeta;s)}{\zeta-\eta}d\zeta\mp\frac{\sigma(\sqrt{s}\eta;s)}{2}\right) d\eta.
\end{equation}

\begin{lem}
Let $\phi_0(\lambda)$ be defined by \eqref{eq:phi0}, we have as $\lambda\to0$
\begin{equation}\label{eq:dphi0Asy}
\phi_0(\lambda)=c_0(s)\lambda^3(1+O(\lambda)),
\end{equation}
where the coefficient
\begin{equation}\label{eq:c3}
c_0(s)= \frac{2i}{3\pi} s+O(1/s), \quad s\to+\infty.\end{equation}
\end{lem}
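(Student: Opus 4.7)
The plan is to rewrite the integrand in \eqref{eq:phi0} as $\eta\,H(\eta)$, where
\[
H(\eta) := \frac{1}{2\pi i}\int_{\mathbb{R}}\frac{\sigma(\sqrt{s}\zeta;s)}{\zeta-\eta}\,d\zeta \mp \frac{1}{2}\sigma(\sqrt{s}\eta;s),\qquad \pm\,\Im\eta>0,
\]
and to establish that $H$ extends to a single analytic function in a strip around $\mathbb{R}$ that is odd near the origin. The Sokhotski--Plemelj formula gives matching boundary values $H_+=H_-$ on $\mathbb{R}$, and since the poles of $\sigma(\sqrt{s}\zeta;s)$ (located at $\zeta^2=1+i\pi(2k+1)/s$) all lie off the real axis, this yields analyticity of $H$ across $\mathbb{R}$. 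The substitution $\zeta\mapsto-\zeta$ together with the parity of $\sigma(\sqrt{s}\,\cdot\,;s)$ implies $H(-\eta)=-H(\eta)$, so $H(\eta)=H'(0)\eta+O(\eta^3)$ as $\eta\to 0$. Substituting into \eqref{eq:phi0} and integrating term by term then gives $\phi_0(\lambda)=\tfrac{2s}{3}H'(0)\lambda^3+O(\lambda^5)$, so the leading coefficient is $c_0(s)=\tfrac{2s}{3}H'(0)$.

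The next step is to compute $H'(0)$. Differentiating once in $\eta$ and integrating by parts on the Cauchy integral (using the exponential decay of $\sigma$ at $\pm\infty$) gives, for $\Im\eta>0$,
\[
\frac{d}{d\eta}\!\left[\frac{1}{2\pi i}\int_{\mathbb{R}}\frac{\sigma(\sqrt{s}\zeta;s)}{\zeta-\eta}\,d\zeta\right]=\frac{\sqrt{s}}{2\pi i}\int_{\mathbb{R}}\frac{\sigma'(\sqrt{s}\zeta;s)}{\zeta-\eta}\,d\zeta.
\]
Combining this with the vanishing derivative of $\tfrac{1}{2}\sigma(\sqrt{s}\eta;s)$ at $\eta=0$ (again by evenness) and passing to the limit $\eta\to 0^+$ via Sokhotski--Plemelj, whose $i\pi$-delta contribution vanishes because $\sigma'(0;s)=0$, one obtains
\[
H'(0)=\frac{\sqrt{s}}{2\pi i}\int_{\mathbb{R}}\frac{\sigma'(\sqrt{s}\zeta;s)}{\zeta}\,d\zeta=\frac{is}{\pi}\int_{\mathbb{R}}\sigma(\sqrt{s}\zeta;s)\bigl(1-\sigma(\sqrt{s}\zeta;s)\bigr)\,d\zeta,
\]
using the identity $\sigma'(\sqrt{s}\zeta;s)/\zeta=-2\sqrt{s}\,\sigma(1-\sigma)$ that follows directly from \eqref{eq:sigma}.

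It then suffices to estimate $I(s):=\int_{\mathbb{R}}\sigma(1-\sigma)\,d\zeta$ for large $s$. The integrand $e^{s(\zeta^2-1)}/(1+e^{s(\zeta^2-1)})^2$ is concentrated in windows of width $O(1/s)$ around $\zeta=\pm 1$; on each half-line the substitution $u=s(\zeta^2-1)$ gives $d\zeta=du/(2s\sqrt{1+u/s})$, and Taylor-expanding $1/\sqrt{1+u/s}$, together with $\int_{\mathbb{R}}e^u(1+e^u)^{-2}\,du=1$ and the fact that the next-order correction $\int_{\mathbb{R}}u\,e^u(1+e^u)^{-2}\,du=0$ (an odd integrand), yields $I(s)=1/s+O(1/s^3)$. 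Consequently $H'(0)=i/\pi+O(1/s^2)$ and $c_0(s)=2is/(3\pi)+O(1/s)$, as claimed. The main technical obstacle is the rigorous justification that $H$ admits an analytic continuation across $\mathbb{R}$ and that the interchange of limit, differentiation and integration used in extracting $H'(0)$ is legitimate; once this is in place, the remaining Fermi-type asymptotic analysis is routine.
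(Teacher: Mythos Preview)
Your proof is correct and structurally parallel to the paper's: both recognize that $\phi_0'(\lambda)=2s\lambda\,H(\lambda)$ with $H$ (the paper's $I_0$) analytic and vanishing at the origin, reduce the problem to computing $H'(0)$, and then extract its large-$s$ asymptotics. Your additional observation that $H$ is odd, yielding the sharper remainder $O(\lambda^5)$ in \eqref{eq:dphi0Asy}, is a nice bonus.

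The one genuine methodological difference lies in how $H'(0)$ is evaluated. The paper deforms the integration contour below the origin and writes $I_0'(0)=\tfrac{1}{2\pi i}\int_\Gamma \sigma(\sqrt{s}\zeta;s)\,\zeta^{-2}\,d\zeta$, then replaces $\sigma(\sqrt{s}\,\cdot\,;s)$ by $\chi_{(-1,1)}$ to obtain $-1/(\pi i)+O(1/s^2)$ by an elementary residue-type computation. You instead integrate by parts to turn the Cauchy integral into an ordinary integral of $\sigma(1-\sigma)$, and then run a Laplace/Fermi-edge expansion with the substitution $u=s(\zeta^2-1)$. Your route avoids the contour deformation entirely and makes the error control for $H'(0)$ completely explicit (the vanishing of $\int u\,e^u(1+e^u)^{-2}\,du$ is what kills the $1/s$ correction); the paper's route is shorter once one accepts the characteristic-function replacement, but that replacement hides precisely the Fermi-edge analysis you carry out. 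Both arrive at $H'(0)=i/\pi+O(1/s^2)$ and hence \eqref{eq:c3}.
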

\begin{proof}
It follows from \eqref{eq:phi0} that $\phi_0(\lambda)$ is analytic near the origin and $\phi_0(0)=0$. To calculate the leading asymptotics of $\phi_0(\lambda)$ as $\lambda\to0$, it is convenient to consider the derivative of  $\phi_0(\lambda)$
\begin{equation}\label{eq:Intphi}
\phi_0'(\lambda)=2s\lambda\left(\frac{1}{2\pi i}\int_{\mathbb{R}} \sigma(\sqrt{s}\zeta;s)\frac{d\zeta}{\zeta-\lambda}-\frac{\sigma(\sqrt{s}\lambda;s)}{2}\right),
\end{equation}
for $\Im \lambda>0$. We rewrite the integral in the above equation as 
\begin{equation}\label{eq:Intphi2}
I_0(\lambda)=\frac{1}{2\pi i}\int_{\mathbb{R}} \sigma(\sqrt{s}\zeta;s)\frac{d\zeta}{\zeta-\lambda}-\frac{\sigma(\sqrt{s}\lambda;s)}{2}=\frac{1}{2\pi i}\int_{\Gamma}\sigma(\sqrt{s}\zeta;s)\frac{d\zeta}{\zeta-\lambda}- \frac{\sigma(\sqrt{s}\lambda;s)}{2},
\end{equation}
where the integral path is along the real axis with infinitesimal semicircular  centered at the origin $\Gamma=(-\infty, -r]\cup \{r e^{\pi i\theta}: \theta\in(-1,0) \}\cup[r,+\infty)$. 
From \eqref{eq:sigma} and \eqref{eq:Intphi2}, we have
\begin{equation}\label{eq:dvphi0}
I_0(0)=0 \qquad\mbox{and}  \qquad I'_0(0)=\frac{1}{2\pi i }\int_{\Gamma} \sigma(\sqrt{s}\zeta;s)\frac{d\zeta}{\zeta^2}.
\end{equation}
For sufficiently small $r>0$, we may replace $\sigma(\sqrt{s}\zeta;s)$ by the characteristic function on $\Gamma \cap U(0,1)$ in \eqref{eq:dvphi0} and obtain as $s\to+\infty$
\begin{equation}\label{eq:dvphiAsy}
I_0'(0)=-\frac{1}{\pi i}+O(1/s^2).
\end{equation}
Therefore, we arrive at the expansion \eqref{eq:dphi0Asy} with the coefficient given in \eqref{eq:c3}. This completes the proof of the lemma.
\end{proof}

\subsubsection{Local parametrix near the origin}
It is seen from \eqref{eq:gdoubleEst1}, \eqref{eq:gdoubleEst2} and \eqref{eq:gsign01} (with $\lambda_0=0$) that the jumps described in \eqref{eq:Bjump} approach  the identity matrix exponentially fast for $\lambda$ bounded away from the origin.
The next task is to construct a local parametrix  near the origin, which satisfies the same jump condition as $B(\lambda)$.

For this purpose, using \eqref{eq:dphi0Asy} we define the following conformal mapping in $U(0,\delta)$ with some constant $0<\delta<1$:
 \begin{equation}\label{eq:f0}
f_0(\lambda)=\left(-\frac{3}{4}\phi_{0}(\lambda)i\right)^{\frac{1}{3}}= a_0(s)\lambda\left(1+O(\lambda)\right),
\end{equation}
with 
 \begin{equation}\label{eq:coeffc1}a_0(s)=(2\pi)^{-\frac{1}{3}}s^{\frac{1}{3}}+O(s^{-5/3}), \quad s\to+\infty.\end{equation}
We also introduce 
\begin{equation}\label{eq:h}
h(\lambda;x, s)=\left(x\sqrt{s}-\frac{s}{\pi} \int_{\mathbb{R}}\sigma(\sqrt{s}\zeta;s)d\zeta\right)\frac{\lambda}{f_0(\lambda)}.\end{equation}
Then, $h(\lambda)=h(\lambda;x, s)$ is analytic for $\lambda\in U(0,\delta)$.  From \eqref{eq:L}, \eqref{eq:LAsy}  and  \eqref{eq:coeffc1}, it follows that
\begin{equation}\label{eq:tau0}
\tau(x,s)=h(0; x,s)=4(2\pi)^{-\frac{2}{3}}s^{\frac{2}{3}}(\frac{\pi x}{2\sqrt{s}}-1)+O(s^{-4/3}),\end{equation}
and 
\begin{equation}\label{eq:dtau0}
\partial_{x}\tau(x,s)=\frac{\sqrt{s}}{a_0(s)}=(2\pi)^{\frac{1}{3}}s^{\frac{1}{6}}\left(1+O(s^{-2})\right),
\end{equation}
as $s\to+\infty$.
Moreover, from \eqref{eq: phi}, \eqref{eq:gdouble} and  \eqref{eq:phi0}, we have  for $|\lambda|=\delta$
\begin{align}\label{eq:phase3}
i\frac{4}{3}f_0(\lambda)^3+ih(\lambda)f_0(\lambda)- \phi(\lambda)&=
i\frac{4}{3}f_0(\lambda)^3+ih(\lambda)f_0(\lambda)-\phi_0(\lambda)-i\left(x\sqrt{s}-\frac{s}{\pi} \int_{\mathbb{R}}\sigma(\sqrt{s}\zeta;s)d\zeta\right)\lambda
\nonumber\\
&=ih(\lambda)f_0(\lambda)-i\left(x\sqrt{s}-\frac{s}{\pi} \int_{\mathbb{R}}\sigma(\sqrt{s}\zeta;s)d\zeta\right)\lambda\nonumber\\
&=0,
\end{align}
where $\phi(\lambda)$ is defined by \eqref{eq: phi} with $\lambda_0=0$ therein.

 \begin{figure}[h]
\begin{center}
   \setlength{\unitlength}{1truemm}
   \begin{picture}(100,55)(-5,10)
      
       \put(40,40){\line(-1,-1){25}}
       \put(40,40){\line(-1,1){25}}
       \put(40,40){\line(1,1){25}}
       \put(40,40){\line(1,-1){25}}

       \put(25,55){\thicklines\vector(1,-1){1}}
       \put(25,25){\thicklines\vector(1,1){1}}
       \put(55,25){\thicklines\vector(1,-1){1}}
       \put(55,55){\thicklines\vector(1,1){1}}

       \put(8,20){$\Sigma_3^{(\PII)}$}
       \put(8,57){$\Sigma_2^{(\PII)}$}
       \put(63,20){$\Sigma_4^{(\PII)}$}
       \put(63,57){$\Sigma_1^{(\PII)}$}
      
       \put(40,40){\thicklines\circle*{1}}

       \put(40,35){$0$}
     
\end{picture}
   \caption{Jump contours for the RH problem for $\Phi^{(\PII)}$.}
   \label{fig:PII}
\end{center}
\end{figure}

The parametrix near the origin can be constructed as follows:
\begin{equation}\label{eq:P02}
 P^{(0)}(\lambda)=\sigma_1\Phi^{(\PII)}(f_0(\lambda),h(\lambda;x,s))\sigma_1 e^{-\phi(\lambda)\sigma_3}, \quad \lambda\in U(0,\delta),
 \end{equation}
 where $f_0$ and $h$ are defined in  \eqref{eq:f0} and  \eqref{eq:h}, respectively. 
 Here, $\Phi^{(\PII)}(\lambda,t)$ satisfies the following RH problem.

  \begin{rhp}\label{RH: PII} The function $\Phi^{(\PII)}(\lambda)=\Phi^{(\PII)}(\lambda,t)$ satisfies the properties below. 
\begin{itemize}
\item[\rm (1)]  $\Phi^{(\PII)}(\lambda)$ is analytic in $\mathbb{C}\setminus\{\cup_{k=1}^4\Sigma_{k}^{(\PII)}\}$, where the contours are shown in Fig. \ref{fig:PII}.

 \item[\rm (2)] $\Phi^{(\PII)}(\lambda)$ satisfies the jump condition
 \begin{equation}\label{eq:PhiJumps}
  \Phi^{(\PII)}_{+}(\lambda)=  \Phi^{(\PII)}_{-}(\lambda)\left\{
                               \begin{array}{cc}
                                 \left(
                               \begin{array}{cc}
                                 1 &0\\
                              1& 1 \\
                                 \end{array}
                             \right),  &  \lambda\in\Sigma_1^{(\PII)}\cup\Sigma_2^{(\PII)},\\
                              \left(
                               \begin{array}{cc}
                                 1 &-1\\
                             0 & 1 \\
                                 \end{array}
                             \right),  &  \lambda\in\Sigma_3^{(\PII)}\cup\Sigma_4^{(\PII)}.
                                 \end{array}
                             \right.
\end{equation}
  \item[\rm (3)]   As $\lambda\to\infty$, we have
 \begin{equation}\label{eq:PsiInfty}
  \Phi^{(\PII)}(\lambda)=\left(I+\frac{\Phi_{1}(t)}{\lambda}+O\left (\frac 1 {\lambda^2}\right )\right) \exp\left(-i\left(\frac{4}{3}\lambda^3+t\lambda\right)\sigma_3\right),
 \end{equation}
 where \begin{equation}\label{eq:Psi1}
  \Phi_1(t)=-\frac{i}{2}H(t)\sigma_3+\frac{1}{2}u(t)\sigma_2, \qquad H(t)=u'(t)^2-u(t)^4-tu(t)^2,
 \end{equation}
 with $\sigma_2$ and $\sigma_3$ defined in \eqref{eq:PauliMatrices}.
  \end{itemize}
  \end{rhp}

 \begin{rem}\label{rem:PIIPro}  The solution to the above model RH problem can be constructed by using the $\psi$-functions associated with the  second Painlev\'e equation \cite{CK}. The function $u(t)$ appeared in \eqref{eq:Psi1} is  the Hastings-McLeod solution of  \eqref{eq:PII}, which is exponentially small as $t\to+\infty$; see \cite{BDJ, BI, CK, CKV, FIKY}.   The Hastings-McLeod solution  $u(t)$ is pole-free in the region  $\arg t\in [-\frac{\pi}{3}, \frac{\pi}{3}]\cup [\frac{2\pi}{3}, \frac{4\pi}{3}]$ as shown in \cite{B, HXZ}.  Consequently, the  model RH problem for $\Phi^{(\PII)}(\lambda, t)$ is solvable for $t$ in this region.
 From the large $t$ asymptotics of $  \Phi^{(\PII)}(\lambda;t)$ in \cite[Chapter 11]{FIKY},   we  see that the expansion \eqref{eq:PsiInfty} is uniform for $|\arg(t)|< \frac{\pi}{3}-\delta$ for any constant $0<\delta<\frac{\pi}{3}$. \end{rem}

 From \eqref{eq:f0}-\eqref{eq:h} and \eqref{eq:gTerm3}, there exists a sufficiently small constant $\delta>0$ such that $|\arg h(\lambda)|< \frac{\pi}{6}$ for  $|\lambda|< \delta$ and $L(s)\leq  \frac{\pi}{2} \frac{x}{\sqrt{s}}\leq M$ with $M>1$.   Therefore, according to Remark \ref{rem:PIIPro},  $h(\lambda)$ stays away from the poles of the Hastings-McLeod solution, ensuring that $\Phi^{(\PII)}(f_0(\lambda),h(\lambda))$ exists for $|\lambda|< \delta$. Moreover, it follows from \eqref{eq:f0}, \eqref{eq:P02} and \eqref{eq:PhiJumps} that $ P^{(0)}(\lambda)$ satisfies the same jump condition as $B(\lambda)$ in \eqref{eq:Bjump1} with $\lambda_0=0$ therein for $\lambda\in\Sigma_{B} \cap U(0,\delta)$. 

\subsubsection{Small-norm  RH problem}
 We define   
\begin{equation}\label{eq:R2}
R(\lambda) =\left\{
                               \begin{array}{cc}
                              B(\lambda)P^{(0)}(\lambda)^{-1},& |\lambda|<\delta,\\
                           B(\lambda), & |\lambda|>\delta.\\
                                 \end{array}
                             \right.
\end{equation}
Then, $R(\lambda)$ satisfies the RH problem below. 
  \begin{rhp}\label{RH:double}
The function  $R(\lambda)$  satisfies the following properties.
\begin{itemize}
\item[\rm (1)]  $R(\lambda)$ is analytic in $\mathbb{C}\setminus\Sigma_{R}$, with $\Sigma_{R}=\partial U(0, \delta)\cup
 \left(\Sigma_B\setminus U(0, \delta)\right)$.

 \item[\rm (2)]  $R(\lambda)$  satisfies the jump condition
\begin{equation}\label{eq:Rjump3}
R_{+}(\lambda)=R_{-}(\lambda)J_{R}(\lambda), \quad J _{R}(\lambda)= \left\{        \begin{array}{cc}

                            P^{(0)}(\lambda),&   \lambda\in\partial U(0, \delta),\\

                           J_B(\lambda) &   \lambda\in \Sigma_B\setminus U(0, \delta). 
                                 \end{array}
                             \right.
\end{equation}
 \item[\rm (3)]   As $\lambda\to\infty$, we have
       \begin{equation}\label{eq:RInfinity3} R(\lambda)=I+O\left(\frac{1}{\lambda}\right).\end{equation}                
  \end{itemize}
  \end{rhp}

From the estimates \eqref{eq:gdoubleEst1} and \eqref{eq:gdoubleEst2}, we see that  the jump $J_{R}(\lambda)=J_{B}(\lambda)$ tends to the identity matrix exponentially fast for $|\lambda|>\delta$. 
Additionally, from   \eqref{eq:f0}, \eqref{eq:phase3} and  \eqref{eq:PsiInfty}, we find that
 \begin{equation}\label{eq:R3Jump}
J_{R}(\lambda) =I+\frac{\sigma_1\Phi_1(h(\lambda;x,s))\sigma_1}{f_0 (\lambda)}+O(s^{-2/3}),
\end{equation}
where the error term is uniform for $|\lambda|=\delta$ and $L(s)\leq  \frac{\pi}{2} \frac{x}{\sqrt{s}}\leq M$  with any constant $M>1$ and $L(s)$ defined in \eqref{eq:L}. It follows from \eqref{eq:PIIAsy}, \eqref{eq:f0} and \eqref{eq:Psi1} that the second term in \eqref{eq:R3Jump} is of order $s^{-1/3}$.  Therefore, $R(\lambda)$ satisfies a small-norm RH problem for large $s$. Consequently,  from \eqref{eq:f0}, \eqref{eq:coeffc1}, \eqref{eq:tau0} and \eqref{eq:R3Jump}, we have  for $|\lambda|>\delta$ that
\begin{equation}\label{eq:R2}
R(\lambda)=I +\frac{\sigma_1\Phi_1(\tau(x,s))\sigma_1}{ (2\pi)^{-1/3}s^{1/3}\lambda}+O\left(\frac{1}{ s^{2/3}\lambda}\right),
\end{equation}
where  $\Phi_1$ and $\tau(x,s)$ are given in \eqref{eq:Psi1} and \eqref{eq:tau0}, respectively.  The error term is uniform for $|\lambda|>\delta$ and  $L(s)\leq  \frac{\pi}{2} \frac{x}{\sqrt{s}}\leq M$ with any constant $M>1$ and $L(s)$ given in \eqref{eq:L}.

\subsection{Case: $s^{\frac{2}{3}}\left(\frac{\pi}{2}\frac{x}{\sqrt{s}}-L(s)\right)\in [-M, 0]$ with any constant $M> 0$}
In this section, we consider the  large $s$ asymptotics of $Y(\lambda)$ for $s^{\frac{2}{3}}\left(\frac{\pi}{2}\frac{x}{\sqrt{s}}-L(s)\right)\in [-M, 0]$ with any constant $M> 0$ and $L(s)$ given in \eqref{eq:L}. The first few transformations $Y\to T\to A\to B$ are  the same  as those given in \eqref{eq:T}, \eqref{eq:A} and \eqref{eq:B}.  

From the estimates \eqref{eq:phiEst2}-\eqref{eq:phiEst3}, we see that  the jumps in \eqref{eq:Bjump1}  tend to the identity matrix exponentially fast for $\lambda$ bounded away from the origin. Therefore, the global parametrix can be defined as the identity matrix. It is seen from \eqref{eq:lambda0Est} that  the two soft edges $\lambda_0$ and $-\lambda_0$ merge together at the origin as $\frac{\pi}{2}\frac{x}{\sqrt{s}}\to 1$.   We intend to construct a local parametrix near the origin by using the $\psi$-functions associated with  the Hastings-McLeod solution of the second Painlev\'e equation in the following subsection.
\subsubsection{Local parametrix near the origin}
 Using \eqref{eq:phisign1},  we introduce
\begin{equation}\label{eq:hatf}
\hat{f}(\lambda)=\left(-i\frac{3}{4}\phi(\lambda)\right)^{\frac{2}{3}},\end{equation}
where the branch is chosen such that
\begin{equation}\label{eq:asyhatf}
\hat{f}(\lambda)\sim \hat{c}_1(x,s)\left(\lambda^2-\lambda_0^2\right), \quad \lambda\to\pm\lambda_0,
\end{equation}
with 
\begin{equation}\label{eq:hatc1}
\hat{c}_1=\hat{c}_1(x,s)=  \frac{s^{\frac{2}{3}}}{(2\pi)^{\frac{2}{3}}(1-\lambda_0^2)^{\frac{1}{3}}}+O(s^{-4/3}).
\end{equation}
We then define 
 the following conformal mapping near the origin
\begin{equation}\label{eq:rho}
\rho(\lambda)=(\hat{f}(\lambda)-\hat{f}(0))^{\frac{1}{2}}\sim  \hat{c}_1^{\frac{1}{2}}\lambda,
\end{equation}
where $\arg \left(\hat{f}(\lambda)-\hat{f}(0) \right)\in(-\pi,\pi)$.
It follows from \eqref{eq:lambda0Est} and \eqref{eq:hatf} that
\begin{equation}\label{eq:tau}
\hat{\tau}(x,s)= 2\hat{f}(0)= - \frac{2s^{\frac{2}{3}}}{(2\pi)^{\frac{2}{3}}(1-\lambda_0^2)^{\frac{1}{3}}}\lambda_0^2+O(s^{-2/3})= -4(2\pi)^{-\frac{2}{3}}s^{\frac{2}{3}}(1-\frac{\pi x}{2\sqrt{s}})+O(s^{-2/3}).
\end{equation}
For later use, we derive the estimate  by the Taylor expansion
\begin{align}\nonumber
i\frac{4}{3}\rho(\lambda)^3+i\hat{\tau}(x,s)\rho(\lambda)- \phi(\lambda)&=\frac{4i\hat{f}(0)^2}{3\rho(\lambda)}\left(\left(\frac{\hat{f}(\lambda)}{\hat{f}(0)}-1\right)^2+\frac{3}{2}\left(\frac{\hat{f}(\lambda)}{\hat{f}(0)}-1\right)\right.\\  \nonumber
&~~~~~~~~~~~~~~~~~~~~~~~~~~~~~ \left.-\frac{\hat{f}(\lambda)^2}{\hat{f}(0)^2}\left(1-\frac{\hat{f}(0)}{\hat{f}(\lambda)}\right)^{\frac{1}{2}}\right)\nonumber\\
&= -\frac{i}{2}\hat{f}(0)^2/\rho(\lambda)\left(1+O(\lambda_0^2)\right), \label{eq:phase1} 
\end{align}
uniformly for $|\lambda|=\delta$ with some constant $\delta>0$.

\begin{figure}[h]
\begin{center}
   \setlength{\unitlength}{1truemm}
   \begin{picture}(100,68)(-5,8)
       
       \put(25,40){\line(1,0){30}}

       \put(25,40){\line(-1,-1){25}}
       \put(25,40){\line(-1,1){25}}
       \put(55,40){\line(1,1){25}}
       \put(55,40){\line(1,-1){25}}

       \put(10,55){\thicklines\vector(1,-1){1}}
       \put(10,25){\thicklines\vector(1,1){1}}
       \put(70,25){\thicklines\vector(1,-1){1}}
       \put(70,55){\thicklines\vector(1,1){1}}
        \put(40,40){\thicklines\vector(1,0){1}}

       \put(-2,11){$\widehat{\Sigma}_3^{(\PII)}$}
       \put(-2,67){$\widehat{\Sigma}_2^{(\PII)}$}
       \put(80,11){$\widehat{\Sigma}_4^{(\PII)}$}
       \put(80,67){$\widehat{\Sigma}_1^{(\PII)}$}
      
       \put(25,40){\thicklines\circle*{1}}
       \put(55,40){\thicklines\circle*{1}}

       \put(25,36.3){$-b_0$}
       \put(54,36.3){$b_0$}
\end{picture}
   \caption{Jump contours for the RH problem for $\widehat{\Phi}^{(\PII)}$.}
   \label{fig:hatPII}
\end{center}
\end{figure}

The parametrix can be constructed for $\lambda\in U(0,\delta)$
\begin{equation}\label{eq:P0}
\widehat{ P}^{(0)}(\lambda)=\sigma_1 \widehat{\Phi}^{(\PII)}(\rho(\lambda), \hat{\tau}(x,s); \rho(\lambda_0) )\sigma_1e^{-\phi(\lambda)\sigma_3},
 \end{equation}
 where $\rho$ and $\hat{\tau}$ are defined in \eqref{eq:rho} and \eqref{eq:tau}, respectively.
Here $\widehat{\Phi}^{(\PII)}(\lambda,t; b_0)$ satisfies the following RH problem.

  \begin{rhp} The function $\widehat{\Phi}^{(\PII)}(\lambda)=\widehat{\Phi}^{(\PII)}(\lambda,t; b_0)$ satisfies the properties below. 
\begin{itemize}
\item[\rm (1)]  $\widehat{\Phi}^{(\PII)}(\lambda)$ is analytic in $\mathbb{C}\setminus\{\cup_{k=1}^4\widehat{\Sigma}_{k}^{(\PII)}\cup (-b_0,b_0)\}$; see Fig. \ref{fig:hatPII}.

 \item[\rm (2)] $\widehat{\Phi}^{(\PII)}(\lambda)$ satisfies the jump condition
 \begin{equation}\label{eq:hatPhiJumps}
 \widehat{\Phi}^{(\PII)}_{+}(\lambda)= \widehat{\Phi}^{(\PII)}_{-}(\lambda)\left\{
                               \begin{array}{cc}
                                 \left(
                               \begin{array}{cc}
                                 1 &0\\
                              1& 1 \\
                                 \end{array}
                             \right)  &  \lambda\in\widehat{\Sigma}_1^{(\PII)}\cup\widehat{\Sigma}_2^{(\PII)},\\
                              \left(
                               \begin{array}{cc}
                                 1 &-1\\
                             0 & 1 \\
                                 \end{array}
                             \right)  &  \lambda\in\widehat{\Sigma}_3^{(\PII)}\cup\widehat{\Sigma}_4^{(\PII)},\\
                                   \left(
                               \begin{array}{cc}
                                 0 &-1\\
                              1& 1 \\
                                 \end{array}
                             \right)  &  \lambda\in(-b_0,b_0).
                            
                                 \end{array}
                             \right.
\end{equation}

  \item[\rm (3)]   The large $\lambda$ asymptotic of $\widehat{\Phi}^{(\PII)}(\lambda)$ is given by \eqref{eq:PsiInfty}. 
  \end{itemize}
  \end{rhp}
  
The above RH problem is equivalent to the one for $ \Phi^{(\PII)}(\lambda)$ after some elementary transformation by deforming the original jump contours in \eqref{eq:PhiJumps} to those  in \eqref{eq:hatPhiJumps}; see Figs. \ref{fig:PII} and 
\ref{fig:hatPII}.  Since $\hat{\tau}(x,s)$ is real and the Hastings-McLeod solution is pole free on the real axis, 
the parametrix $ \widehat{P}^{(0)}(\lambda)$ in \eqref{eq:P0} exists for $|\lambda|< \delta$.  Moreover, it follows from \eqref{eq:hatf}, \eqref{eq:P0} and \eqref{eq:hatPhiJumps} that $ \widehat{P}^{(0)}(\lambda)$ satisfies the same jump condition as $B(\lambda)$ in \eqref{eq:Bjump1}  for $\lambda\in\Sigma_{B} \cap U(0,\delta)$.

 \subsubsection{Small-norm  RH problem}
 We define 
\begin{equation}\label{eq:Rdouble1}
R(\lambda) =\left\{
                               \begin{array}{cc}
                              B(\lambda)\widehat{P}^{(0)}(\lambda)^{-1},& |\lambda|<\delta,\\
                           B(\lambda),& |\lambda|>\delta,\\
                                 \end{array}
                             \right.
\end{equation}
where  $\widehat{P}^{(0)}(\lambda)$ is defined in  \eqref{eq:P0}.  
Then, $R(\lambda)$ satisfies RH problem \ref{RH:double}. It follows from the estimates \eqref{eq:phiEst2}-\eqref{eq:phiEst3} that  the jumps in \eqref{eq:Bjump1} tend to the identity matrix exponentially fast for $|\lambda|>\delta$.
Using \eqref{eq:phase1}, \eqref{eq:P0}, \eqref{eq:PsiInfty} and \eqref{eq:Rdouble1}, we obtain the asymptotic expansion as   $s\to+\infty$
\begin{equation}\label{eq:R2Jump}
J_{R}(\lambda) =I+J_1(\lambda)+O(s^{-2/3}),
\end{equation}
 where 
\begin{equation}\label{eq:J1}
J_{1}(\lambda) =\frac{\sigma_1\Phi_1(\hat{\tau})\sigma_1-\frac{i}{2}\hat{f}(0)^2\sigma_3}{\rho (\lambda)}, \quad |\lambda|=\delta.
\end{equation}
Here, the error term is uniform for   $|\lambda|=\delta $ and $s^{\frac{2}{3}}\left(\frac{\pi}{2}\frac{x}{\sqrt{s}}-L(s)\right)\in [-M, 0]$ with any constant $M> 0$. It follows from \eqref{eq:rho} that $J_1(\lambda)=O(s^{-1/3})$. Therefore, $R(\lambda)$ satisfies a small-norm RH problem for large $s$.
According to the theory of small-norm RH problems \cite{DZ, Deift}, we obtain from \eqref{eq:rho},  \eqref{eq:R2Jump} and \eqref{eq:J1} that 
\begin{equation}\label{eq:RExpand}
R(\lambda) =I+\frac{R_1}{\lambda}+O\left(\frac{1}{ s^{2/3}\lambda}\right),\quad |\lambda|>\delta,
\end{equation}
where
\begin{equation}\label{eq:R1}
R_1 =\left(\frac{i}{2}(H(\hat{\tau})-\hat{f}(0)^2)\sigma_3-\frac{1}{2}u(\hat{\tau})\sigma_2\right) \hat{c}_1^{-1/2},
\end{equation}
with $ \hat{c}_1$ and $\hat{\tau}$ defined in \eqref{eq:hatc1} and  \eqref{eq:tau}.
Therefore, we have 
\begin{equation}\label{eq:R11}
(R_1)_{11} =\frac{i}{2}(2\pi)^{1/3} (H(\hat{\tau})-\frac{\hat{\tau}^2}{4})s^{-\frac{1}{3}}+O(s^{-\frac{2}{3}}),
\end{equation}
where $\hat{\tau}$ is  given in \eqref{eq:tau} and the error term is uniform for    $s^{\frac{2}{3}}\left(\frac{\pi}{2}\frac{x}{\sqrt{s}}-L(s)\right)\in [-M, 0]$ with any constant $M> 0$ and $L(s)$ defined in \eqref{eq:L}.

 \section{Large $s$ asymptotics of $Y$ for $x\sqrt{s}$  bounded} \label{sec:PVasy}
 
In this section, we consider the asymptotics  of $Y(\lambda)$ as $s\to +\infty$ and $x\to 0$ in such a way that $t=\sqrt{s}x$ remains  bounded.

The first transformation $Y\to T$ is defined the same as \eqref{eq:T}. As $ s\to+\infty$,  we have $ \sigma(\sqrt{s}\lambda;s)\to\chi_{(-1,1)}(\lambda)$. 
Replacing $\sigma(\sqrt{s}\lambda;s)$ by $\chi_{(-1,1)}(\lambda)$, we arrive at the following  approximate RH problem.

\subsection{Global parametrix}

  \begin{rhp} \label{RH:PV}The function $\Phi^{(\PV)}(\lambda)=\Phi^{(\PV)}(\lambda,t)$ satisfies the properties below. 
\begin{itemize}
\item[\rm (1)]  $\Phi^{(\PV)}(\lambda)$ is analytic in $\mathbb{C}\setminus [-1,1]$.

 \item[\rm (2)] $\Phi^{(\PV)}(\lambda)$ satisfies the jump condition
  \begin{equation}\label{eq:PsiVJumpAsy}\Phi^{(\PV)}_{+}(\lambda)=\Phi^{(\PV)}_{-}(\lambda)                          
  \left(
                               \begin{array}{cc}
                                 0 & 1\\
                                 -1 & 2 \\
                                 \end{array}
                             \right),
                             \end{equation}   
                             for $\lambda\in(-1,1)$.
  \item[\rm (3)]   As $\lambda\to\infty$, we have
  \begin{equation}\label{eq:PsiVInfinity}\Phi^{(\PV)}(\lambda)=\left(I+\frac{\Phi_1^{(\PV)}(t)}{\lambda}+O\left(\frac{1}{\lambda^2}\right)\right)e^{it\lambda\sigma_3}.\end{equation}
    \item[\rm (4)]   As $\lambda\to\pm1$, we have
  
  \begin{equation}\label{eq:Psiv1}\Phi^{(\PV)}(\lambda)=E^{(\pm 1)}(\lambda)         
    \left(
                               \begin{array}{cc}
                                 1 & \frac{1}{2\pi i}\log\left(\frac{\lambda- 1}{\lambda+1}\right)\\
                                 0 & 1 \\
                                 \end{array}
                             \right)              
  \left(
                               \begin{array}{cc}
                                 1 & 0\\
                                 -1 & 1 \\
                                 \end{array}
                             \right),
                             \end{equation}   
where $E^{(\pm1)}(\lambda)$ is analytic near $\lambda=\pm 1 $ and the logarithm takes the principal branch.
  \end{itemize}
  \end{rhp}
  
  \begin{rem}\label{rem:PVPro} 
  The   model RH problem  above corresponds to  the classical sine kernel determinant; see  RH problem \ref{RHY}, where $\sigma(\lambda;s)$ is replaced by $\chi_{(-1,1)}(\lambda)$; also refer to \cite[Chapter 9]{A}. 
  The solution to this  RH problem can be constructed  using the $\psi$-functions  associated with the fifth Painlev\'e equation. Let $v(t)=-2it(\Phi_1^{(\PV)})_{11}(t)$ with $\Phi_1^{(\PV)}(t)$ given in \eqref{eq:PsiVInfinity}.  Then, the function $v(t)$ is a smooth solution of  the $\sigma$-form of the fifth Painlev\'e equation \eqref{eq:sigmaPV} for $t\in(0,+\infty) $,  satisfying the boundary conditions given in \eqref{eq:PVasy}; see \cite[Chapter 9]{A}. 
\end{rem}

 \subsection{Small-norm  RH problem}
We define 
\begin{equation}\label{eq:R4}
R(\lambda)=T(\lambda)\left\{  
\begin{array}{c}
\Phi^{(\PV)}(\lambda,\sqrt{s}x)^{-1}, \quad  |\lambda\pm1|>\delta, \\
\left(E^{(\mp 1)}(\lambda)         
    \left(
                               \begin{array}{cc}
                                 1 & \frac{1}{2\pi i}\int_{\mathbb{R}} \frac{\sigma(\sqrt{s}\zeta;s)}{\zeta-\lambda}d\zeta\\
                                 0 & 1 \\
                                 \end{array}
                             \right)              
  \left(
                               \begin{array}{cc}
                                 1 & 0\\
                                 -1 & 1 \\
                                 \end{array}
                             \right)\right)^{-1}, \quad  |\lambda\pm1|<\delta,
\end{array}
\right. 
\end{equation}
where $T(\lambda)$ is defined in  \eqref{eq:T}, $\Phi^{(\PV)}(\lambda,t)$ is the solution to RH problem \ref{RH:PV} and $E^{(\pm 1)}$ is given in \eqref{eq:Psiv1}.

From \eqref{eq:TJump1}, \eqref{eq:TInfinity} and \eqref{eq:PsiVJumpAsy}-\eqref{eq:R4}, it follows that  $R(\lambda)$ satisfies the following RH problem. 
  \begin{rhp}\label{RH:doublePV}
The function  $R(\lambda)$  satisfies the following properties.
\begin{itemize}
\item[\rm (1)]  $R(\lambda)$ is analytic in $\mathbb{C}\setminus\Sigma_{R}$, with $\Sigma_{R}=\partial U(-1, \delta)\cup \partial U(1, \delta)
 \cup(-\infty, -1-\delta)\cup(-1+\delta, 1-\delta)\cup(1+\delta,+\infty)$.

 \item[\rm (2)]  $R(\lambda)$  satisfies the jump condition
 \begin{equation}
R_{+}(\lambda)=R_{-}(\lambda)J_{R}(\lambda),\end{equation}
where
\begin{equation}\label{eq:JR4}
J _{R}(\lambda)= \left\{        \begin{array}{cc}

                          E^{(\mp 1)}(\lambda)         
    \left(
                               \begin{array}{cc}
                                 1 & -\frac{1}{2\pi i}\int_{\mathbb{R}} \frac{\widehat{\sigma}(\lambda;s)}{\zeta-\lambda}d\zeta                                 \\
                                 0 & 1 \\
                                 \end{array}
                             \right)              
  E^{(\mp 1)}(\lambda)^{-1},&   |\lambda\pm1|=\delta,\\

                          \Phi^{(\PV)}_{-}(\lambda,\sqrt{s}x) \left(
                               \begin{array}{cc}
                                 1-\widehat{\sigma}(\lambda;s) & \widehat{\sigma}(\lambda;s)\\
                                 -\widehat{\sigma}(\lambda;s) & 1+\widehat{\sigma}(\lambda;s) \\
                                 \end{array}
                             \right)\Phi^{(\PV)}_{-}(\lambda,\sqrt{s}x)^{-1}, &  \lambda\in \mathbb{R}, ~ |\lambda\pm1|>\delta, 
                                 \end{array}
                             \right.
\end{equation}
and 
 \begin{equation}\label{eq:hatsigma} 
  \widehat{\sigma}(\lambda;s)=\sigma(\sqrt{s}\lambda;s)-\chi_{(-1,1)}(\lambda).\end{equation} 
 \item[\rm (3)]   As $\lambda\to\infty$, we have
       \begin{equation}\label{eq:RInfinity3} R(\lambda)=I+O\left(\frac{1}{\lambda}\right).\end{equation}                
  \end{itemize}
  \end{rhp}

From \eqref{eq:sigma} and \eqref{eq:hatsigma}, we have 
 \begin{equation}\label{eq:hatsigma1} 
  \widehat{\sigma}(\lambda;s)=\frac{1}{1+\exp(s(\lambda^2-1))}, \quad \lambda\in(-\infty,-1)\cup(1,+\infty),\end{equation} 
  and 
   \begin{equation}\label{eq:hatsigma2} 
  \widehat{\sigma}(\lambda;s)=-\frac{1}{1+\exp(s(1-\lambda^2))}, \quad \lambda\in(-1,1).\end{equation} 
  Therefore, we have
   \begin{equation}\label{eq:hatsigma3} 
 | \widehat{\sigma}(\lambda;s)|\leq\exp(-s|\lambda^2-1|), \quad \lambda\in\mathbb{R}, \end{equation} 
and 
      \begin{equation}\label{eq:RJump}
   \frac{1}{2\pi i} \int_{\mathbb{R}} \frac{\widehat{\sigma}(\zeta;s)}{\zeta-\lambda}d\zeta=O\left(1/s^2\right),\end{equation}
uniformly for $|\lambda\pm1|=\delta$. 
From \eqref{eq:PsiVInfinity} and \eqref{eq:Psiv1},  we see that $ \Phi^{(\PV)}_{-}(\lambda,\sqrt{s}x)$ and its inverse are bounded uniformly for $\lambda\in\mathbb{R}$, $|\lambda\pm1|>\delta$ and  $\sqrt{s}x\in[0,M]$ for any constant $M>0$.
This, together with \eqref{eq:hatsigma3}, implies that there exists some positive constant $c>0$ such that 
 \begin{equation}\label{eq:JRPVest1} 
J_R(\lambda) =I+O\left(\exp(-cs|\lambda^2-1|)\right), \quad \lambda\in\mathbb{R}, \quad |\lambda\pm1|>\delta. \end{equation} 
From  \eqref{eq:RJump} and the fact that $ E^{(\pm 1)}(\lambda)$ is analytic in $\lambda$, we have
 \begin{equation}\label{eq:JRPv}
J_R(\lambda) =I+O(1/s^2),
\end{equation}
uniformly for $|\lambda\pm1|=\delta$.
Thus, $R(\lambda)$ satisfies a small-norm RH problem  for large $s$. According to the theory of small-norm RH problems \cite{DZ, Deift}, we have as $s\to+\infty$
 \begin{equation}\label{eq:RPv}
R(\lambda) =I+O\left(\frac{1}{ s^{2} \lambda}\right),
\end{equation}
where the error term is uniform for $\lambda\in \mathbb{C}\setminus\Sigma_{R}$ and $\sqrt{s}x\in[0,M]$ for any constant $M>0$.

\section{Proof of Theorems   } \label{sec:ProofTheorems}

\subsection{Proof of Theorem \ref{thm: large gap asy x}  } \label{sec:ProofTheorem1}
Tracing back the transformations  \eqref{eq:S} and \eqref{eq:1R}, we have
\begin{equation}\label{eq:Yasy2}
Y_{+}(x\lambda) =R(\lambda)N_{+}(\lambda)  \left(
                               \begin{array}{cc}
                                 1 &e^{2ix^2\lambda}\frac{\sigma(x\lambda;s )}{1- \sigma(x\lambda;s )}\\
                                 0 & 1 \\
                                 \end{array}
                             \right), 
\end{equation}
for  $\lambda\in\mathbb{R}$. 
From \eqref{def:fh}, we have
 for  $\lambda\in\mathbb{R}$
  \begin{equation}\label{eq:YSasy1}
Y_+(\lambda)\mathbf{f}(\lambda) =\sqrt{\sigma(\lambda;s)}R(\lambda/x)N_+(\lambda/x)  \left(
                               \begin{array}{c}
                                  e^{ix\lambda}\frac{1}{1- \sigma(\lambda;s)}\\
                                e^{-ix\lambda} \\
                                 \end{array}
                             \right) ,
\end{equation}
and 
\begin{equation}\label{eq:YSasy2}
\mathbf{h}(\lambda)^{t}Y_+(\lambda)^{-1} =\frac{\sqrt{\sigma(\lambda;s)}}{2\pi i} 
\left(e^{-ix\lambda} , -e^{ix\lambda}\frac{1}{1- \sigma(\lambda;s)}\right)
 N_+(\lambda/x)^{-1}R(\lambda/x)^{-1},
\end{equation}
where $N(\lambda)$ is defined in \eqref{eq:N} and  $R(\lambda)$ satisfies the estimates in \eqref{eq:RExpand1} and \eqref{eq:RExpand12}.
Substituting \eqref{eq:YSasy1} and \eqref{eq:YSasy2} into \eqref{eq:BC2} and using \eqref{eq:RExpand1}, \eqref{eq:dRExpand1},  \eqref{eq:RExpand12} and \eqref{eq:dRExpand12}, we obtain as $x\to+\infty$
\begin{equation}\label{eq:Dasy2}
\partial_s \log D(x,s)=\frac{i}{2\pi } \int_{\mathbb{R}}\left(\frac{2ix}{1-\sigma(\lambda;s)}+\frac{d}{d\lambda}\frac{1}{1-\sigma(\lambda;s)}+\frac{2\frac{d}{d\lambda}\log d_{+}(\lambda/x)}{1-\sigma(\lambda;s)}\right) \partial_s \sigma(\lambda;s)d\lambda
+\mathcal{E}(x,s),
\end{equation}
where  $d(\lambda)$ is given in \eqref{eq:d}. The error term $\mathcal{E}(x,s)=O\left(e^{s-x^2}\right)$ is uniform for $s\in(-\infty, c_1x^2 ]$  with any constant $0<c_1<1$,  and $\mathcal{E}(x,s)=O\left(e^{-cx^2}\right)$ is uniform for $s\in[c_1x^2, c_2x^2 ]$ with any constant $0<c_1<c_2<\frac{\pi^2}{4}$. Here, $c$ is  some positive constant.

We proceed to evaluate the integral in \eqref{eq:Dasy2}. Since $\sigma(\lambda;s)$ is an even  function in $\lambda$, we have 
\begin{equation}\label{eq:Int0}
\int_{\mathbb{R}}\partial_s \sigma(\lambda;s)\frac{d}{d\lambda}\frac{1}{1-\sigma(\lambda;s)}d\lambda=0.\end{equation}
From \eqref{eq:sigma}, we have 
\begin{equation}\label{eq:dsigma1}
\partial_s\sigma(\lambda;s)=\sigma(\lambda;s)^2\exp(\lambda^2-s),\end{equation}
\begin{equation}\label{eq:dsigma2}
\frac{d}{d\lambda}\sigma(\lambda;s)=-2\lambda\sigma(\lambda;s)^2\exp(\lambda^2-s),\end{equation}
and
\begin{equation}\label{eq:sigma3}
1-\sigma(\lambda;s)=\sigma(\lambda;s)\exp(\lambda^2-s).\end{equation}
Therefore, we have
\begin{equation}\label{eq:Int2}
\int_{\mathbb{R}}\frac{\partial_s \sigma(\lambda;s)}{1-\sigma(\lambda;s)}d\lambda=-\frac{d}{ds} \int_{\mathbb{R}} \log(1-\sigma(\lambda;s))  d\lambda
=\int_{\mathbb{R}}\sigma(\lambda;s)d\lambda.\end{equation}
From \eqref{eq:logh}, we have
\begin{equation}\label{eq:dh}
\frac{d}{d\lambda}\log d_+(\lambda/x)
=\frac{1}{\pi i}\int_{\mathbb{R}}\zeta\sigma(\zeta;s)\frac{d\zeta}{\zeta-\lambda_+}, \end{equation}
for $\lambda\in\mathbb{R}$.
From \eqref{eq:dsigma1}, \eqref{eq:sigma3} and 
\eqref{eq:dh}, we have
\begin{equation}\label{eq:Int3}
-\frac{1}{\pi i}\int_{\mathbb{R}}\frac{\partial_s \sigma(\lambda;s)}{1-\sigma(\lambda;s)}\frac{d}{d\lambda}\ln d(\lambda/x)d\lambda=\frac{1}{\pi ^2}\int_{\mathbb{R}}\left(\int_{\mathbb{R}}\sigma(\zeta;s)\frac{\zeta}{\zeta-\lambda_{+}} d\zeta\right)\sigma(\lambda;s)d\lambda.\end{equation}
Next, we calculate the double integral
\begin{align}\int_{\mathbb{R}}\left(\int_{\mathbb{R}}\sigma(\zeta;s)\frac{\zeta}{\zeta-\lambda_+} d\zeta\right)\sigma(\lambda;s)d\lambda=&\lim_{\epsilon\to 0^+}\int_{\mathbb{R}^2}\frac{\zeta \sigma(\zeta;s)\sigma(\lambda;s)}{\zeta-\lambda-i\epsilon} d\lambda d\zeta \nonumber\\
&=\frac{1}{2}\lim_{\epsilon\to 0^+}\int_{\mathbb{R}^2}\frac{\zeta \sigma(\zeta;s)\sigma(\lambda;s)}{\zeta-\lambda-i\epsilon} +\frac{\lambda \sigma(\lambda;s)\sigma(\zeta;s)}{\lambda-\zeta-i\epsilon} d\lambda d\zeta\nonumber\\
&=\frac{1}{2}\left(\int_{\mathbb{R}}\sigma(\lambda;s)d\lambda\right)^2. \label{eq:doubleInt}
\end{align}
In the second line of the above equation, we use the change of variables $(\lambda,\zeta)\to(\zeta,\lambda)$ in the double integral.
From \eqref{eq:Int3} and \eqref{eq:doubleInt}, we have
\begin{equation}\label{eq:Int4}
-\frac{1}{\pi i}\int_{\mathbb{R}}\frac{\partial_s \sigma(\lambda;s)}{1-\sigma(\lambda;s)}\frac{d}{d\lambda}\ln d(\lambda/x)d\lambda=\frac{1}{2\pi^2}\left(\int_{\mathbb{R}}\sigma(\lambda;s)d\lambda\right)^2.\end{equation}

 Substituting  \eqref{eq:Int0},  \eqref{eq:Int2} and \eqref{eq:Int4} into \eqref{eq:Dasy2}, we obtain as $x\to+\infty$
\begin{equation}\label{eq:Dasy3}
\partial_s \log D(x,s)= \frac{x}{\pi}\frac{d}{ds}\int_{\mathbb{R}}\ln(1-\sigma(\lambda;s))d\lambda +\frac{1}{2\pi^2}\left(\int_{\mathbb{R}}\sigma(\lambda;s)d\lambda\right)^2+\mathcal{E}(x,s).\end{equation}
Here, the error term $\mathcal{E}(x,s)=O\left(e^{s-x^2}\right)$ is uniform for $s\in(-\infty, c_1x^2 ]$  with any constant $0<c_1<1$ and $\mathcal{E}(x,s)=O\left(e^{-cx^2}\right)$ is uniform for $s\in[c_1x^2, c_2x^2 ]$ with $c_1<c_2<\frac{\pi^2}{4}$  and some constant $c>0$.
Therefore, we have as $x\to+\infty$
\begin{equation}\label{eq:Interror}
 \int_{-\infty}^s\mathcal{E}(x,\tau)d\tau= \int_{-\infty}^{c_1x^2}\mathcal{E}(x,\tau)d\tau+ \int_{c_1x^2}^{s}\mathcal{E}(x,\tau)d\tau=O\left(x^2e^{-cx^2}\right),
 \end{equation}
 uniformly for $s\in(-\infty, c_2x^2]$.
 Using integration by parts, we have
\begin{equation}\label{eq:Intbypart}
\int_{\mathbb{R}}\log (1-\sigma(\lambda;s))d\lambda=-2\int_{\mathbb{R}}\sigma(\lambda;s)\lambda^2d\lambda. \end{equation}
It follows from \eqref{eq:sigma} that $\sigma(\lambda;s)=O(e^{-(\lambda^2+|s|)})$ as $s\to -\infty$. Therefore, we have for $x\geq 0$
\begin{equation}\label{eq:Dinitial}
 \lim_{s\to-\infty}\log D(x,s)=0.
 \end{equation}
Integrating both sides of \eqref{eq:Dasy3} with respect to $s$ over the interval $(-\infty, s]$ and using \eqref{eq:Interror} and \eqref{eq:Intbypart}, we obtain  \eqref{thm:AsyD}.  This completes the proof of Theorem \ref{thm: large gap asy x}. 

\subsection{Proof of Theorem \ref{thm: DAsylargs} } \label{sec:ProofTheorem2}

To prove Theorem \ref{sec:ProofTheorem2}, we adopt a strategy similar to \cite{CC}.  In the first step, we derive the asymptotic expansions of  both the logarithmic derivatives of $D(x,s)$ with respect to  $x$ and $s$ by using  the differential identities \eqref{eq:DY} and \eqref{eq:BC2},  along with the results of the RH analysis for $Y$ performed in Section \ref{sec:asyYlarges}.
In the second step, we integrate the asymptotic approximations of the differential identities  along appropriate contours in the $(x,s)$-plane. 

\subsubsection{Asymptotics of the  logarithmic derivative of $D(x,s)$ with respect to $x$}
In this subsection, we derive the asymptotic expansion of $\partial_x \log  D(x,s)$ as $s\to+\infty$ and $x\to+\infty$. The main result is stated  in the proposition below. 
\begin{pro} \label{Pro:LDx}As $s\to+\infty$ and $x\to+\infty$, we have
\begin{equation}\label{eq:dxD}
\partial_x \log D (x,s)=-xs+\frac{\pi^2}{12}x^3+O\left(\frac{1}{\sqrt{s}\lambda_0^2}\right),
\end{equation} 
where $\lambda_0$ is given in  \eqref{eq:lambda0Est} and the error term is uniform for  
 $ \frac{x}{\sqrt{s} }\in [M_2, \frac{2 }{\pi }-M_1s^{-\frac{2}{3}}]$ with  a sufficiently large constant $M_1>0$ and any constant $0<M_2<\frac{2 }{\pi }$.
\end{pro}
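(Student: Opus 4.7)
The starting point is the differential identity \eqref{eq:DY}, which reduces the problem to extracting the $(1,1)$-entry of the coefficient $Y_1$ in the large-$\lambda$ expansion \eqref{eq:YInfinity} of $Y$. The plan is to unwind the chain of transformations $Y\to T\to A\to B\to C\to R$ performed in Section \ref{sec:asyYlarges} and read off $Y_1$ from the large-$\lambda$ asymptotics of $R$, of the global parametrix $P^{(\infty)}$, and of the $g$-function.

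More concretely, in the exterior region $\Omega_1\cup\Omega_3$ of Figure \ref{fig:A}, the transformation \eqref{eq:A} reduces to $A(\mu)=T(\mu)$ modulo an off-diagonal term that is super-exponentially small in $|\mu|$, so \eqref{eq:T} gives
\begin{equation*}
Y(\lambda)=D^{-\sigma_3} B\!\left(\tfrac{\lambda}{\sqrt{s}}\right) e^{(g(\lambda/\sqrt{s})-ix\lambda)\sigma_3} D^{\sigma_3}+O\bigl(e^{-cs\lambda^2/s}\bigr),
\end{equation*}
where $D=(1-\sigma(\sqrt{s}\lambda_0;s))^{1/2}$ and $\pm\Im\lambda>0$. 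Next I would use $C(\nu)=B(\lambda_0\nu)=R(\nu)P^{(\infty)}(\nu)$ from \eqref{eq:Rlarges} for $|\nu\pm1|>\delta$, together with the explicit expansion \eqref{eq:GpExpand} of $P^{(\infty)}$ and the bound \eqref{eq:Rasylarges} on $R-I$, to obtain for $B$ at infinity
\begin{equation*}
B(\mu)=I+\frac{\lambda_0\,(R^{(\infty)}_1-\tfrac12\sigma_2)}{\mu}+O(1/\mu^2),\qquad (R^{(\infty)}_1)_{11}=O\bigl(\tfrac{1}{s\lambda_0^3}\bigr).
\end{equation*}
Since conjugation by $D^{\sigma_3}$ preserves the diagonal, only the diagonal parts contribute to $(Y_1)_{11}$, and since $(\sigma_2)_{11}=0$ the entire contribution of $B_1$ to $(Y_1)_{11}$ is $O(1/(\sqrt{s}\lambda_0^2))$ after the rescaling $\mu=\lambda/\sqrt{s}$. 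The remaining (dominant) contribution comes from the exponential factor, which by \eqref{eq:dgExpand}--\eqref{eq:dgExpandg1} equals $g(\lambda/\sqrt{s})-ix\lambda=g_1\sqrt{s}/\lambda+O(1/\lambda^3)$.

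Combining these pieces yields
\begin{equation*}
(Y_1)_{11}=g_1\sqrt{s}+O\!\left(\tfrac{1}{\sqrt{s}\lambda_0^2}\right)=-\frac{ixs}{2}\lambda_0^2-\frac{2is^{3/2}}{3\pi}(1-\lambda_0^2)^{3/2}+O\!\left(\tfrac{1}{\sqrt{s}\lambda_0^2}\right),
\end{equation*}
so that $\partial_x\log D=-2i(Y_1)_{11}=-xs\lambda_0^2-\tfrac{4s^{3/2}}{3\pi}(1-\lambda_0^2)^{3/2}+O(1/(\sqrt{s}\lambda_0^2))$. Finally I would substitute the asymptotics \eqref{eq:lambda0Est} for $\lambda_0^2$, giving $1-\lambda_0^2=\tfrac{\pi^2 x^2}{4s}+O(1/s^2)$ and $(1-\lambda_0^2)^{3/2}=\tfrac{\pi^3 x^3}{8 s^{3/2}}(1+O(1/(sx^2)))$; the two main terms combine via $\tfrac{\pi^2}{4}-\tfrac{\pi^2}{6}=\tfrac{\pi^2}{12}$ to produce exactly \eqref{eq:dxD}.

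The main obstacle is not the algebra but tracking uniformity of the error term in the full one-gap regime $x/\sqrt{s}\in[M_2,\tfrac{2}{\pi}-M_1 s^{-2/3}]$: as $\lambda_0\to 0$ near the critical curve, both the local parametrix at $\pm 1$ and the matching condition \eqref{eq:Math} degenerate, and one has to check that the factor $s\lambda_0^3$ that controls the small-norm RH problem stays bounded below by a positive constant (which is precisely why $M_1>0$ must be chosen sufficiently large, cf.\ \eqref{eq:sLambda1}). A secondary check is that the subleading $O(1/s)$ error in $g_1$ multiplied by $\sqrt{s}$ gives $O(1/\sqrt{s})$, which together with the $O(x/s)=O(1/\sqrt{s})$ from expanding $\lambda_0^2$ is dominated by $O(1/(\sqrt{s}\lambda_0^2))$, the claimed error.
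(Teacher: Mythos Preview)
Your proposal is correct and follows essentially the same approach as the paper: trace back the chain $Y\to T\to A\to B\to C\to R$, read off $(Y_1)_{11}$ from the large-$\lambda$ expansion via $g_1$ and the global parametrix (using $(\sigma_2)_{11}=0$), absorb the $R$-contribution into the $O(1/(\sqrt{s}\lambda_0^2))$ error via \eqref{eq:Rasylarges}, then substitute \eqref{eq:dgExpandg1} and \eqref{eq:lambda0Est}. The only cosmetic difference is that the paper expands $Y(\sqrt{s}\lambda_0\lambda)$ in the region $\widehat{\Omega}_2$ (so the discarded off-diagonal term is $-e^{2ix\sqrt{s}\lambda_0\lambda}$, exponentially small in the upper half plane) rather than working in $\Omega_1\cup\Omega_3$ as you do; both routes give the same $(Y_1)_{11}=\sqrt{s}\,g_1+O(1/(\sqrt{s}\lambda_0^2))$.
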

\begin{proof}
Tracing back the series of invertible transformations $Y\to T \to A\to B\to  C\to R$ given in \eqref{eq:T}, \eqref{eq:A},  \eqref{eq:B},  \eqref{eq:C} and \eqref{eq:Rlarges}, we have 
\begin{align}\label{eq:Yasylarges}
Y(\sqrt{s}\lambda_0\lambda) =&(1-\sigma(\sqrt{s}\lambda_0;s))^{-\frac{1}{2}\sigma_3}R(\lambda)
P^{(\infty)}(\lambda) e^{(g(\lambda_0\lambda)-ix\sqrt{s}\lambda_0\lambda)\sigma_3}\\   \nonumber
&~~\times (1-\sigma(\sqrt{s}\lambda_0;s))^{\frac{1}{2}\sigma_3}
 \left(
                               \begin{array}{cc}
                                 1 &  - e^{2ix\sqrt{s}\lambda_0\lambda}\\                              
                           0   &1 \\ \end{array}
                             \right),\end{align}  
where $\lambda\in\widehat{\Omega}_2$ and $|\lambda\pm1|>\delta$, $g(\lambda)$ and $P^{(\infty)}(\lambda)$ are given in \eqref{eq:g} and \eqref{eq:Gp}, respectively. The region $\widehat{\Omega}_2$ is illustracted in Fig. \ref{fig:C}.
The function $e^{2ix\sqrt{s}\lambda_0\lambda}$ is exponentially small as $\lambda\to \infty$ in the region $\widehat{\Omega}_2$.
Substituting \eqref{eq:dgExpand}, \eqref{eq:dgExpandg1}, \eqref{eq:GpExpand}, \eqref{eq:Rasylarges} into \eqref{eq:Yasylarges}, we obtain
the asymptotic expansion as $\lambda\to\infty$
\begin{align}\label{eq:YasyLargelambda}
Y(\sqrt{s}\lambda_0\lambda) =&
(1-\sigma(\sqrt{s}\lambda_0;s))^{-\frac{1}{2}\sigma_3}  \left( I+\frac{1}{\lambda}\left(-\frac{1}{2}\sigma_2+\frac{g_1}{\lambda_0}\sigma_3+O\left(\frac{1}{s\lambda_0^3 }\right)\right)+O\left(\frac{1}{\lambda^2}\right)\right) \\ \nonumber 
& ~~\times(1-\sigma(\sqrt{s}\lambda_0;s))^{\frac{1}{2}\sigma_3},\end{align} 
where $\sigma_k$, $\lambda_0$, $g_1$ are given in \eqref{eq:PauliMatrices}, \eqref{eq:lambda0Est} and \eqref{eq:dgExpandg1}, respectively.
From the differential identity \eqref{eq:DY} and \eqref{eq:YasyLargelambda}, we obtain
\begin{equation}\label{eq:aAsy}
\partial_x \log D (x,s)=-2i\sqrt{s}\left(g_1+O\left(\frac{1}{s\lambda_0^2}\right)\right),
\end{equation} 
where the error term is uniform for  
 $ \frac{x}{\sqrt{s} }\in [M_2, \frac{2 }{\pi }-M_1s^{-\frac{2}{3}}]$ with  a sufficiently large constant $M_1>0$ and any constant $0<M_2<\frac{2 }{\pi }$. 
Substituting   \eqref{eq:dgExpandg1} into \eqref{eq:aAsy} and using \eqref{eq:lambda0Est}, we obtain \eqref{eq:dxD}.   
\end{proof}
\subsubsection{Asymptotics of the  logarithmic derivative of $D(x,s)$ with respect to $s$}
This subsection is devoted to proving the following result on the asymptotic expansion of $\partial_s \log  D(x,s)$ as $s\to+\infty$ and $x\to+\infty$. 
\begin{pro}\label{Pro:LDs} As $s\to+ \infty$ and $x\to+\infty$, we have
\begin{equation}\label{eq:dsD}
\partial_s \log D (x,s)=-\frac{x^2}{2}+O(1),
\end{equation} 
where the error term is uniform for $\frac{x}{\sqrt{s}}\in[M_2, M_3]$ with any constant $0<M_2<M_3<\frac{2}{\pi}$.  
\end{pro}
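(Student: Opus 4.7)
The plan is to combine the differential identity \eqref{eq:BC2} with the large-$s$ asymptotics of $Y$ derived in Section \ref{sec:asyYlarges}. Tracing back the chain of transformations $Y\to T\to A\to B\to C\to R$ of \eqref{eq:T}, \eqref{eq:A}, \eqref{eq:B}, \eqref{eq:C} and \eqref{eq:Rlarges}, one finds, for $u=\sqrt{s}\lambda$ real with $|\lambda|>\lambda_0+\delta$,
\[
Y_+(u)=(1-\sigma_0)^{\sigma_3/2}R(\lambda/\lambda_0)P^{(\infty)}(\lambda/\lambda_0)(1-\sigma_0)^{-\sigma_3/2}e^{g_+(\lambda)\sigma_3}\begin{pmatrix}1 & \sigma(u;s)/(1-\sigma(u;s))\\ 0 & 1\end{pmatrix}e^{-ixu\sigma_3},
\]
where $\sigma_0=\sigma(\sqrt{s}\lambda_0;s)$ and $R$ is the small-norm factor from \eqref{eq:Rasylarges}--\eqref{eq:dRasylarges}. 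The weight $\partial_s\log\sigma(u;s)=1-\sigma(u;s)$ in \eqref{eq:BC2} is of order $e^{-s(1-\lambda^2)}$, exponentially small on the band $|\lambda|<\lambda_0$ (since $s(1-\lambda_0^2)\sim\pi^2 x^2/4\to\infty$ in the regime considered), so the integration is effectively over $|\lambda|>\lambda_0$ modulo $O(e^{-cx^2})$; the contribution from the parametrix discs $U(\pm\sqrt{s}\lambda_0,\delta)$, handled via the Airy parametrix \eqref{eq:Pl}, gives $O(1)$.

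Writing $\mathbf{F}=Y_+\mathbf{f}$, $\mathbf{H}=Y_+^{-t}\mathbf{h}$ and using the orthogonality $\mathbf{F}^t\mathbf{H}\equiv 0$ of integrable operators, the integrand $-R(u,u)(1-\sigma(u;s))$, where $R(u,u)=\mathbf{F}'^t\mathbf{H}$ is the diagonal of the resolvent kernel, collapses after a direct calculation to
\[
-\frac{\sigma(u;s)}{2\pi i}\Bigl[\tfrac{2}{\sqrt{s}}\phi_+'(u/\sqrt{s})+\vec A(u)^t\bigl(M^{-1}M'(u)\bigr)^t\vec B(u)\Bigr],
\]
where $M(u)=(1-\sigma_0)^{\sigma_3/2}R(u/(\sqrt{s}\lambda_0))P^{(\infty)}(u/(\sqrt{s}\lambda_0))$, and the $2$-vectors $\vec A,\vec B$ are built from $(1-\sigma_0)^{\mp\sigma_3}$ and $e^{\pm\phi_+(u/\sqrt{s})}$. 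Using \eqref{eq:Rasylarges}--\eqref{eq:dRasylarges} and the derivative of $P^{(\infty)}$ in \eqref{eq:Gp} (of size $\sqrt{s}\lambda_0/u^2$ outside the discs), the $\vec A^t(M^{-1}M')^t\vec B$-piece integrates to $O(1)$ in the regime $x/\sqrt{s}\in[M_2,M_3]$ with $M_3<2/\pi$.

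The leading term is then $-\frac{1}{\pi i}\int\sigma(\sqrt{s}\lambda;s)\phi_+'(\lambda)\,d\lambda$. From \eqref{eq:dphi0}, \eqref{eq:dphiexp1} and Lemma \ref{lem:phiEst}, $\phi_+'$ is real on $(-\lambda_0,\lambda_0)$ and purely imaginary on $|\lambda|>\lambda_0$, while $\sigma(\sqrt{s}\lambda;s)=\chi_{(-1,1)}(\lambda)$ up to a boundary layer of width $O(1/\sqrt{s})$ at $|\lambda|=1$; combined with $\phi_+(-\lambda)=-\phi_+(\lambda)$ for $|\lambda|>\lambda_0$, only $(\lambda_0,1)\cup(-1,-\lambda_0)$ contributes to the real quantity $\partial_s\log D$. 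Integration by parts using $\phi_+(\pm\lambda_0)=0$ reduces the leading contribution to $-\tfrac{2\phi_+(1)}{\pi i}+O(1)$. Evaluating $\phi_+(1)$ from \eqref{eq:g} at $\lambda=1$ via a Plemelj--Sokhotski decomposition at $\zeta=1$, the residue cancels the $-\tfrac12(V(1)-V(\lambda_0))$ piece of $\phi$ and leaves
\[
\phi_+(1)=\tfrac{i\pi x^2}{2}+\tfrac{\sqrt{1-\lambda_0^2}}{\pi i}\,\mathrm{p.v.}\!\int_{\lambda_0}^\infty\!\frac{V(\zeta)-V(\lambda_0)}{\sqrt{\zeta^2-\lambda_0^2}}\frac{\zeta\,d\zeta}{\zeta^2-1};
\]
using $V(\zeta)\sim -s(1-\zeta^2)\chi_{(\lambda_0,1)}(\zeta)$ and $V(\lambda_0)\sim -\pi^2 x^2/4$, the logarithmic divergences of the principal value from $(\lambda_0,1)$ and $(1,\infty)$ cancel exactly, leaving the finite value $\pi x\sqrt{s}/2+O(1)$. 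Thus $\phi_+(1)=i\pi x^2/4+O(1)$ and $\partial_s\log D=-x^2/2+O(1)$.

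The main obstacle is the careful bookkeeping of the several $O(1)$ sources of error: the $R$-correction from \eqref{eq:Rasylarges}, the $\vec A^t(M^{-1}M')^t\vec B$-piece, the Airy parametrix on $U(\pm\sqrt{s}\lambda_0,\delta)$, and the boundary-layer corrections near $|\lambda|=1$ where the Fermi cutoff softens. In particular, the exact cancellation of logarithmic divergences in the principal-value integral at $\zeta=1$ must be verified, since the surviving finite part $\pi x\sqrt{s}/2$ is the quantity responsible for producing the precise coefficient $-\tfrac12$ in the leading asymptotics $-x^2/2$.
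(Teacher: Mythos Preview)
Your overall strategy matches the paper's: use \eqref{eq:BC2}, trace back the RH transformations of Section~\ref{sec:asyYlarges}, and reduce $\partial_s\log D$ to $-\tfrac{2}{\pi i}\int_{\lambda_0}^{1}\phi_+'(\lambda)\,d\lambda+O(1)$. Your final evaluation of this integral by computing $\phi_+(1)$ directly via a Plemelj splitting of \eqref{eq:g} at $\zeta=1$ is a clean alternative to the paper's route through \eqref{eq:dphi1}--\eqref{eq:I23est0}, and the value $\phi_+(1)=\tfrac{i\pi x^2}{4}+O(1)$ is correct (the cancellation of the logarithmic divergences at $\zeta=1$ amounts to the vanishing of $\mathrm{p.v.}\int_0^\infty\frac{d\eta}{\eta^2-a^2}$ after the substitution $\eta=\sqrt{\zeta^2-\lambda_0^2}$, leaving the finite piece $s\sqrt{1-\lambda_0^2}=\tfrac{\pi x\sqrt{s}}{2}+O(1)$).

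There is, however, a genuine gap in your treatment of the Airy discs. The assertion that ``the contribution from the parametrix discs \ldots\ gives $O(1)$'' is false as stated. In the regime $x/\sqrt{s}\in[M_2,M_3]$ with $M_3<\tfrac{2}{\pi}$, both $\lambda_0$ and $1-\lambda_0^2$ are bounded away from~$0$, so by \eqref{eq:phisign1}--\eqref{eq:coef1} one has $\phi_+\bigl(\lambda_0(1+\delta)\bigr)\sim c_1(s)\lambda_0^3(2\delta)^{3/2}$, which is of order~$s$ for any fixed $\delta>0$. Hence the portion of $-\tfrac{2}{\pi i}\int\sigma\phi_+'\,d\lambda$ coming from the disc $(\lambda_0,\lambda_0(1+\delta))$ is of the \emph{same} order $x^2$ as the leading answer, not $O(1)$. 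Since your decomposition via $M=(1-\sigma_0)^{\sigma_3/2}RP^{(\infty)}$ is only valid outside the discs, your derivation of the ``leading term'' $-\tfrac{1}{\pi i}\int\sigma\phi_+'\,d\lambda$ only covers $|\lambda|>\lambda_0(1+\delta)$; combining this with your disc claim, integration by parts would leave an unaccounted boundary term $+\tfrac{2}{\pi i}\phi_+\bigl(\lambda_0(1+\delta)\bigr)$ of order~$s$, spoiling the result.

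What is missing is exactly the content of the paper's \eqref{eq:intAiry}--\eqref{eq:I3est}: inside the discs the integrand equals $2\pi i f'(\lambda)K_{\Ai}(f(\lambda),f(\lambda))+O(1)$, and the Airy-kernel tail $K_{\Ai}(z,z)=\tfrac{1}{\pi}|z|^{1/2}+O(|z|^{-5/2})$ as $z\to-\infty$ \emph{recovers} the $\phi_+'$-contribution on the disc. Only after this step does the disc remainder become $O(1)$ and the leading integral extend all the way down to $\lambda_0$, making your integration by parts using $\phi_+(\pm\lambda_0)=0$ legitimate.
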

\begin{proof}

We derive the asymptotics of $\partial_s \log D (x,s)$  using the differential identity \eqref{eq:BC2}. 
From  \eqref{def:fh} and \eqref{eq:A},  we have 
\begin{equation}\label{eq:trB}
\left(\mathbf{ h}^{t}Y^{-1}(Y\mathbf{f})'\right)(\sqrt{s}\lambda)=\frac{\sigma(\sqrt{s}\lambda;s)}{2\pi i \sqrt{ s}}(A^{-1}A')_{12}(\lambda),
\end{equation} 
where $\lambda\in\Omega_2$ and $'$ denotes the derivative with respect to $\lambda$. From \eqref{eq:B} and \eqref{eq:C}, we can rewrite \eqref{eq:trB} as 
\begin{equation}\label{eq:trB1}
\left(\mathbf{ h}^{t}Y_{+}^{-1}(Y_{+}\mathbf{f})'\right)(\sqrt{s}\lambda_0\lambda)=\frac{\sigma(\sqrt{s}\lambda_0\lambda;s)}{2\pi i \sqrt{ s} \lambda_0(1-\sigma(\sqrt{s}\lambda_0;s))}e^{-2g(\lambda_0\lambda)}(C^{-1}C')_{12}(\lambda),
\end{equation} 
for $\lambda\in\widehat{\Omega}_2$. 
Using \eqref{eq:BC2}, \eqref{eq:B}, \eqref{eq:dsigma1}, \eqref{eq:sigma3} and \eqref{eq:trB1}, we obtain 
\begin{equation}\label{eq:intB}
\partial_s \log D(x,s)
=-\frac{1}{2\pi i}\int_{\mathbb{R}}(\widehat{C}^{-1}\widehat{C}')_{12}(\lambda)e^{-2\phi_+(\lambda_0\lambda)}\sigma(\sqrt{s}\lambda_0\lambda;s)d\lambda.
\end{equation}
Here, $\widehat{C}(\lambda)$  represents the analytic extension of $C(\lambda)$ from $\widehat{\Omega}_2$ to the upper complex plane, which implies
\begin{equation}\label{eq:hatB}\hat{C}(\lambda)
=C_+(\lambda) \left\{
                               \begin{array}{cc}
                              I,& \lambda\in(-1,1),\\
                            \left(
                               \begin{array}{cc}
                                 1 & e^{2\phi_{+}(\lambda_0\lambda)}\\ 
                           0   &1 \\ \end{array}
                             \right),& \lambda\in(-\infty,-1)\cup(1, +\infty).
                                 \end{array}
                             \right.
\end{equation}

We proceed to  evaluate the integral in \eqref{eq:intB}. For this purpose, we split the integral  into three parts
\begin{equation}\label{eq:intBs}
\partial_s \log D(x,s)=I_1+I_2+I_3,
\end{equation}
where
\begin{equation}\label{eq:intIj}
I_j
=-\frac{1}{2\pi i}\int_{D_j}(\widehat{C}^{-1}\widehat{C}')_{12}(\lambda)e^{-2\phi_+(\lambda_0\lambda)}\sigma(\sqrt{s}\lambda_0\lambda;s)d\lambda,
\end{equation}
$j=1,2,3$ with 
\begin{equation}\label{eq:Dj}D_1=(-1+\delta,1-\delta),  
D_2=(-\infty,-1-\delta)\cup(1+\delta,+\infty),  D_3=(-1-\delta,-1+\delta)\cup(1-\delta,1+\delta).\end{equation}

 We  evaluate $I_1$ first. Under the condition of this proposition, it follows from \eqref{eq:lambda0Est} that $\lambda_0$ is bounded away from zero.
 Therefore, from  \eqref{eq:phiEst1}, we have for $\lambda\in D_1$
\begin{equation}\label{eq:phiEst}\phi_+(\lambda_0\lambda)\geq c s\lambda_0^3(1-\lambda^2)^{\frac{3}{2}}\geq cs\lambda_0^3(1-\delta^2)^{\frac{3}{2}}\geq c_1s,
\end{equation}
for some constant $c_1>0$. From \eqref{eq:Rlarges}, it follows that $C(\lambda)$ can be approximated by the global parametrix $P^{(\infty)}(\lambda)$ given in \eqref{eq:Gp} for $\lambda\in D_1$. Thus, it follows from  \eqref{eq:Gp}, \eqref{eq:Rlarges}, \eqref{eq:Rasylarges}, \eqref{eq:dRasylarges}  and \eqref{eq:phiEst} that 
\begin{equation}\label{eq:I1est}
I_1
=O(e^{-c_2s}),
\end{equation}
as $s\to+\infty$ for some constant $c_2>0$.

Next, we estimate the integral $I_2$.  It can be seen from \eqref{eq:g} and \eqref{eq: phi} that $\phi_+(\lambda)\in i\mathbb{R}$ for $\lambda\in D_2$.
Therefore, using \eqref{eq:Gp}, \eqref{eq:Rlarges}, \eqref{eq:Rasylarges}, \eqref{eq:dRasylarges} and \eqref{eq:hatB}, we have
 \begin{equation}\label{eq:integrant}
(\hat{C}^{-1}\hat{C}')_{12}(\lambda)e^{-2\phi_+(\lambda_0\lambda)}=2\lambda_0\phi_+'(\lambda_0\lambda)-
\frac{i}{\lambda^2-1}\cos\left(2i\phi_+(\lambda_0\lambda)\right)+O\left(\frac{1}{s(1+\lambda^2)}\right),\end{equation}    
where $\lambda\in(-\infty,-1-\delta)\cup(1+\delta,+\infty) $. Therefore, we have  as $s\to+\infty$ 
\begin{equation}\label{eq:intI2}
I_2= I_{2,1}+I_{2,2}+O\left(\frac{1}{s}\right), 
\end{equation} 
where 
\begin{equation}\label{eq:intI21}
I_{2,1}=\frac{1}{2\pi }\int_{D_2}\cos\left(2i\phi_+(\lambda_0\lambda)\right)\sigma(\sqrt{s}\lambda_0\lambda;s)\frac{d\lambda}{\lambda^2-1},
\end{equation}
and 
\begin{equation}\label{eq:intI22}
I_{2,2}=-\frac{\lambda_0}{\pi i }\int_{D_2}\phi_+'(\lambda_0\lambda)\sigma(\sqrt{s}\lambda_0\lambda;s)d\lambda.
\end{equation}
From \eqref{eq:sigma},  we have  
\begin{equation}\label{eq:Estint21}
|I_{2,1}|\leq\frac{1}{2\pi }\int_{D_2}\frac{d\lambda}{\lambda^2-1}.
\end{equation}
Noting that $\phi_+'(\lambda)=\phi_+'(-\lambda)$, we have
\begin{equation}\label{eq:int221}
I_{2,2}=-\frac{2\lambda_0}{\pi i }\int_{1+\delta}^{+\infty}\phi_+'(\lambda_0\lambda)\sigma(\sqrt{s}\lambda_0\lambda;s)d\lambda=-\frac{2}{\pi i } \int_{\lambda_0(1+\delta)}^{+\infty}\phi_+'(\lambda)\sigma(\sqrt{s}\lambda;s)d\lambda.
\end{equation}
This, together with \eqref{eq:Estint21}, implies  
\begin{equation}\label{eq:int221}
I_{2}=-\frac{2}{\pi i } \int_{\lambda_0(1+\delta)}^{+\infty}\phi_+'(\lambda)\sigma(\sqrt{s}\lambda;s)d\lambda+O(1),
\end{equation}
 as $s\to+\infty$.
 
Now we estimate the integral $I_3$. From \eqref{eq:f},  \eqref{eq:Pl},  \eqref{eq: AiryP}, \eqref{eq:Rasylarges} and \eqref{eq:dRasylarges}, we have 
for $\lambda\in(-1-\delta,-1+\delta)$
\begin{equation}\label{eq:intAiry}
(\hat{C}^{-1}\hat{C}')_{12}(\lambda)e^{-2\phi_+(\lambda_0\lambda)}
=2\pi i f'(\lambda)K_{\Ai}(f(\lambda), f(\lambda))+O(1), 
\end{equation}
where $f(\lambda)$ is defined in \eqref{eq:f}. Here $K_{\Ai}$ is the Airy kernel defined in \eqref{eq:KAi} and
\begin{equation}\label{eq:Airykernel}
K_{\Ai}(z,z)=\Ai'(z)^2-z\Ai(z)^2, \quad z\in\mathbb{R}.
\end{equation}
 Using the asymptotic behaviors of $\Ai(z)$ and $\Ai'(z)$ \cite[Equations (9.7.5), (9.7.6), (9.7.9) and (9.7.10)]{O}, we have 
\begin{equation}\label{eq:AirykernelAsy1}
K_{\Ai}(z,z)=O\left(z^{1/2}e^{-\frac{4}{3}z^{3/2}}\right),  \quad z\to+\infty, 
\end{equation}
and 
\begin{equation}\label{eq:AirykernelAsy2}
K_{\Ai}(z,z)=\frac{1}{\pi}|z|^{\frac{1}{2}}+O(z^{-5/2}),  \quad z\to-\infty.
\end{equation}
Therefore,  from  \eqref{eq:intAiry}-\eqref{eq:AirykernelAsy2} and \eqref{eq:f}, we have  
\begin{equation}\label{eq:int5}
\frac{1}{2\pi i}\int_{-1-\delta}^{-1+\delta}(\widehat{C}^{-1}\widehat{C}')_{12}(\lambda)e^{-2\phi_+(\lambda_0\lambda)}\sigma(\sqrt{s}\lambda_0\lambda;s)d\lambda=\frac{1}{\pi i}\int^{-\lambda_0}_{-\lambda_0(1+\delta)} \phi_+'(\lambda)\sigma(\sqrt{s}\lambda;s)d\lambda
+O(1).
\end{equation}
Similarly, we have 
\begin{equation}\label{eq:int6}
\frac{1}{2\pi i}\int_{1-\delta}^{1+\delta}(\widehat{C}^{-1}\widehat{C}')_{12}(\lambda)e^{-2\phi_+(\lambda_0\lambda)}\sigma(\sqrt{s}\lambda_0\lambda;s)d\lambda=\frac{1}{\pi i}\int_{\lambda_0}^{\lambda_0(1+\delta)} \phi'_+(\lambda)\sigma(\sqrt{s}\lambda;s)d\lambda
+O(1).
\end{equation}
From \eqref{eq:intIj}, \eqref{eq:int5}, \eqref{eq:int6},  and the fact that $\phi_{+}'(\lambda)$ and $\sigma(\lambda;s)$ are even functions for $\lambda\in\mathbb{R}$, we have   as $s\to+\infty$
\begin{equation}\label{eq:I3est}
I_3= -\frac{2}{\pi i }\int_{\lambda_0}^{\lambda_0(1+\delta)} \phi'(\lambda)\sigma(\sqrt{s}\lambda;s)d\lambda
+O(1).
\end{equation}

From \eqref{eq:int221} and \eqref{eq:I3est}, we have as $s\to+\infty$
\begin{equation}\label{eq:int3est}
I_2+I_3= -\frac{2}{\pi i}\int_{\lambda_0}^{+\infty} \phi'(\lambda)\sigma(\sqrt{s}\lambda;s)d\lambda
+O(1).
\end{equation}
From \eqref{eq:dphiexp1} and  \eqref{eq:hatsigma3}, we have as $s\to+\infty$
\begin{align}\label{eq:Error}
\int_{\lambda_0}^{+\infty}\phi_+'(\lambda) \widehat{\sigma}(\lambda;s)d\lambda=O(1),
\end{align}
where $\widehat{\sigma}(\lambda;s)$ is defined in \eqref{eq:hatsigma}. 
Therefore, replacing  $\sigma(\sqrt{s}\lambda;s)$  by $\chi_{(-1,1)}(\lambda)$ in \eqref{eq:int3est}, we have 
\begin{equation}\label{eq:I23estExp}
I_2+I_3=-\frac{2}{\pi i}\int_{\lambda_0}^{1}\phi_+'(\lambda)d\lambda+O(1),
\end{equation}
as $s\to+\infty$.

We proceed to evaluate the integral in \eqref{eq:I23estExp}.  From \eqref{eq: phi}, \eqref{eq:derV} and \eqref{eq:dg1}, we obtain the expression of $\phi'_+(\lambda)$ in terms of the following Cauchy principal value integral:
\begin{equation}\label{eq:dphi1}
\phi_+'(\lambda)=\frac{ix \sqrt{s} \lambda }{  \sqrt{\lambda^2-\lambda_0^2}} \left(1-\frac{2\sqrt{s}}{\pi x} P.V. \int_{\lambda_0}^{+\infty} \sigma(\sqrt{s}\zeta;s)\sqrt{\zeta^2-\lambda_0^2}\frac{\zeta d\zeta}{\zeta^2-\lambda^2} \right),
\end{equation}
for $\lambda>\lambda_0$.
Substituting this  into \eqref{eq:I23estExp} and changing the order of integration, we have
\begin{align}\label{eq:dphi21}
\int_{\lambda_0}^{1}\phi_+'(\lambda)d\lambda=& ix\sqrt{s}\sqrt{1-\lambda_0^2}\nonumber\\
&+\frac{2is}{\pi } \int_{\lambda_0}^{+\infty}\zeta\sigma(\sqrt{s}\zeta;s)\sqrt{\zeta^2-\lambda_0^2}\left(P.V. \int_{\lambda_0}^{1} \frac{1 }{  \sqrt{\lambda^2-\lambda_0^2}} \frac{\lambda d\lambda}{\lambda^2-\zeta^2} \right)d\zeta.
\end{align}
Using \eqref{eq:intlarges}, we have  for $\zeta>\lambda_0$
 \begin{equation}\label{eq:dphi4}
P.V. \int_{\lambda_0}^{1} \frac{1 }{  \sqrt{\lambda^2-\lambda_0^2}}\frac{\lambda d\lambda}{\lambda^2-\zeta^2}=\frac{1}{2\sqrt{\zeta^2-\lambda_0^2}}
\log \left( \frac{|\sqrt{\zeta^2-\lambda_0^2}-\sqrt{1-\lambda_0^2}|}{\sqrt{\zeta^2-\lambda_0^2}+\sqrt{1-\lambda_0^2}}\right).
\end{equation}
Next, substituting \eqref{eq:dphi4} into \eqref{eq:dphi21} and replacing $\sigma(\sqrt{s}\zeta;s)$ by $\chi_{(-1,1)}(\zeta)$, we obtain   as $s\to+\infty$
\begin{equation}\label{eq:I23est0}
I_2+I_3=-\frac{2}{\pi i}\left(ix\sqrt{s}\sqrt{1-\lambda_0^2}-\frac{is}{\pi }(1-\lambda_0^2)\right)+O(1).\end{equation}
Using \eqref{eq:lambda0Est}, we have
\begin{equation}\label{eq:I23est}
I_2+I_3=-\frac{x^2}{2}+O(1).\end{equation}

Finally, substituting the  estimates  \eqref{eq:I1est} and  \eqref{eq:I23est} into \eqref{eq:intBs}, we obtain  \eqref{eq:dsD} and complete the proof of this proposition. 
%
\end{proof}
\subsubsection{Integration of the differential identities and proof of Theorem \ref{thm: DAsylargs}
}
To derive the asymptotic expansion for $\log D(x,s)$, we integrate the  differential identities  using the asymptotic approximations of $\partial_x \log D(x,s)$  and  $\partial_s \log D(x,s)$ derived in Propositions \ref{Pro:LDx} and \ref{Pro:LDs}.

We choose a  sufficiently large but fixed value $s_0>0$ and consider the regime $s'\geq s_0$ and $\frac{x'}{\sqrt{s'} }\in [M_2, \frac{2 }{\pi }-M_1s'^{-\frac{2}{3}}]$ with sufficiently large $M_1>0$ and $0<M_2<\frac{2 }{\pi }$. 
The starting point of our analysis is the following integral representation for $\log D(x',s')$ 
\begin{equation}\label{eq:DAsy} 
\log D(x',s')=\log D(M_2\sqrt{s_0},s_0) +\int_{s_0}^{s'} \frac{d}{dt} \left( \log D(M_2\sqrt{t},t)\right) dt +\int_{M_2\sqrt{s'}}^{x'}  \partial_{x}\log D(x,s') dx.
\end{equation}
Here, the first term on the right-hand side of the above equation  is independent of $x'$ and $s'$.  The next task is to 
derive the  asymptotics of the integrals in this equation.

We first derive the asymptotics of the first integral in \eqref{eq:DAsy}.  The integrand of this  integral  can be expressed in the form 
\begin{equation}\label{eq:Dt} 
 \frac{d}{dt} \left( \log D(M_2\sqrt{t},t)\right) =\frac{M_2}{2\sqrt{t}}\partial_{x} \log D(M_2\sqrt{t},t)+\partial_{s}\log D(M_2\sqrt{t},t).
\end{equation}
Substituting \eqref{eq:dxD} and \eqref{eq:dsD} into this equation, we have
\begin{equation}\label{eq:DtAsy} 
 \frac{d}{dt} \left( \log D(M_2\sqrt{t},t)\right) =-\left(M_2^2-\frac{\pi^2}{24}M_2^4\right)t+O(1),
\end{equation}
as $t\to+\infty$.
Integrating the above equation with respect to $t$ over the interval $[s_0,s']$, we obtain the asymptotic expansion of this integral  as $s'\to +\infty$
\begin{equation}\label{eq:DtInt} 
 \int_{s_0}^{s'} \frac{d}{dt} \left( \log D(M_2\sqrt{t},t)\right) dt= -\left(\frac{1}{2}M_2^2-\frac{\pi^2}{48}M_2^4\right)s'^2+O(s'),
\end{equation}
where the error term is independent of $x'$. 

Next, we derive the asymptotics of  the second integral in \eqref{eq:DAsy} using the asymptotic expansion of $\partial_x \log D(x,s)$  given in \eqref{eq:dxD}. Note that the error term in  \eqref{eq:dxD} is uniform for $x$ in the interval of integration.  
Therefore,  from  \eqref{eq:dxD} and \eqref{eq:lambda0Est}, we obtain the asymptotic expansion of this integral as $s'\to+\infty$
\begin{equation}\label{eq:DxInt} 
\int_{M_2\sqrt{s'}}^{x'}  \partial_{x}\log D(x,s') dx=  -\frac{1}{2} s'x'^2+\frac{\pi^2}{48}x'^4 +\left(\frac{1}{2}M_2^2-\frac{\pi^2}{48}M_2^4\right)s'^2+O(s'), 
\end{equation}
 where the error term is uniform for $\frac{x'}{\sqrt{s'} }\in [M_2, \frac{2 }{\pi }-M_1s'^{-\frac{2}{3}}]$.

Substituting \eqref{eq:DtInt}  and \eqref{eq:DxInt} into \eqref{eq:DAsy}, we obtain \eqref{thm:Dasylargs}. This completes the proof of Theorem \ref{thm: DAsylargs}.

\vskip 2cm
\subsection{Proof of Theorem \ref{thm:doubleScaling}}\label{sec:ProofTheorem3}

Tracing back the transformations  \eqref{eq:T}, \eqref{eq:A} and \eqref{eq:B} and using the differential identity \eqref{eq:DY}, 
we have 
\begin{equation}\label{eq:aAsy1}
\partial_{x} \log D(x,s) =-2i\sqrt{s}((R_1)_{11}+g_1),\end{equation}  
where  $g_1$ and $R_1$ are the coefficients of $1/\lambda$ in the large $\lambda$ expansions of $g(\lambda)$ and $R(\lambda)$, respectively. 
Therefore,  from \eqref{eq:gdoubleExpand}, \eqref{eq:Psi1} 
and \eqref{eq:R2}, we have
\begin{equation}\label{eq:dD1}
\partial_{x} \log D(x,s) = -\frac{2}{\pi}\int_{\mathbb{R}}\sigma(\lambda;s)\lambda^2d\lambda+(2\pi)^{1/3}s^{1/6}H(\tau(x,s))+O(s^{-2/3}),
\end{equation}
as $s\to+\infty$, where $\tau(x,s)$ is given in \eqref{eq:tau0}  and the error term is uniform for $L(s)\leq  \frac{\pi}{2} \frac{x}{\sqrt{s}}\leq M$ with any constant $M>1$,  and $L(s)$ is given in \eqref{eq:L}. 
Here, $H(t)$ is the Hamiltonian associated with the Hastings-McLeod solution of the second Painlev\'e equation.
Using the initial value \eqref{thm:AsyD}, \eqref{eq:dtau0} and the fact that $H(t)$ is exponentially small as $t\to+\infty$, we obtain the following expansion after integrating \eqref{eq:dD1} with respect to $x$ 
 \begin{equation}\label{eq:DAsy1}
\log D(x,s) =-\frac{2x}{\pi}\int_{\mathbb{R}}\sigma(\lambda;s)\lambda^2d\lambda+c_0(s)+\int_{+\infty}^{\tau(x,s)}H(\xi)d\xi+O(s^{-1/6}), \end{equation}
 where the error term is uniform for $L(s)\leq  \frac{\pi}{2} \frac{x}{\sqrt{s}}\leq M$ with any constant $M>1$.

Next, we extend \eqref{eq:DAsy1} to the regime $x=\frac{2L(s)}{\pi} s^{\frac{1}{2}}+\frac{y}{(2\pi)^{\frac{1}{3}}}s^{-\frac{1}{6}}$ for $y\in[M',0]$ with any constant $M'<0$ and $L(s)$ defined in \eqref{eq:L}. 
 From \eqref{eq:lambda0Est} and \eqref{eq:dgExpandg1}, we have as $s\to+\infty$
\begin{equation}\label{eq:g1Expand}
g_1=-\frac{2i}{3\pi}s+\frac{is}{\pi}(1-\frac{\pi}{2}\frac{x}{\sqrt{s}})^2
+O(1/s). 
\end{equation}  
Thus,  from \eqref{eq:R11}, \eqref{eq:aAsy1}
 and \eqref{eq:g1Expand},  we have 
 \begin{equation}\label{eq:dD3}
\partial_x \log D(x,s)=-\frac{4}{3\pi}s^{\frac{3}{2}}+(2\pi)^{1/3}s^{\frac{1}{6}}H(\hat{\tau}(x,s))+O(s^{-1/2}),
\end{equation}
where $\hat{\tau}(x,s)$ is  defined in \eqref{eq:tau} and the error term is uniform for $y\in [M', 0]$ with any constant $M'<0$. Integrating the above equation with respect to $x$ from $\frac{2L(s)}{\pi} s^{\frac{1}{2}}$ to $\frac{2L(s)}{\pi} s^{\frac{1}{2}}+\frac{y}{(2\pi)^{\frac{1}{3}}}s^{-\frac{1}{6}}$  and using the asymptotics  given in \eqref{eq:DAsy1}, we obtain as $s\to+\infty$
\begin{equation}\label{eq:DAsy2}
\log D(x,s)=-\frac{2x}{\pi}\int_{\mathbb{R}}\sigma(\lambda;s)\lambda^2d\lambda+c_0(s)+\int_{+\infty}^{y}H(\xi)d\xi+O(s^{-1/6}),
\end{equation} 
where the error term is uniform for $y\in [M', 0]$ with any constant $M'<0$.
From  \eqref{eq:LAsy},  \eqref{eq:DAsy1}, \eqref{eq:DAsy2} and the fact that $H'=-u^2$, 
we obtain \eqref{eq:DasyScaling}. This completes the proof of Theorem \ref{thm:doubleScaling}.

 \subsection{Proof of Theorem \ref{thm:PVasy}}\label{sec:ProofTheorem4}
 
From \eqref{eq:T}, \eqref{eq:R4} and \eqref{eq:RPv}, we have as $s\to+\infty$
\begin{equation}\label{eq:hatPsiAsy} 
Y(\sqrt{s}\lambda)=\left(I+O\left(\frac{1}{ s^{2} \lambda}\right)\right)\Phi^{(\PV)}(\lambda,\sqrt{s}x)e^{-i\sqrt{s}x\lambda\sigma_3},\end{equation}
where  the error term is uniform for $|\lambda|>1+\delta$ and $\sqrt{s}x\in[0,M]$ for any constant $M>0$. From the differential identity \eqref{eq:DY}, \eqref{eq:PsiVInfinity},  \eqref{eq:hatPsiAsy} and  Remark \ref{rem:PVPro}, we obtain as $s\to+\infty$
\begin{equation}\label{eq:dD}
\partial_x\log D(x,s)=\frac{v(\sqrt{s}x)}{x}+O(1/s^2),\end{equation}
where $v(\tau)$ is the solution of the  $\sigma$-form of the fifth Painlev\'e equation
\eqref{eq:sigmaPV} with the boundary conditions \eqref{eq:PVasy}.
It follows from \eqref{eq:PVasy} that the function $ \tau\mapsto\frac{v(\tau)}{\tau}$ is integrable on the interval $[0,\sqrt{s}x]$. 
Integrating \eqref{eq:dD} with respect to $x$ and using $D(0,s)=1$,   we obtain \eqref{eq:DPV}. The equation \eqref{eq:DPV1} follows from \eqref{eq:DPV} and the integral representation of the classical sine kernel determinant obtained in \cite{JMMS}.
This completes the proof of Theorem \ref{thm:PVasy}.

\section*{Acknowledgements} 
The work of Shuai-Xia Xu was supported in part by the National Natural Science Foundation of China under grant numbers  12431008, 12371257 and 11971492, and by  Guangdong Basic and Applied Basic Research Foundation (Grant No. 2022B1515020063). 

\section*{Data availability statement} 
No datasets were generated or analyzed during the current study.

\section*{Declarations} 
\paragraph{Conflict of interest} The author has no competing interests to declare that are relevant to the content of this
article.

\end{document}